\definecolor{darkgreen}{rgb}{.2,.8,0}
\def\blue#1{\textcolor{blue}{#1}}
\def\red#1{\textcolor{red}{#1}}
\def\indf#1{\mathbbmss 1_{ \{ #1 \} }} 
\def\ind#1{\mathbbmss 1_{ #1  }} 
\def\TI{h}
\def\HX{\widehat X}
\def\sgn{\textrm{sgn}}
\def\cC{\mathcal{C}}
\def\M{\Xi}
\newtheorem{theorem}{Theorem}[section]
\newtheorem{lemma}[theorem]{Lemma}
\newtheorem{proposition}[theorem]{Proposition}
\newtheorem{corollary}[theorem]{Corollary}
\theoremstyle{remark}
\newtheorem{remark}[theorem]{Remark}
\theoremstyle{definition}
\newtheorem{definition}[theorem]{Definition}
\newtheorem{assumption}[theorem]{Assumption}
\newcommand{\OldDelta}{\phi}
\renewcommand{\baselinestretch}{1}\normalsize
\newcommand{\commentOut}[1]{} 
\begin{document}

\title{Price manipulation in a market impact model with dark pool }

\author{Florian Kl\"{o}ck\thanks{University of Mannheim, Department of Mathematics, A5, 6, 68131 Mannheim, Germany, {\tt kloeck@uni-mannheim.de}}{\setcounter{footnote}{6}}\and Alexander Schied\thanks{University of Mannheim, Department of Mathematics, A5, 6, 68131 Mannheim, Germany, {\tt schied@uni-mannheim.de}}{\setcounter{footnote}{2}}\and Yuemeng Sun\thanks{Cornell University, ORIE, Rhodes Hall, Ithaca, NY 14850, U.S.A., {\tt  ys273@cornell.edu}\hfill\break
F.K. and A.S.  acknowledge support by Deutsche Forschungsgemeinschaft through Research Grant SCHI 500/3-1}}

\date{ $\,$}

\maketitle

\vspace{-0.5cm}

\begin{abstract}For a market impact model,  price manipulation and related notions play a role that is similar to the role of arbitrage in a derivatives pricing model.  Here, we give a systematic investigation into such regularity issues when orders can be executed both at a traditional exchange and in a dark pool.  To this end, we focus on a class of dark-pool models whose market impact at the exchange is described by an Almgren--Chriss model. Conditions for the absence of price manipulation for all Almgren--Chriss models include the absence of temporary cross-venue impact, the presence of full permanent  cross-venue impact, and  the additional penalization of    orders executed in the dark pool. When a particular  Almgren--Chriss model has been fixed, we show by  a number of examples that the regularity of the dark-pool model hinges in a subtle way  on the interplay of all  model parameters and on the liquidation time constraint. The paper can also be seen as a case study for the regularity of market impact models in general.
\end{abstract}

\noindent{\small {\sc Key Words:}  Price manipulation, transaction-triggered price manipulation, nonnegative expected liquidation costs, dark pool, market impact model, optimal trade execution, optimal liquidation}

\section{Introduction}

Recent years have seen a mushrooming of alternative trading platforms called \emph{dark pools}.  Orders placed in a dark pool are not visible to other market participants (hence the name) and thus do not influence the publicly quoted price of the asset. Thus, when dark-pool orders are executed against a matching  order, no direct price impact is generated, although there may be certain indirect effects. 
Dark pools  therefore promise a reduction of  market impact and of the resulting liquidation costs. They are hence a popular platform for the execution of large trades. 

Dark pools differ from standard limit order books in that they do not have an intrinsic price finding mechanism. Instead, the price at which orders are executed is derived from the publicly quoted prices at an exchange. Thus, by manipulating the price at the exchange through placing suitable buy or sell orders, the value of the \lq\lq dark liquidity" in the dark pool can  be altered. For this reason, dark pools have drawn significant attention by regulators; see \citet{iosco11}. We refer to \citet{mittal2008} and  for a practical overview on dark pools and some related issues of market manipulation. More information on the practical background of our analysis is given in the excellent survey article \citet{Lehalle} and the recent book \citet{LehalleBook}.

Our first main goal in this paper is to conduct a systematic mathematical analysis of  market manipulation  by means of a dark pool. To this end, we study the mathematical concept of \emph{price manipulation} as introduced by \citet{hubermanstanzl} as well as some other, related regularity concepts. \citet{hubermanstanzl} introduced the notion of price manipulation in the framework of a generic market impact model. They argue that by  repeating suitably rescaled price manipulation strategies one can generate so-called quasi-arbitrage, which can be regarded as some kind of statistical arbitrage.  In this sense, the absence of price manipulation can be viewed as a specific no-arbitrage condition for market impact models. Its existence can indicate problems with the model at hand; for instance, the optimal trade execution problem may not be well-posed, because optimal strategies may not exist.  Our main results will show in particular that, for our dark-pool model,  price manipulation can only be excluded by rather unrealistic choices of  model parameters. This happens despite the fact that we choose a rather restrictive class of strategies, which excludes the most notorious predatory strategies such as \lq\lq fishing" \citep{mittal2008}. We believe that our findings can be interpreted so as to  provide some support for the general concerns expressed over market manipulation possibilities with dark pools \citep{iosco11, mittal2008}.

Our second main goal is to provide a case study for the regularity analysis of market impact models in general. As explained above, the notion of price manipulation for a market impact model has been proposed as some analogue of the notion of arbitrage for an asset price model. But while  arbitrage and many related concepts  have been studied throughout decades in the context of asset pricing models,  the study of the regularity or irregularity of market impact models is just at its beginning; see \citet{GatheralSchiedSurvey} for a recent survey.  We therefore investigate  here several additional regularity conditions besides the absence of price manipulation. These notions include transaction-triggered price manipulation as introduced in \citet{alfonsischiedslynko2010}, and the new concept of nonnegative expected liquidation costs, which was introduced independently by \citet{rochsoner}. In particular, in Section \ref{SpecResSec} we will see that for certain choices of model parameters some of  these notions may be violated 
while others are not, and the situation can actually become relatively complex. We hope that our analysis  can thus provide some additional insight into the various existing notions of regularity and their relations. Further studies of the regularity of market impact models are given in \citet{Gatheral}, \citet{AS}, \citet{alfonsischiedslynko2010}, \citet{AlfonsiAcevedo}, and \citet{Kloeck}, to mention only a few.

We use a stochastic model for trade execution at two possible venues: a dark pool and an exchange.   It is a natural model, because it  extends the  standard  Almgren--Chriss market impact model for  exchange  prices \citep{BertsimasLo,almgrenchriss2001,Almgren2003} by a generic 
dark pool with a fairly general flow of incoming orders.  It is possible to understand this  dark pool  as the aggregation of all available dark pools, when there are several such venues in the market.  
The problem of optimally routing dark-pool orders within a collection of dark pools has been analyzed in  \citet{laruelle}. 

The optimal trade execution problem for orders placed in a dark pool and at an exchange has been considered before by \citet{kratzschoeneborn}. Their model, however, only allows for quadratic transaction costs for orders placed at the exchange. In particular   \citet{kratzschoeneborn} exclude any temporary or transient price impact components. 
This exclusion of genuine price impact components is not just made out of mathematical convenience: Our results show that any inclusion of price impact will lead to the existence of profitable price manipulation strategies unless the remaining model parameters are chosen in an extreme manner, and these price manipulation strategies will have a significant impact on the possible existence and structure of optimal execution strategies. In particular the simple recipes found by \citet{kratzschoeneborn} will no longer work in a genuine market impact model; see also Propositions 7.1 and 7.2 in  \citet{kratzschoeneborn}.\citet{KratzSchoeneborn12} have recently extended their previous results to   continuous time, but still exclude genuine price impact components from their model.

 In Section \ref{GenResSection}, our first main result characterizes completely those models from our class that are sufficiently regular for all underlying Almgren--Chriss models, either  in the sense of the absence of price manipulation or in terms of the new condition of   nonnegative expected liquidation costs.  The critical quantities will be the size of \lq\lq slippage" and the degrees of permanent and temporary cross-venue impact. In Section \ref{SpecResSec}, we then investigate  the existence of model irregularities  for special model characteristics. It will turn out that the generation of such irregularities hinges in a subtle way  on the interplay of all  model parameters and on the liquidation time constraint.  In Section \ref{section_optliq} we illustrate in a simplified setting that our regularity condition guarantees the existence of optimal trade execution strategies, and we show how such strategies can be computed.

The paper is organized as follows. In the subsequent Section \ref{ModelSection} we introduce the model and formulate our standing assumptions. In Section \ref{PriceManipulationSection} we review and discuss several notions for the regularity of a market impact model, namely the absence of standard and transaction-triggered price manipulation and the new condition of nonnegative expected liquidation costs. Our main results are stated in Section \ref{MainResultsSec} and proved in Section \ref{ProofsSection}. We conclude in Section \ref{ConclusionSection}.

\section{Model setup}\label{ModelSection}

In Section \ref{AlmgrenChrissSection} we recall the basic notions of the well-known Almgren--Chriss model \citep{BertsimasLo,almgrenchriss2001} in the form of \citet{Almgren2003} and introduce (fairly general) conditions we will impose on its parameters. In Section \ref{AC-darkpool Section} we will explain how a dark pool can be added to this model and how admissible trading strategies can be defined. In Section \ref{Costs Section} we will define the liquidation costs associated   with an admissible trading strategy. The total costs will comprise some obvious expense terms and certain additional, indirect costs.  As explained in the Introduction, 
 \citet{kratzschoeneborn,KratzSchoeneborn12} have proposed a model for trading in a dark pool and at an exchange but they force their model to be free of price manipulation strategies by excluding  all permanent or transient price impact components and only consider quadratic transaction costs. One of our main goals here is to understand the effects that arise when a genuine market impact model is used to describe the price impact of orders executed at the exchange.

\subsection{The Almgren--Chriss model}\label{AlmgrenChrissSection}

In the Almgren--Chriss market impact model, it is assumed that the number of shares in the trader's portfolio  is described by an  absolutely continuous trajectory $t\mapsto X_t$, the trading strategy. 
Given this trading trajectory, the price at which transactions occur is
\begin{equation} \label{price_exg}
	 P_t = P^0_t + \gamma (X_t - X_0) + \TI (\dot X_t).
\end{equation}
Here, $P^0_t$ is the \emph{unaffected stock price process}.  The term $\gamma (X_t-X_0)$ corresponds to the \emph{permanent price impact} that has been accumulated by all transactions until time $t$. It is usually assumed to be linear in $X_t-X_0$ with $\gamma$ denoting a positive constant, because  linearity  guarantees the absence of price manipulation; see \citet{hubermanstanzl} or \citet{Gatheral}.  See also \citet{AlmgrenHauptmanLi} for an empirical analysis and justification of this assumption. 
The term $\TI (\dot X_t)$ describes  the \emph{temporary} or \emph{instantaneous impact} of trading $\dot X_t\,dt$ shares at time $t$ and only affects this current order. It can also comprise other direct or indirect costs, which are often called \lq\lq slippage"; see Section 2.2 in \citet{Gatheral}. The most common choices in the literature  are 
 linear impact, $h(x)=\eta x$, or more generally  power-law impact of the form $
h(x)=\eta \,\sgn(x)|x|^\nu$ for two constants $\eta\ge0$ and $\nu>0$.  Here we will consider  general functions $h$ that are subject to the following definition. 

\begin{definition}\label{temporary impact def}A function $\TI: \mathbb R \rightarrow \mathbb R$ will be called an \emph{impact function} if it is continuous, strictly increasing,  satisfies $h(0)=0$, and is such that $f(x)=xh(x)$ is convex. 
\end{definition}

\begin{assumption}\label{AlmgrenChrissAssumption}
We assume that the unaffected stock price process $(P^0_t)_{t\geq 0}$  is a c\`adl\`ag martingale   on a filtered probability space $(  \Omega, {\mathcal F}, ({\mathcal F_t}), {\mathbb P})$ for which $\mathcal F_0$ is $\mathbb{P}$-trivial. Moreover, the permanent-impact parameter $\gamma$ is  strictly positive, and $h$ is an impact function.  An Almgren--Chriss model is thus defined in terms of the parameters
$(\gamma, h, P^0)$.
\end{assumption}

The condition that $P^0$ is a martingale is a standard assumption in the market impact literature. One  reason is that drift effects can be ignored due to the usually short trading horizons. In addition, we are interested here in the qualitative effects of  price impact on the stability of the model. A nonzero drift  would lead to the existence of  profitable \lq\lq round trips" that would have to be distinguished from price manipulation strategies in the sense of \citet{hubermanstanzl} (see  Definition \ref{PMDef} below).

\subsection{An Almgren--Chriss model with dark pool}\label{AC-darkpool Section}

The Almgren--Chriss model is a market impact model for exchange-traded orders. We will now extend this model by allowing the additional execution of orders in a \emph{dark pool}. A dark pool is an alternative trading venue in which unexecuted orders are invisible to all other market participants. In this dark pool, buy and sell orders are matched and executed at the current price at which the asset is traded at the exchange.

 In addition to a trading strategy   executed at the exchange, investors can place  an order of $\HX$ shares into the dark pool at time $t=0$. This order will be matched with incoming orders of the opposite side. These orders arrive at random times $0< \tau_1 < \tau_2 < \ldots$ and we denote the size of incoming matching orders by $\widetilde Y_1, \widetilde Y_2, \ldots>0$. We  consider only those orders that are a possible match.  That is, the $\widetilde Y_i$ will describe sell orders when $\widehat X>0$ is a buy order and buy orders when $\widehat X<0$ is a sell order. In addition, the investor can set a \emph{minimum quantity level}\footnote{Restricting the size of a matching order by setting a minimum quantity level $M$ is a common feature in many real-world dark pools. For instance, it helps to protect against \lq\lq fishing"\ by predatory traders; see \citet{mittal2008}.} $\M\in[0,|\widehat X|]$.  Then all incoming orders $\widetilde Y_i$ of size $\widetilde Y_i\ge \M$ will  be matched piece by piece with the order $\HX$ until it is cancelled or completely filled.  That is,
\[ Y_i := \begin{cases}
\sgn(\widehat X)\widetilde Y_i, & \textrm{ if } \widetilde Y_i\ge \M\text{ and }\sum_{j=1}^{i-1} Y_j + \widetilde Y_i\le |\HX|, \\
\HX -\sgn(\widehat X) \sum_{j=1}^{i-1}  Y_j, & \textrm{ if } \widetilde Y_i\ge \M\text{ and } \sum_{j=1}^{i-1}  Y_j \le |\HX| \textrm{ and } \sum_{j=1}^{i-1} Y_j +\widetilde Y_i> |\HX|,\\
0, & \textrm{ otherwise, } 
\end{cases}
\]
is  the part of the incoming  order that is actually executed against the remainder of $\widehat X$.
By defining the counting process associated with the arrival times $(\tau_k)$,
\begin{equation}\label{NtDefEq}
N_t:=\max\{k\in\mathbb N\,|\,\tau_k\le t\},
\end{equation}
 the amount of shares that have been executed in the dark pool until time $t$ can be conveniently denoted by
\begin{equation}\label{Zt def eq}
Z_t: = \sum_{i=1}^{N_t} Y_i.
\end{equation}
 By $(\mathcal G_t)$ we denote the right-continuous filtration generated by $(\mathcal F_t)$, $Z$, and $(\tau_i\wedge t)_{t\ge0}$ for $i=1,2,\dots$.

In the first part of the paper, we make some very mild assumptions on the laws and interdependence of the random variables $(\tau_i)$, $(\widetilde Y_i)$, and $P^0$:

\begin{assumption}\label{DarkPoolAssumption1}We assume the following  conditions:
\begin{enumerate}
\item $0<\tau_1<\infty$ $\mathbb{P}$-a.s., and $\displaystyle\lambda_0:=\inf_{0<\delta\le1}\frac{1}{\delta}\mathbb{P}[\,\tau_1\le\delta\,]>0$.
\item For all $x>0$ we have $\displaystyle \lambda_1(x):=\inf_{\delta>0}\mathbb{P}[\,\widetilde Y_1\ge x\,|\,\tau_1\le\delta\,]>0$.
\item $P^0$ is a martingale also under the filtration $(\mathcal{G}_t)$ generated by $(\mathcal F_t)$, $Z$, and $(\tau_i\wedge t)_{t\ge0}$ for $i=1,2,\dots$. 
\end{enumerate}
\end{assumption}

Condition (a) means that the intensity for the arrival of the first matching order is bounded away from zero. Condition (b) states that for all $x>0$ there is a positive probability that the first incoming matching order has at least size $x$, conditional on the event that $\{\tau_1\le\delta\}$. This latter condition can be relaxed, but we use it in this form to keep our assumptions simple.
The requirement that $P^0$ is  a $(\mathcal{G}_t)$-martingale allows $(\tau_i)$ and $(\widetilde Y_i)$ to depend on $P^0$ in an arbitrary manner but, conversely,  limits the dependence of $P^0$ on these random variables. This limitation is entirely natural, since below we will explicitly model the indirect cost impact of fills in the dark pool on order executions at the exchange  via \eqref{costs3}.  Note that all conditions in Assumption \ref{DarkPoolAssumption1} are satisfied in particular when $\tau_1$ has an exponential distribution, $(\tau_i)$, $(\widetilde Y_i)$, and $P^0$ are independent random variables, and $\widetilde Y_1$ is unbounded from above, as we will assume in the second part of the paper.

\bigskip

Now we consider an investor who must liquidate an initial asset position of $X_0 \in \mathbb R$ shares during the time interval $[0,T]$.  The problem of how to do this in an optimal fashion is known as the optimal trade execution problem;  see, e.g., \citet{GokayRochSoner}, \citet{Lehalle}, \citet{GatheralSchiedSurvey}, and the references therein.

In the extended dark pool model, the investor will first place an order of  $\HX \in \mathbb R$ shares  in the dark pool\footnote{If at time $t=0$ the dark pool contains an order $\widetilde Y_0$ of the opposite side, then the investor could fill this order immediately and then start liquidating the remaining asset position $X_0-\widetilde Y_0$, maybe by resizing the dark-pool order. Therefore we can assume  that  the dark pool does not contain a matching order at $t=0$. Moreover, restricting the placement of dark-pool orders to $t=0$ lets us exclude so-called \lq fishing\rq\ strategies; see Remark \ref{FishingRemark} below.}    and then choose a liquidation strategy of Almgren--Chriss-type for the execution of the remaining assets at the exchange. This latter strategy must be absolutely continuous in time. It will thus be described by a  process $(\xi_t)$ that parameterizes the speed by which shares are sold at the exchange. Moreover, until fully executed,  the remaining part of the order $\widehat X$ can be cancelled at a (possibly random)  time $\rho < T$. Hence, the number of shares held by the investor at time $t$ is
\begin{equation}\label{StrategyEq}
 X_t := X_0 + \int_0^t \xi_s \, ds + Z^\rho_{t-}, \end{equation} 
where $Z^\rho_{t-}$ denotes the left-hand limit of $Z^\rho_t=Z_{\rho\wedge t}$.

\begin{definition}\label{strategy chi def} Let an initial position $X_0\in\mathbb R$ and a liquidation horizon $T>0$ be given.
An \emph{admissible trading strategy} is a  quadruple $\chi:=(\HX, \M, \xi, \rho)$ where $\HX \in \mathbb R$  is the size of the dark-pool order, $\M\in[0,|\widehat X|]$ is the minimum quantity level,  the cancellation time $\rho$ is a $(\mathcal G_t)$-stopping time  such that $\rho < T$ $\mathbb P$-a.s., and  the trading strategy $\xi$ is a $(\mathcal G_t)$-predictable process whose integral, $\int_0^t\xi_s\,ds$, is $\mathbb P$-a.s. bounded uniformly in $t$ and $\omega$. In addition, the liquidation constraint
\begin{equation}\label{liquidationconstraint}
X_0 + \int_0^T \xi_t \, dt + Z_\rho =0
\end{equation} 
must be $\mathbb P$-a.s. satisfied.
The set of all admissible strategies for  given $X_0$ and $T$ is denoted by $\mathcal X(X_0,T)$. 
\end{definition}

Due to \eqref{StrategyEq} and \eqref{liquidationconstraint},  the terminal asset position of any admissible strategy is $X_T=0$, since our requirement $\rho<T$ implies that $Z^\rho_{T-}=Z_\rho$.

Suppose that the admissible trading strategy $\chi=(\HX, \M, \xi, \rho)$ is used. As in \eqref{price_exg}, the price at which assets can be traded at the exchange is defined as
\begin{equation}\label{AffectedPriceProcessExchange}
P^\chi_t = P^0_t + \gamma \int_0^t \xi_s \, ds + \TI(\xi_t).
\end{equation}
The price at which the $i^{\text{th}}$ incoming order  is executed in the dark pool will be 
\begin{equation}\label{hat P tau}
\widehat P^\chi_{\tau_i}  =  P^0_{\tau_i} + \gamma \int_0^{\tau_i} \xi_s \, ds   + \kappa h(\xi_{\tau_i}),
\end{equation}
where
$$\kappa\ge0.
$$
That is, we take as price  $P^\chi_{\tau_i}$ from \eqref{AffectedPriceProcessExchange} but allow the temporary price impact, $\kappa h(\xi_{\tau_i})$, to be different from $h(\xi_{\tau_i})$. This assumption is natural for at least two reasons. First, as  explained above and in Section 2.2 of \citet{Gatheral}, the temporary price impact is often used to describe also direct or indirect transaction costs (\lq\lq slippage") that do not arise from a change in the exchange-quoted price. When this is the case, \eqref{AffectedPriceProcessExchange} will be a virtually adjusted price that does not fully coincide with the actual exchange-quoted price, and so the price $\widehat P^\chi_{\tau_i}$ for dark-pool executions needs to be adjusted correspondingly. Second, and more importantly, the continuous-time Almgren--Chriss model is an approximation to a discrete-time reality in which the temporary price impact describes that part of the price impact of a child order which has already decayed by the time the next child order is placed \cite[Section 1.3]{almgrenchriss2001}. So, unless an execution in the dark pool happens  instantaneously after a child order is placed at the exchange, the price of the dark-pool execution will not receive the full temporary price impact generated by the preceding child order. Thus, for both reasons explained above, it is natural to allow that 
only a fraction of the temporary price impact component $h(\xi_t)$ is affecting the price at which orders in the dark pool are executed.  This argument suggests to take $\kappa\in[0,1]$. The mathematical arguments, however, work for general $\kappa\ge0$, and so we will not require $\kappa\le1$ in the sequel.

\subsection{The  costs associated with an admissible strategy}\label{Costs Section}

Let us now introduce the  \emph{costs} associated with an admissible trading strategy $\chi=(\HX,\Xi, \xi, \rho) \in\mathcal{X}(X_0,T)$. The total costs will consist of several components.  The first component is given by 
\begin{equation}\label{costs1}
\int_0^T\xi_tP^\chi_t\,dt.
\end{equation}
This term is standard within the Almgren--Chriss setting and simply describes the accumulated expenses from  buying $\xi_t\,dt$ shares at  price $P^\chi_t$ at each time $t$. Similarly, the next term describes the accumulated expenses from buying $Y_i$ shares at price $\widehat P^\chi_{\tau_i}$ whenever there is a fill in the dark pool:
\begin{equation}\label{costs2}
\sum_{i=1}^{N_{\rho}}Y_i\widehat P^\chi_{\tau_i}.
\end{equation}
Besides these expenses, there can also be certain indirect costs associated with the usage of a dark pool.  First, when our order  $\widehat X$ gets a fill in the dark pool, the counterparty's matching order   takes our liquidity from the dark pool. Without the presence of our liquidity in the dark pool, the counterparty may have executed at least part of their order at the exchange, where it would have generated---possibly favorable---permanent price impact. We therefore assume that each $Y_i$ would have generated the permanent price impact $-\alpha \gamma Y_i$ at the exchange, where $\alpha\in[0,1]$ describes the fraction of $Y_i$ that would have been executed at the exchange had our order $\widehat X$ not been placed.  This impact would have influenced all future prices at the exchange and, in turn, also at the dark pool.  This reasoning leads to the definition of the following indirect cost component:
\begin{equation}\label{costs3}
\alpha\gamma\Big(\int_0^T Z^\rho_{t}\xi_t\,dt+\sum_{i=1}^{N_{\rho\wedge T}}Z_{\tau_i-}Y_i\Big).
\end{equation}
Finally, we propose to take into account additional \lq\lq slippage"\ generated by an execution in the dark pool. Besides actual transaction costs or taxes, this slippage may comprise hidden costs that relate to  dark pool executions and that are extremely difficult to model explicitly. For instance, one can think of costs arising from the phenomena of adverse selection or \lq\lq fishing"; see \citet{mittal2008} and Remark \ref{FishingRemark}. Moreover, due to the covert nature of dark pools, data may be sparse so that there will be a high degree of model uncertainty. In applied financial engineering, a slippage cost term can thus also serve as a penalization of dark-pool orders in view of possible  model misspecification. We assume that for each execution $Y_i$ the resulting cost component is given by $\beta(Y_i)Y_i$, where $\beta$ is an impact function in the sense of Definition \ref{temporary impact def}. This leads to a cost term  the form
\begin{equation}\label{costs4}
\sum_{i=1}^{N_{\rho}}\beta(Y_i)Y_i.
\end{equation}

So to speak, the factor  $\alpha$ models the \lq\lq virtual permanent price impact" generated by a dark pool execution, whereas the impact function $\beta$ describes the corresponding \lq\lq temporary price impact" effect.
Let us summarize the assumptions and definitions we made so far:

\begin{definition}
 The \emph{dark-pool extension} of a given Almgren--Chriss model $(\gamma, h,P^0)$ is defined in terms of the new parameters
$$
(\alpha,\beta,\kappa,(\tau_i),(\widetilde Y_i)),
$$
where $\alpha\in[0,1]$, $\beta$ is an impact function, $\kappa\ge0$, and the sequences $(\tau_i)$ and $(\widetilde Y_i)$ satisfy Assumption \ref{DarkPoolAssumption1}. In the extended model, the \emph{costs} associated with an admissible strategy $\chi=(\HX, \M, \xi, \rho)$ are given by
\begin{equation}\label{cost def}
\cC_T^\chi=X_0P_0^0+\int_0^T\xi_tP^\chi_t\,dt+\alpha\gamma\int_0^T Z^\rho_{t}\xi_t\,dt+\sum_{i=1}^{N_{\rho}}Y_i\Big(\widehat P^\chi_{\tau_i}+\alpha\gamma Z_{\tau_i-}+\beta(Y_i)\Big).
\end{equation}
\end{definition}

\bigskip

In \eqref{cost def}, the costs  $\cC_T^\chi$ are defined as the sum of the initial face value, $X_0P_0^0$, and the expense  and cost terms \eqref{costs1}---\eqref{costs4}. So, for a short position $X_0<0$,  $\cC_T^\chi$ is the amount by which the actual expenses exceed the amount $-X_0P_0^0$. Similarly, for a long position $X_0>0$,  $\cC_T^\chi$ is the amount by which the
revenues of the strategy $\chi$ fall short of the face value  $X_0P_0^0$. When it is clear from the context which strategy $\chi=(\HX, \Xi,\xi, \rho)$ is used, or when considering generic strategies, we often simply write $\cC_T$ instead of $\cC_T^\chi$.

\section{Price manipulation}\label{PriceManipulationSection}

Our first main concern in this paper is to investigate the stability and regularity of the dark-pool extension in dependence on the parameters $(\gamma, h, P^0)$ and $(\alpha,\beta,g,(\tau_i),(\widetilde Y_i))$.  This question is analogous to establishing the absence of arbitrage in a derivatives pricing model, where absence of arbitrage is a necessary condition for the pricing  of  contingent claims by replication.

But there must also be a difference in the notions of  regularity of a derivatives pricing model and of a market impact model. In a derivatives pricing model, one is interested in constructing strategies that almost surely replicate a given contingent claim, and this is the reason why   one must exclude the existence of arbitrage opportunities defined in the usual almost-sure sense. In a market impact model, one is interested in constructing optimal trade execution strategies. These strategies are not defined in terms of an almost-sure criterion but as minimizers of a cost functional of a risk averse investor.  Commonly used cost functionals involve expected value as in \citet{BertsimasLo} and \citet{Gatheral}, mean-variance criteria as in \citet{almgrenchriss2001}, expected utility as in \citet{schiedschoeneborn2009}, or alternative risk criteria as in  \citet{Forsythetal} and \citet{GatheralSchied}. Therefore, also the regularity conditions to be imposed on a market impact model need to be formulated in a similar manner. To make such regularity conditions independent of particular investors preferences, it is  reasonable to formulate them in a risk-neutral manner. The first condition that was proposed in this context is the following.

\begin{definition}[\citet{hubermanstanzl}]\label{PMDef} A \emph{round trip} is an admissible trading strategy with $X_0 = 0$.
A \emph{price manipulation strategy} is a round trip that has strictly negative expected costs, $\mathbb E [\, \mathcal C_T \,]<0$.
\end{definition}

Of course, costs can be regarded as negative revenues and vice versa. Thus, the existence of price manipulation strategies allows to make a positive expected profit by exploiting one's own price impact. \citet{hubermanstanzl} argue that by rescaling and repeating  price manipulation strategies in their model, one can generate so-called quasi-arbitrage, which can be viewed as some kind of statistical arbitrage. Conversely, any quasi-arbitrage strategy yields a price manipulation strategy. In this sense, the absence of price manipulation can be regarded as a specific no-arbitrage condition for market impact models. Moreover, when the costs are a convex functional of an trade execution strategy, as it is often the case,  the existence of price manipulation precludes the existence of optimal trade execution strategies for risk-neutral investors, because one can generate arbitrarily large expected revenues by adding a multiple of a price manipulation strategy.  One can see easily that  in most cases the same argument also works for risk-averse investors provided that   risk aversion is small enough. \citet{kratzschoeneborn} mention the problem of price manipulation in a dark-pool model and give some preliminary results but no systematic analysis. 
Analyses of the absence of price manipulation in various other market impact models were given by \citet{hubermanstanzl}, \citet{Gatheral}, \citet{AS}, \citet{alfonsischiedslynko2010},  \citet{AlfonsiAcevedo}, and \citet{Kloeck}, to mention only a few; see  \citet{GatheralSchiedSurvey} for a recent overview. 

It was observed in \citet{alfonsischiedslynko2010} that the absence of price manipulation may not be sufficient to guarantee the stability of the model, because optimal trade execution strategies can still oscillate strongly between alternating buy and sell trades, a property one should exclude for various reasons. For instance, such trading strategies may be considered illegal when performed by a broker who is executing  a trade on behalf of a client. See also the discussions in \citet{alfonsischiedslynko2010} and \citet{GatheralSchiedSurvey}.  This was the reason for introducing in \citet{alfonsischiedslynko2010} the  notion of transaction-triggered price manipulation, which informally means that the expected execution costs of a sell (buy) program can be decreased by intermediate buy (sell) trades. In our specific situation, this idea can be made precise as follows. 

\begin{definition}There is \emph{transaction-triggered price manipulation} if there exists $X_0\in\mathbb R$, $T>0$, and a strategy $(\HX, \M, \xi, \rho) \in\mathcal{X}(X_0,T)$ for which either $\HX$ or some $\xi_t$ have the same sign as $X_0$ and that has strictly higher expected revenues than all strategies $(\HX', \M', \xi', \rho') \in\mathcal{X}(X_0,T)$ for which both  $\HX'$ and $\xi'_t$ have always the opposite sign of $X_0$. \end{definition}

  We will also consider the following notion of regularity, which was introduced independently in \citet{rochsoner}:

\begin{definition}The model has \emph{nonnegative expected liquidation costs} if for all $X_0\in\mathbb R$, $T>0$, and every  admissible   strategy $\chi \in\mathcal{X}(X_0,T)$,
\begin{equation} \label{illiquid-cond}
 \mathbb E[\mathcal C^\chi_T] \geq 0 .
\end{equation}
\end{definition}

Condition \eqref{illiquid-cond} states that on average it is not possible to make a profit beyond the face value of a position out of the market impact generated by one's own trades. We have the following hierarchy of regularity conditions in our model.

\begin{proposition} \label{hierarchy-lemma}
\begin{enumerate}
\item If there is no transaction-triggered price manipulation then the model has nonnegative expected liquidation costs.
\item If the model has nonnegative expected liquidation costs, then there is no price manipulation.
\end{enumerate}
\end{proposition}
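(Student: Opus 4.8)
The plan is to treat the two implications separately. Part (b) is immediate: by Definition~\ref{PMDef} a round trip is just an admissible strategy with initial position $X_0=0$, so it falls under the universal quantifier in the definition of nonnegative expected liquidation costs. Thus if $\mathbb E[\cC_T^\chi]\ge0$ for every admissible $\chi$ and every $X_0$, then in particular no round trip can have $\mathbb E[\cC_T]<0$, i.e.\ there is no price manipulation. No computation is needed.

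For part (a) the whole weight rests on a single lemma, which I would prove first: \emph{every admissible strategy that trades only in the liquidation direction}---meaning $\HX$ and every $\xi_t$ are of sign opposite to $X_0$ (zeros allowed)---\emph{has nonnegative expected costs}. Granting this, the implication follows by a dichotomy. Fix $X_0\ne0$ and $\chi\in\mathcal X(X_0,T)$. If $\chi$ trades only in the liquidation direction, the lemma gives $\mathbb E[\cC_T^\chi]\ge0$. Otherwise $\HX$ or some $\xi_t$ has the same sign as $X_0$, and the absence of transaction-triggered price manipulation forbids $\chi$ from having strictly higher expected revenues (equivalently, strictly lower expected costs) than \emph{all} liquidation-direction strategies; hence some liquidation-direction $\chi'\in\mathcal X(X_0,T)$ satisfies $\mathbb E[\cC_T^{\chi'}]\le\mathbb E[\cC_T^{\chi}]$, and the lemma applied to $\chi'$ gives $\mathbb E[\cC_T^{\chi}]\ge\mathbb E[\cC_T^{\chi'}]\ge0$. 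The degenerate round-trip case $X_0=0$, where the comparison class collapses to the trivial strategy, I would handle by embedding a hypothetical negative-cost round trip into a liquidation of a small position $X_0=\varepsilon$: the added pure part contributes only $O(\varepsilon)$ to the (nonlinear) cost, so the total expected cost stays negative, producing a mixed negative-cost strategy for $\varepsilon\ne0$ and reducing the case to the one already treated.

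The substance is therefore the lemma. Expanding $\mathbb E[\cC_T^\chi]$ from \eqref{cost def} and inserting \eqref{AffectedPriceProcessExchange}--\eqref{hat P tau}, I would group the terms as: an unaffected-price part $X_0P^0_0+\mathbb E\big[\int_0^T\xi_tP^0_t\,dt+\sum_{i\le N_\rho}Y_iP^0_{\tau_i}\big]$; an exchange permanent part $\gamma\,\mathbb E\big[\int_0^T\xi_tA_t\,dt+\sum_iY_iA_{\tau_i}\big]$ with $A_t=\int_0^t\xi_s\,ds$; an exchange temporary part $\mathbb E[\int_0^T\xi_th(\xi_t)\,dt]$; a dark-pool temporary part $\kappa\,\mathbb E[\sum_iY_ih(\xi_{\tau_i})]$; a virtual-permanent part $\alpha\gamma\,\mathbb E[\int_0^TZ^\rho_t\xi_t\,dt+\sum_iZ_{\tau_i-}Y_i]$; and a slippage part $\mathbb E[\sum_i\beta(Y_i)Y_i]$. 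For the unaffected-price part I would integrate by parts, use $X_T=0$ from \eqref{liquidationconstraint}, the boundedness of $X$ from Definition~\ref{strategy chi def}, and the $(\mathcal G_t)$-martingale property of $P^0$ in Assumption~\ref{DarkPoolAssumption1} to show that it vanishes in expectation. Each remaining part is nonnegative once the strategy trades only in the liquidation direction: $x\,h(x)\ge0$ and $x\,\beta(x)\ge0$ by Definition~\ref{temporary impact def} dispose of the temporary and slippage parts, while $\int_0^T\xi_tA_t\,dt=\tfrac12A_T^2$ together with the aligned signs of $\xi_t,Y_i,Z_t,A_t$ and $h(\xi_t)$ dispose of the permanent, cross-venue and dark-pool-temporary parts.

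The main obstacle is thus the stochastic-calculus elimination of $P^0$: concretely, justifying that both the stochastic integral $\int X_{t-}\,dP^0_t$ and the covariation $[P^0,X]$ have zero expectation under $(\mathcal G_t)$, the latter requiring that the dark-pool arrival times share no common jumps with the martingale $P^0$. Everything beyond that is sign bookkeeping resting on the convexity and monotonicity built into Definition~\ref{temporary impact def}.
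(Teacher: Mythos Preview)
Your plan coincides with the paper's proof: establish nonnegative expected cost for strategies trading purely in the liquidation direction, then extend via the no-TTPM hypothesis; part~(b) is the trivial specialisation $X_0=0$. The paper compresses your six-term sign decomposition into the single observation $P^\chi_t\le P^0_t$ and $\widehat P^\chi_{\tau_i}\le P^0_{\tau_i}$ (valid when every $\xi_s\le0$ and $Y_i\le0$), which together with the identity \eqref{lemma revenuesT first eq} yields $\cC_T\ge-\int_0^T X_t\,dP^0_t$ in one line. Your worry about the covariation is misplaced: in the integration by parts one has $[P^0,Z]_T=\sum_i\Delta P^0_{\tau_i}Y_i$, which merges with $\sum_iY_iP^0_{\tau_i-}$ to produce exactly $\sum_iY_iP^0_{\tau_i}$, so no ``no common jumps'' hypothesis is needed---the only analytic point is $\mathbb E\big[\int_0^TX_t\,dP^0_t\big]=0$, immediate from $X$ being bounded and $(\mathcal G_t)$-predictable.
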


Implication~(a) holds for every market impact model in which buy orders increase the price and sell orders decrease the price.  Implication~(b) clearly holds for every market impact model.

\begin{remark}\label{FishingRemark} A common price manipulation strategy is the so-called \lq\lq fishing" strategy in dark pools; see \citet{mittal2008}. In a fishing strategy,  agents first send  small orders  to   dark pools so as to detect dark liquidity. Once a dark-pool order is detected, the visible price at the exchange is manipulated for a short period in a direction that is unfavorable for that order. Finally, an order is sent to the dark pool so as to be executed against the dark liquidity at the manipulated price.

 Here, we are not interested in the profitability of such predatory fishing strategies but primarily in the stability and regularity of optimal trade execution algorithms in dark pool and exchange. We therefore exclude fishing strategies by allowing the placement of orders in the dark pool only at time $t=0$.
Allowing for the placement of dark-pool orders at times $t>0$ will lead to a larger class $\widetilde {\mathcal X}(X_0,T)\supset\mathcal X(X_0,T)$  of admissible strategies. Since clearly
$$\inf_{\chi \in\widetilde{\mathcal X}(X_0,T)}\mathbb E[\,\mathcal C^\chi_T\,]\le\inf_{\chi\in\mathcal X(X_0,T)}\mathbb E[\,\mathcal C^\chi_T\,],$$
 the conditions of no-price manipulation or of nonnegative expected liquidation costs will be violated as soon as they are 
violated in our present setting.  But one of the main economic consequences of our main results, as stated in the subsequent section, is that these conditions are typically  violated  for the smaller class $\mathcal X(X_0,T)$ when model parameters are realistic.  Therefore, they are also typically violated for any larger  class of strategies.
\end{remark}

\section{Results}\label{MainResultsSec}

An Almgren--Chriss model is specified by the parameters $(\gamma, h, P^0)$ satisfying Assumption~\ref{AlmgrenChrissAssumption}. Its extension  incorporating a dark pool is based on the additional parameter set $(\alpha,\beta,\kappa,(\tau_i),(\widetilde Y_i))$, which will always be assumed to satisfy Assumption \ref{DarkPoolAssumption1}.  We are interested in the conditions we need to impose on 
these parameters such that the extended market model is regular. Here, regularity refers to the absence of price manipulation and related notions as explained in the preceding section. 

\subsection{General regularity results}\label{GenResSection}

Our first result characterizes completely those parameters $ (\alpha,\beta,g,(\tau_i),(\widetilde Y_i))$ for which the dark-pool extension of \emph{every} Almgren--Chriss model is sufficiently regular for all time horizons.

\begin{theorem} \label{model-parameters}For given $(\alpha,\beta,\kappa,(\tau_i),(\widetilde Y_i))$, the following conditions are equivalent.
\begin{enumerate}
\item For any Almgren--Chriss model and all $T>0$, the dark-pool extension has nonnegative expected liquidation costs.
\item For any Almgren--Chriss model and every time horizon $T>0$,  the dark-pool extension does not admit price manipulation.

\item We have $\alpha = 1$, $\kappa =0$, and \begin{equation}\label{Delta>0 for alpha=1 Condition}
|\beta(y)|\ge\frac\gamma2|y|\qquad \text{for all $y\in\mathbb R$.}
\end{equation}
\label{thm41point-c}
\end{enumerate}
\end{theorem}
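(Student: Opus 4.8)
The plan is to establish the cycle (c) $\Rightarrow$ (a) $\Rightarrow$ (b) $\Rightarrow$ (c). The middle implication is free: applied to a fixed Almgren--Chriss model and a fixed $T$, part~(b) of Proposition~\ref{hierarchy-lemma} already says that nonnegative expected liquidation costs preclude price manipulation, so quantifying over all models and all horizons turns (a) into (b). Everything else is driven by a single reduction of the cost functional. Writing $\Xi_t=\int_0^t\xi_s\,ds$ and regrouping the permanent-impact contributions in \eqref{cost def}, I would first invoke the martingale property of $P^0$ under $(\mathcal G_t)$ from Assumption~\ref{DarkPoolAssumption1}(c): since the position $X_t=X_0+\Xi_t+Z^\rho_{t-}$ is bounded and $(\mathcal G_t)$-predictable, an integration by parts gives $\mathbb E\big[X_0P^0_0+\int_0^T\xi_tP^0_t\,dt+\sum_{i=1}^{N_\rho}Y_iP^0_{\tau_i}\big]=\mathbb E\big[-\int_0^T X_t\,dP^0_t\big]=0$, so all $P^0$-terms drop out. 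A second integration by parts between the continuous part $\Xi$ and the pure-jump part $Z^\rho$, together with the liquidation constraint \eqref{liquidationconstraint} written as $\Xi_T+Z_\rho=-X_0$, collapses the remaining $\gamma$- and $\alpha\gamma$-terms. The outcome I expect is the clean identity
$$\mathbb E[\cC_T^\chi]=\tfrac\gamma2 X_0^2+\mathbb E\Big[\int_0^T \xi_t h(\xi_t)\,dt+\sum_{i=1}^{N_\rho}\big(\beta(Y_i)-\tfrac\gamma2 Y_i\big)Y_i+\kappa\sum_{i=1}^{N_\rho}Y_i h(\xi_{\tau_i})-(1-\alpha)\gamma\sum_{1\le i<j\le N_\rho}Y_iY_j+(\alpha-1)\gamma\int_0^T Z^\rho_t \xi_t\,dt\Big].$$

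With this formula, (c) $\Rightarrow$ (a) is immediate: setting $\alpha=1$ and $\kappa=0$ annihilates the last three terms, and since $h$ is an impact function we have $xh(x)\ge 0$, while \eqref{Delta>0 for alpha=1 Condition} is exactly the statement $(\beta(y)-\tfrac\gamma2 y)y\ge 0$ for all $y$; hence $\mathbb E[\cC_T^\chi]\ge\tfrac\gamma2 X_0^2\ge 0$ for every admissible strategy and every $X_0$.

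For (b) $\Rightarrow$ (c) I would argue by contraposition, exhibiting for each way in which (c) can fail an Almgren--Chriss model, a horizon $T$, and a round trip ($X_0=0$) with $\mathbb E[\cC_T^\chi]<0$. Three elementary families of round trips isolate the three requirements. First, if \eqref{Delta>0 for alpha=1 Condition} fails at some $y_0$, I place $\HX=y_0$ with $\M=y_0$ so that the first qualifying fill has size exactly $y_0$, cancel, and liquidate the resulting position at a vanishing rate; as that rate tends to $0$ the term $\int_0^T\xi_t h(\xi_t)\,dt\to 0$, leaving a strictly negative contribution proportional to $(\beta(y_0)-\tfrac\gamma2 y_0)y_0<0$ on the positive-probability event that a fill occurs. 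Second, if $\alpha<1$, I exploit the quadratic gain $-(1-\alpha)\gamma\sum_{i<j}Y_iY_j$: place a large order $\HX$, accumulate many fills over a long horizon, and again liquidate slowly, so that the pairwise sum grows like $\tfrac12 Z_\rho^2$ and dominates the at-most-linear slippage $\sum(\beta(Y_i)-\tfrac\gamma2 Y_i)Y_i$ and the vanishing temporary costs. Third, if $\kappa>0$, I trade at a high rate $-v$ on the side opposite to $\HX$ during a short initial window: a fill arriving in that window is executed at the price carrying the improved component $\kappa h(-v)$, so the expected benefit accrues at rate $\sim\lambda_0\kappa\,|h(-v)|\,\mathbb E[Y_1]$ while the temporary cost accrues only at rate $|v|\,|h(-v)|$, which is the smaller of the two once $v$ is chosen below $\lambda_0\kappa\,\mathbb E[Y_1]$.

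The routine parts are the two integrations by parts and the sign bookkeeping in the first construction. The main obstacle is the probabilistic verification underlying the second and third constructions: one must control the atypical fill patterns---a single fill that exhausts $\HX$ in the $\alpha<1$ case, or the absence of a fill during the window in the $\kappa>0$ case---so that they do not overwhelm the favorable event, and one must quantify ``sufficiently many fills arrive'' purely through the intensity lower bounds $\lambda_0>0$ and $\lambda_1(\cdot)>0$ of Assumption~\ref{DarkPoolAssumption1}(a)--(b). This is where the choice of $\HX$, $\M$, the window length, and the rate $v$ has to be tuned, and it is the step I expect to require the most care; the permanent, pairwise, and cross terms generated by the auxiliary trades must also be shown to be of strictly higher order than the exploited gain, which is why the slow-liquidation and short-window scalings are essential.
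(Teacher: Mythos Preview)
Your reduction of the expected costs and the implication (c)$\Rightarrow$(a) are correct and essentially coincide with the paper's argument. The problems lie in the (b)$\Rightarrow$(c) direction.

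\textbf{The $\alpha<1$ construction is wrong.} In your ``place a large $\HX$, wait for many fills, liquidate slowly'' strategy one has $\xi=0$ on $[0,\rho]$, so by the liquidation constraint $\int_0^T Z^\rho_t\xi_t\,dt=Z_\rho\int_\rho^T\xi_t\,dt=-Z_\rho^2$. Hence the term $(\alpha-1)\gamma\int_0^T Z^\rho_t\xi_t\,dt=(1-\alpha)\gamma Z_\rho^2$ is \emph{positive} and, together with the slippage term, yields the total permanent/slippage contribution
\[
\sum_i\phi(Y_i)+(1-\alpha)\gamma\sum_{i<j}Y_iY_j,\qquad \phi(y)=\big(\tfrac12-\alpha\big)\gamma y^2+y\beta(y),
\]
which is nonnegative whenever $\beta$ already satisfies \eqref{Delta>0 for alpha=1 Condition} (since then $\phi(y)\ge(1-\alpha)\gamma y^2\ge0$ and all $Y_iY_j>0$). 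So this round trip cannot produce negative expected costs for such $\beta$, and your pairwise ``gain'' is illusory. A second obstruction is that Assumption~\ref{DarkPoolAssumption1} says nothing about $\tau_2,\tau_3,\dots$ or $\widetilde Y_2,\widetilde Y_3,\dots$, so ``accumulate many fills'' cannot be guaranteed from the stated hypotheses. The paper's construction is quite different: it trades at rate $-x$ at the exchange \emph{before} the (single) fill, thereby exploiting the cross term $-(1-\alpha)\gamma\,x\,\tau_1\HX$ that sits inside $(\alpha-1)\gamma\int Z^\rho_t\xi_t\,dt$, and then makes the competing temporary-impact cost $rf(-x)$ negligible by using the freedom to choose $h=\varepsilon h_1$ with $\varepsilon\downarrow0$.

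\textbf{The $\kappa>0$ construction is incomplete.} Your rate comparison balances only the $\kappa$-gain against the temporary cost $\int f(\xi)$ but ignores the slippage term $\sum_i(\beta(Y_i)-\tfrac\gamma2 Y_i)Y_i$, which is strictly positive whenever \eqref{Delta>0 for alpha=1 Condition} holds strictly and may well dominate for any fixed $h$. The paper resolves this by \emph{constructing} an impact function $h$ tailored to $\phi$ (via $f(x)=\int_0^{|x|}\sqrt{\phi'(\sqrt y)}\,dy$) so that $\varphi(\sqrt x)/|h(-x)|\to0$, i.e., the slippage becomes asymptotically negligible relative to the $\kappa h(-x)$ gain; this is where the quantifier ``for any Almgren--Chriss model'' is genuinely used.

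Finally, your first construction (the one for $\beta$) actually only yields $\beta(y)y\ge(\alpha-\tfrac12)\gamma y^2$, which is Proposition~\ref{GivenACmodelAndTProp1}; it becomes \eqref{Delta>0 for alpha=1 Condition} only after $\alpha=1$ has been established, so the order of the three steps matters.
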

\goodbreak

\begin{remark}\label{ConditionsRemark}
Let us comment on the three conditions in part \commentOut{Report 2, minor comment 5}\ref{thm41point-c} of the preceding theorem.
\begin{enumerate}
\item[(i)]
The requirement $\alpha=1$ means that an execution of a dark-pool order must generate the same  virtual permanent impact on the exchange-quoted price as a similar order that is executed at the exchange. 
In view of the discussion preceding \eqref{costs4},  virtual price impact generated by the execution of a dark-pool order can be understood in terms of  a deficiency in opposite-price impact.

\item[(ii)] The requirement $\kappa=0$ means that temporary impact from trades executed at the exchange must not affect the price at which dark-pool orders are executed. Note that our notion of  admissibility of strategies excludes short-term manipulation of the exchange-quoted price in  simultaneous response of the arrival of a matching order in the dark pool, because  $(\mathcal G_t)$-predictability basically allows $\xi_t$ to depend on $Y_{\tau_i}$ only for $t>\tau_i$. Therefore the requirement $\kappa=0$ is quite surprising.  Also, according to the discussion following \eqref{hat P tau}, one should expect that the price of a real-world dark pool should take at least some fraction of temporary price impact into account,  i.e., one should expect $\kappa>0$ for real-world dark pools.

\item[(iii)]Condition \eqref{Delta>0 for alpha=1 Condition} means that  the execution of a dark-pool order of size $Y_i$ needs to generate \lq\lq slippage" of at least $\frac\gamma2 Y_i^2$. This latter amount is just equal to the costs from permanent impact one would have incurred by executing the order at the exchange.  With this amount of slippage,   savings by executing an order not at the exchange but at a dark pool    only arise from savings on temporary impact but not on permanent impact. It seems that dark pools that are currently operative do not charge transaction costs or taxes of this magnitude.  Nevertheless, our theorem states that a penalization of size $\beta(Y_i)Y_i\ge\frac\gamma2 Y_i^2$ is needed for a robust stabilization of  the model against irregularities. 
\end{enumerate}
\end{remark}

\begin{remark} The implication (b)$\Rightarrow$\ref{thm41point-c} in Theorem \ref{model-parameters} is proved by constructing price manipulation strategies for the cases in which some of the conditions in \ref{thm41point-c} are not satisfied.  Most of these price manipulation strategies involve only a  small order $\HX$ placed in the dark pool and another small order stream $(\xi_t)$ used for manipulating  the exchange-quoted price until an execution in the dark pool occurs or a certain deadline $\rho$ passes. Afterwards, the remaining inventory is liquidated at the exchange over a time period $[\rho,T]$ that is sufficiently large for the resulting temporary price impact to be small.   Since order sizes are small, so will be the inventory remaining  at time $\rho$.  So, even if we formally send $T$ to infinity in the proofs, choosing $T$ in the range of seconds or minutes may be sufficient for similar strategies to work in practice. The price manipulation we use in our proofs may actually have some similarity with some strategies used by high-frequency arbitrageurs.  The only exception of the rule of small order sizes occurs when proving \eqref{Delta>0 for alpha=1 Condition} for large $|y|$, because then we will need to place a dark-pool order   $\HX=y$. Nevertheless, it is easily  possible to require an upper limit $\widehat y>0$ on the admissible size $|\HX|$ of dark pool orders and obtain a variant of Theorem  \ref{model-parameters} in which \eqref{Delta>0 for alpha=1 Condition} is required only for $|y|\le\widehat y$.
\end{remark}

 Theorem \ref{model-parameters} gives a complete characterization of the regularity of a dark-pool extension when the underlying Almgren--Chriss model $(\gamma, h, P^0)$ and the possible time horizon may change. In fact, we only need to alter the parameters $h$ and $T$. 
 If all parameters are fixed, the situation becomes more involved. The following two propositions give some implications that hold in the generality of Assumption \ref{DarkPoolAssumption1}. The first general proposition deals with the \lq\lq slippage parameter" $\beta(\cdot)$.

\begin{proposition}\label{GivenACmodelAndTProp1}Suppose an Almgren--Chriss model with parameters $(\gamma, h, P^0)$ and its dark-pool extension  $(\alpha,\beta,\kappa,(\tau_i),(\widetilde Y_i))$ have been fixed. When there is no price manipulation for a given time horizon $T>0$, then
\begin{equation}\label{beta inequality finite T}
\beta(y)y\ge\gamma\Big(\alpha-\frac12\Big)y^2+yh(-y/T)\qquad\text{for all $y$.}
\end{equation}
In particular, we must have that 
\begin{equation}\label{beta inequality all T}
|\beta(y)|\ge\gamma\Big(\alpha-\frac12\Big)|y|\qquad\text{for all $y$}
\end{equation}
when there is no price manipulation for all $T>0$.
\end{proposition}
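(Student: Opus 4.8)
The plan is to prove the contrapositive. Assuming there is no price manipulation for the given horizon $T$, I will show that \eqref{beta inequality finite T} must hold by demonstrating that, whenever it fails strictly at some $y\neq 0$, one can build a round trip (an admissible strategy with $X_0=0$) whose expected costs $\mathbb E[\cC_T^\chi]$ are strictly negative. The construction will be engineered so that the trader never trades at the exchange while the dark-pool order is still active; this is precisely what removes the temporary-impact fraction $\kappa$ from the cost and explains why the bound \eqref{beta inequality finite T} is $\kappa$-free.

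For a small parameter $\delta\in(0,\min\{1,T\})$ I would take the strategy $\chi=(\HX,\M,\xi,\rho)$ with $\HX=y$, minimum quantity level $\M=|y|$, and deterministic cancellation time $\rho=\delta$. Choosing $\M=|\HX|$ makes the dark-pool order fill-or-nothing, so that $Z_\rho\in\{0,y\}$: on the event
$$A:=\{\text{some matching order of size}\ge|y|\text{ arrives by time }\delta\}$$
the order fills completely at a time $\sigma\le\delta$ and $Z_\rho=y$, while on $A^c$ one has $Z_\rho=0$. I set $\xi\equiv 0$ on $[0,\delta]$ and, on $A$, liquidate the acquired position at the constant rate $\xi_t=-y/(T-\delta)$ over $(\delta,T]$, taking $\xi\equiv 0$ throughout on $A^c$. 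Admissibility is then immediate: $\rho=\delta<T$, the integral $\int_0^t\xi_s\,ds$ is bounded by $|y|$, the process $\xi$ is $(\mathcal G_t)$-predictable because $A\in\mathcal G_\delta$, and the liquidation constraint \eqref{liquidationconstraint} holds on both $A$ and $A^c$. The strategy is valid for either sign of $y$ (a dark-pool buy followed by exchange sales, or vice versa), and $y=0$ is trivial.

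Next I would evaluate $\mathbb E[\cC_T^\chi]$ from \eqref{cost def}, term by term. Because $\xi$ vanishes at the fill time $\sigma\le\delta$, the dark-pool execution price in \eqref{hat P tau} is $\widehat P^\chi_\sigma=P^0_\sigma$ with no term $\kappa h(\xi_\sigma)$, which is the step eliminating $\kappa$. The genuine price terms involving $P^0$ must cancel: grouping the $P^0$-contributions of \eqref{costs1} and \eqref{costs2} gives $\ind A\big(-\tfrac{y}{T-\delta}\int_\delta^T P^0_t\,dt+yP^0_\sigma\big)$, and using that $P^0$ is a $(\mathcal G_t)$-martingale together with optional sampling at the bounded stopping times $\sigma\le\delta$ and $\delta$ reduces these to $-y\,\mathbb E[\ind A\,P^0_\delta]$ and $+y\,\mathbb E[\ind A\,P^0_\delta]$, which cancel. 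What remains is the self-permanent impact during liquidation, $+\tfrac{\gamma}{2}y^2\ind A$, the cross term \eqref{costs3}, $-\alpha\gamma y^2\ind A$, and the slippage \eqref{costs4}, $y\beta(y)\ind A$, so that
$$\mathbb E[\cC_T^\chi]=\mathbb P[A]\Big(y\beta(y)-\gamma\big(\alpha-\tfrac12\big)y^2-yh\big(-y/(T-\delta)\big)\Big).$$
By Assumption \ref{DarkPoolAssumption1}(a),(b) one has $\mathbb P[A]\ge\lambda_0\,\delta\,\lambda_1(|y|)>0$. If \eqref{beta inequality finite T} failed strictly at $y$, then continuity of $h$ would make the bracket negative for all small enough $\delta$, giving a price manipulation strategy and a contradiction; hence \eqref{beta inequality finite T} holds. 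Finally, \eqref{beta inequality all T} follows by letting $T\to\infty$: since $h$ is continuous with $h(0)=0$ we have $yh(-y/T)\to 0$, so $y\beta(y)\ge\gamma(\alpha-\tfrac12)y^2$, and because $\beta$ is an impact function $y\beta(y)=|\beta(y)|\,|y|$, whence $|\beta(y)|\ge\gamma(\alpha-\tfrac12)|y|$.

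I expect the main obstacle to lie in the rigorous cancellation of the $P^0$-terms: one must treat the indicator $\ind A$ and the random fill time $\sigma$ simultaneously, so a plain ``a martingale has constant mean'' argument is insufficient and an optional-sampling argument for the $(\mathcal G_t)$-martingale $P^0$ (with $A$ measurable at the relevant stopping time) is needed. The only genuinely conceptual point is recognizing that withholding all exchange trading during the dark-pool phase is exactly what makes the resulting bound independent of $\kappa$, in agreement with the statement.
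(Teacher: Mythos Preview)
Your approach is essentially the paper's: the same fill-or-nothing round trip with no exchange trading before cancellation (this is exactly the single-update round trip of \eqref{rho in proof of Thm 4.1}--\eqref{xi in proof of Thm 4.1} with $x=0$ and $r=\delta$), yielding the same expected-cost formula and the same contradiction when \eqref{beta inequality finite T} fails. The only differences are organizational: the paper packages the $P^0$-cancellation once and for all into Lemma~\ref{revenuesT} (so the martingale contribution becomes $-\int_0^T X_t\,dP^0_t$ with zero mean) rather than arguing via optional sampling, and your itemized list of surviving terms omits the temporary-impact piece $\int_0^T f(\xi_t)\,dt=-y\,h(-y/(T-\delta))\,\ind A$ coming from \eqref{costs1}, although your displayed formula correctly includes it.
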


The next proposition deals with the parameters $\alpha$ and $\kappa$ under the additional assumption that there is equality in \eqref{beta inequality all T}. This equality implies in particular that $\alpha\ge1/2$ and that \lq\lq slippage" has the minimal value that is necessary to exclude price manipulation.

\begin{proposition}\label{GivenACmodelAndTProp2}Suppose that an Almgren--Chriss model with parameters $(\gamma, h, P^0)$ and its dark-pool extension  $(\alpha,\beta,\kappa,(\tau_i),(\widetilde Y_i))$ have been fixed.  Suppose moreover that there is equality in \eqref{beta inequality all T} and that there is no price manipulation for all $T>0$.
Then 
$\alpha=1$ and $\kappa =0$.
\end{proposition}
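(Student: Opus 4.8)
The plan is to exploit the rigidity created by equality in \eqref{beta inequality all T}. Since $\beta$ is an impact function, this equality forces $\beta$ to be \emph{linear}, $\beta(y)=\gamma(\alpha-\tfrac12)y$, and in particular $\tfrac12\le\alpha\le1$. First I would reduce the expected cost of an arbitrary round trip to a transparent form. Writing $\phi_t:=\int_0^t\xi_s\,ds$ and inserting \eqref{AffectedPriceProcessExchange} and \eqref{hat P tau} into \eqref{cost def}, I would split $\cC_T$ into a ``$P^0$-part'', a ``permanent/virtual/slippage-part'', and a ``temporary-part''. For the $P^0$-part, integration by parts (recalling that $\phi$ is continuous and $Z^\rho$ is a pure-jump process of bounded variation) turns $\int_0^T\xi_tP^0_t\,dt+\sum_iY_iP^0_{\tau_i}$ into $P^0_T(\phi_T+Z_\rho)$ plus stochastic integrals of the bounded predictable integrands $\phi$ and $(Z^\rho_{t-})$ against $P^0$; these integrals are true martingales by Assumption \ref{DarkPoolAssumption1}(c) and so vanish in expectation, while the boundary term vanishes because the liquidation constraint \eqref{liquidationconstraint} gives $\phi_T+Z_\rho=-X_0=0$. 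For the permanent part I would use $\int_0^T\phi\,d\phi=\tfrac12\phi_T^2$, the identity $\int Z^\rho\,d\phi+\int\phi\,dZ^\rho=\phi_TZ_\rho$, and $\sum_iZ_{\tau_i-}Y_i=\tfrac12(Z_\rho^2-\sum_iY_i^2)$; completing the square and adding the linear slippage $\gamma(\alpha-\tfrac12)\sum_iY_i^2$ collapses everything (using again $\phi_T=-Z_\rho$) into $\gamma(1-\alpha)\sum_iX_{\tau_i-}Y_i$, where $X_{\tau_i-}=\phi_{\tau_i}+Z_{\tau_i-}$. The upshot is the formula
\[
\mathbb E[\cC_T]=\mathbb E\Big[\gamma(1-\alpha)\sum_{i:\tau_i\le\rho}X_{\tau_i-}Y_i+\int_0^Tf(\xi_t)\,dt+\kappa\sum_{i:\tau_i\le\rho}Y_ih(\xi_{\tau_i})\Big],
\]
valid for every round trip, where $f(x)=xh(x)\ge0$.

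Next I would construct an explicit family of round trips and read off the signs of the three terms. Fix $y>0$ and $\delta\in(0,1)$, set $\HX=y$ and $\M=y$ (so the order is filled, in full, by the first incoming match of size $\ge y$), put $\xi_t=-c$ on $[0,\delta]$ for a small $c>0$, cancel at $\rho=\delta<T$, and liquidate the remaining position $X_\delta=-c\delta+Z_\rho$ at the constant rate $-X_\delta/(T-\delta)$ on $(\delta,T]$; one checks that \eqref{liquidationconstraint} holds. On the fill event one has $\xi_{\tau_i}=-c$ and $X_{\tau_i-}=-c\tau_i\le0$, so the first and third sums in the formula are both $\le0$, and the temporary term splits into the deterministic ``building cost'' $\delta f(-c)=\delta c\,|h(-c)|$ and a liquidation cost $(T-\delta)f\big(-X_\delta/(T-\delta)\big)$ which is bounded by $(c\delta+y)\sup_{|z|\le (c\delta+y)/(T-\delta)}|h(z)|\to0$ as $T\to\infty$ by continuity of $h$, hence vanishes in expectation by dominated convergence. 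Using Assumption \ref{DarkPoolAssumption1}(a),(b) I would bound the fill probability from below, $\mathbb P[\text{fill}]\ge\mathbb P[\tau_1\le\delta,\,\widetilde Y_1\ge y]\ge\lambda_0\lambda_1(y)\,\delta$, and record that $b:=\mathbb E\big[\tau_1\indf{\tau_1\le\delta,\,\widetilde Y_1\ge y}\big]>0$.

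Letting $T\to\infty$ then yields, after bounding the two nonpositive gain terms,
\[
\lim_{T\to\infty}\mathbb E[\cC_T]\ \le\ \delta c\,|h(-c)|-\gamma(1-\alpha)c\,y\,b-\kappa\, y\,|h(-c)|\,\lambda_0\lambda_1(y)\,\delta.
\]
To force $\kappa=0$ I would drop the (nonpositive) middle term and factor out $|h(-c)|>0$, obtaining the bound $|h(-c)|\,\delta\big(c-\kappa y\lambda_0\lambda_1(y)\big)$; if $\kappa>0$, choosing $c<\kappa y\lambda_0\lambda_1(y)$ makes the limit strictly negative, so $\mathbb E[\cC_T]<0$ for large $T$, contradicting the absence of price manipulation. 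Hence $\kappa=0$. With $\kappa=0$ the bound reduces to $\delta c\,|h(-c)|-\gamma(1-\alpha)c\,y\,b$; dividing by $c$ and letting $c\downarrow0$, the building term $\delta|h(-c)|\to0$ (continuity of $h$ with $h(0)=0$) while the gain stays at $\gamma(1-\alpha)y\,b$, so if $\alpha<1$ the limit is again strictly negative for small $c$, a contradiction. Hence $\alpha=1$.

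The routine parts are the algebra of the reduction and the two elementary limits; the main obstacle is making the reduction rigorous, specifically justifying that $\int\phi\,dP^0$ and $\int Z^\rho_{t-}\,dP^0$ are genuine (not merely local) martingales of mean zero and that integration by parts is applied correctly to the jump process $Z^\rho$. This is where admissibility pays off: $\phi$ and $Z^\rho$ are uniformly bounded (by the uniform bound on $\int\xi$ and by $|\HX|$), so the integrands are bounded and predictable, and Assumption \ref{DarkPoolAssumption1}(c) guarantees that $P^0$ is a $(\mathcal G_t)$-martingale, whence a localization and uniform-integrability argument yields the vanishing of the $P^0$-terms.
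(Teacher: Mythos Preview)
Your proof is correct and follows essentially the same route as the paper. Both arguments use the same single-update round trip (place a dark-pool order of size $y$ with minimum quantity level $|y|$, trade at constant rate $-c$ on $[0,\delta]$, then liquidate linearly on $(\delta,T]$), send $T\to\infty$ so that the post-$\delta$ temporary-impact cost vanishes, and then compare the remaining building cost $\delta c|h(-c)|$ against the $\kappa$-gain and the $(1-\alpha)$-gain by choosing $c$ small. The only difference is packaging: the paper quotes its precomputed expected-cost formula \eqref{costs of xi in proof of Thm 4.1} for such round trips (together with the observation that equality in \eqref{beta inequality all T} means $\phi\equiv0$ in \eqref{Delta}), whereas you re-derive from scratch the specialization of Lemma~\ref{revenuesT} to $X_0=0$ and linear $\beta$, obtaining the clean identity $\mathbb E[\cC_T]=\mathbb E\big[\gamma(1-\alpha)\sum Y_iX_{\tau_i-}+\int f(\xi_t)\,dt+\kappa\sum Y_ih(\xi_{\tau_i})\big]$. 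Your derivation is sound (the integration-by-parts step is exactly \eqref{lemma revenuesT first eq}, and the vanishing of the $P^0$-integrals uses precisely the boundedness of $\phi,\,Z^\rho$ plus Assumption~\ref{DarkPoolAssumption1}(c)), and your lower bounds $\mathbb P[\text{fill}]\ge\lambda_0\lambda_1(y)\delta$ and $b>0$ are the same estimates the paper invokes. One cosmetic remark: with $\Xi=|\HX|$ the fill may occur at some $\tau_k$ with $k>1$; your inequality $\sigma\indf{\sigma\le\delta}\ge\tau_1\indf{\tau_1\le\delta,\,\widetilde Y_1\ge y}$ handles this correctly, so the bound via $b$ goes through.
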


In the next section, we will analyze the regularity of a  concrete class of dark-pool extensions of a fixed Almgren--Chriss model. We will see that the question of regularity exhibits a rich mathematical structure and that the absence of price manipulation can no longer be characterized solely in terms of $\alpha$, $\beta$, and $\kappa$.  For instance, it will follow from Corollary \ref{CounterExCor} that even in the case $\alpha=\beta=0$ it may happen that there is no price manipulation for all $T>0$, but this situation is then characterized in terms of relations between $\gamma$, $h$, and the law of $\tau_1$.

\subsection{Regularity and irregularity for special model characteristics}\label{SpecResSec}

In this section, we will investigate in more detail the regularity and irregularity of a dark-pool extension of a \emph{fixed} Almgren--Chriss model. To this end, we will assume throughout this section that slippage is zero, $
\beta=0$, 
which is the natural (naive) first guess in setting up a dark-pool model. We know from Theorem \ref{model-parameters}, though, that there must be some Almgren--Chriss model such that there is price manipulation for sufficiently large  time horizon $T$.

 First, we will look into the role played by  $T$ in the existence of price manipulation. We  will argue that there exists a critical threshold $T^*\ge0$ such 
that there is no price manipulation for $T< T^*$ but price manipulation does exist for $T>T^*$. We will show that all three cases $T^*=\infty$, $0<T^*<\infty$, and $T^*=0$ can occur. Second, we will analyze the stronger requirements of absence of transaction-triggered price manipulation and of nonnegative expected liquidation costs. We will find situations in which there is no price manipulation for all $T>0$ but where the condition of nonnegative expected liquidation costs fails and where there is transaction-triggered price manipulation for sufficiently large $T$.

The results in this section serve a two-fold purpose. First, they illustrate that the question of regularity for a dark-pool extension of an Almgren--Chriss model can become quite subtle when condition \ref{thm41point-c} from Theorem \ref{model-parameters} is not satisfied. Second, this section also provides a case study for the regularity of market impact models in general. 

 We will make the following simple but natural assumption on the dark-pool extension defined through $(\alpha,\beta,g,(\tau_i),(\widetilde Y_i))$. Note that part (b) of the following assumption implies all parts of Assumption \ref{DarkPoolAssumption1}. 

\begin{assumption}\label{IndependenceAssumption}We assume the following conditions throughout Section \ref{SpecResSec}.
\begin{enumerate}
\item Slippage is zero: $
\beta=0$.
\item The process  $(N_t)$, as defined in \eqref{NtDefEq}, is a standard Poisson process with parameter $\theta>0$ and   $(\widetilde Y_i)$ are i.i.d. random variables with common distribution $\mu$ on $(0,\infty]$ such that $\mu([x,\infty])>0$ for all $x>0$. We also assume that the stochastic processes $(P^0_t)$, $(N_t)$, and $(\widetilde Y_i)$ are independent. 
\end{enumerate}
\end{assumption}

In Assumption \ref{IndependenceAssumption} (b) we do not exclude the possibility that $\widetilde Y_i$ takes the value $+\infty$ with positive probability. The particular case $\widetilde Y_i=+\infty$ $\mathbb P$-a.s., corresponding to $\mu=\delta_\infty$, can be regarded as a convenient description of the limiting case of infinite liquidity of incoming dark-pool orders. 

\bigskip

In Propositions \ref{thm-no-opt-strategies} and \ref{infinite-DP-thm-small-T}, we will consider the situation in which  $\alpha=1$.  In view of our assumption $\beta=0$, Proposition \ref{GivenACmodelAndTProp1} implies that there will be price manipulation for sufficiently large $T$. The following proposition shows in particular that one then can also generate arbitrarily negative expected costs. In contrast to the situation in many other market impact models, this conclusion is not  obvious in our case, because \blue{our} expected costs are typically not a convex functional of admissible strategies. 

\begin{proposition} \label{thm-no-opt-strategies}
Suppose that an Almgren--Chriss model  has been fixed and that $\alpha = 1$.  Then, for any $X_0 \in \mathbb R$,
\[ \lim_{T \uparrow \infty} \inf_{\chi \in \mathcal X(X_0,T)} \mathbb E[\,\mathcal C_T^\chi\,] = - \infty.\]
In particular, the condition of nonnegative expected liquidation costs is violated.
\end{proposition}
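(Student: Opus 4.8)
The plan is to reduce everything to a single transparent identity for the expected costs when $\alpha=1$ (recall $\beta=0$ throughout this section by Assumption \ref{IndependenceAssumption}), and then to exhibit an explicit family of admissible strategies that drive this expression to $-\infty$. Concretely, I would first show that for every $\chi=(\HX,\M,\xi,\rho)\in\mathcal X(X_0,T)$,
\[
\mathbb E[\cC_T^\chi]=\frac\gamma2 X_0^2-\frac\gamma2\,\mathbb E\Big[\textstyle\sum_{i=1}^{N_\rho}Y_i^2\Big]+\mathbb E\Big[\int_0^T\xi_t\TI(\xi_t)\,dt\Big]+\kappa\,\mathbb E\Big[\textstyle\sum_{i=1}^{N_\rho}Y_i\TI(\xi_{\tau_i})\Big].
\]
The first two terms are the crucial ones: the dark-pool fills contribute the strictly negative quantity $-\frac\gamma2\sum_iY_i^2$, which is precisely the failure of \eqref{Delta>0 for alpha=1 Condition} at $\alpha=1$, $\beta=0$, while the remaining two terms are nonnegative whenever $\xi$ vanishes at the fill times.

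To establish the identity I would insert \eqref{AffectedPriceProcessExchange} and \eqref{hat P tau} into \eqref{cost def} and sort the result into three groups. Writing $Q_t:=\int_0^t\xi_s\,ds$, the permanent-impact group is $\gamma\int_0^T Q_t\xi_t\,dt+\gamma\int_0^T Z^\rho_t\xi_t\,dt+\gamma\sum_iY_iQ_{\tau_i}+\gamma\sum_iY_iZ_{\tau_i-}$. Using $\int_0^T Q_t\xi_t\,dt=\tfrac12 Q_T^2$, the summation-by-parts identity $\sum_iY_iZ_{\tau_i-}=\tfrac12\big(Z_\rho^2-\sum_iY_i^2\big)$, and the integration by parts $Q_TZ_\rho=\int_0^T Z^\rho_t\,dQ_t+\int_0^T Q_t\,dZ^\rho_t$ (valid since $Q$ is continuous and $Z^\rho$ is pure-jump, so $[Q,Z^\rho]=0$), this group collapses to $\tfrac\gamma2(Q_T+Z_\rho)^2-\tfrac\gamma2\sum_iY_i^2$, and the constraint \eqref{liquidationconstraint} gives $Q_T+Z_\rho=-X_0$, hence $\tfrac\gamma2 X_0^2-\tfrac\gamma2\sum_iY_i^2$. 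For the unaffected-price group, $\int_0^T\xi_tP^0_t\,dt+\sum_iY_iP^0_{\tau_i}=\int_0^TP^0_t\,dW_t$ with $W_t:=Q_t+Z^\rho_t$; another integration by parts together with the $(\mathcal G_t)$-martingale property of $P^0$, the boundedness of $W_{t-}$ from Definition \ref{strategy chi def}, and $[P^0,Z^\rho]=0$ (independence, Assumption \ref{IndependenceAssumption}(b)) yields $\mathbb E[\int_0^TP^0_t\,dW_t]=-X_0P^0_0$, which cancels the face value $X_0P^0_0$ in \eqref{cost def}. What is left is the temporary-impact group, giving the displayed identity.

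With the identity in hand, I would fix a large $v>0$ and take $\HX$ with $|\HX|=v$, minimum quantity level $\M=|\HX|=v$, and a deterministic cancellation time $\rho=1/(\theta\,p_v)$, where $p_v:=\mu([v,\infty])>0$ by Assumption \ref{IndependenceAssumption}(b). With $\M=|\HX|$ the order is either untouched or filled in one block of size $v$ by the first incoming order exceeding $v$; since such arrivals form a thinned Poisson process of rate $\theta p_v$, one gets $\mathbb E[\sum_iY_i^2]=v^2(1-e^{-1})\ge v^2/2$. On $[0,\rho]$ I set $\xi\equiv0$, which annihilates the $\kappa$-term (as $\xi_{\tau_i}=0$) and the temporary term there; the residual $X_\rho=X_0+Z_\rho$, bounded by $|X_0|+v$, is then liquidated at the constant rate $-X_\rho/(T-\rho)$ over $(\rho,T]$, so the temporary term equals $-X_\rho\,\TI(-X_\rho/(T-\rho))$. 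This $\chi$ lies in $\mathcal X(X_0,T)$: $\xi$ is $(\mathcal G_t)$-predictable (left-continuous, adapted after $\rho$), $\int_0^\cdot\xi$ is bounded, and \eqref{liquidationconstraint} holds by construction. Because $\TI$ is continuous with $\TI(0)=0$, for each fixed $v$ the temporary term tends to $0$ as $T-\rho\to\infty$; choosing $T-\rho$ large enough (so that $T\to\infty$ as $v\to\infty$) bounds its expectation by $1$, and the identity gives $\mathbb E[\cC_T^\chi]\le\frac\gamma2 X_0^2-\frac\gamma4 v^2+1\to-\infty$.

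Finally, to obtain the limit over all $T$ rather than along the constructed sequence, I would observe that $T\mapsto\inf_{\chi}\mathbb E[\cC_T^\chi]$ is nonincreasing: given any strategy for horizon $T$, keeping $\HX,\M,\rho$ and re-liquidating the (already determined) residual at $\rho$ over the longer interval $[\rho,T']$ leaves every term of the identity unchanged except the temporary one, which decreases since $s\mapsto s\,f(a/s)$ is nonincreasing for the convex $f(x)=x\TI(x)$ with $f(0)=0$. The main obstacle is the identity itself—keeping the continuous part $\xi$ and the jump part $Z$ of the trading separate through the integration and summation by parts and correctly invoking the martingale property—together with the uniform control of the temporary liquidation cost: since $|X_\rho|$ grows with $v$, one must let $T-\rho$ grow fast enough in $v$ that $-X_\rho\,\TI(-X_\rho/(T-\rho))$ stays negligible for an \emph{arbitrary} impact function, which is possible using only the continuity of $\TI$ and $\TI(0)=0$.
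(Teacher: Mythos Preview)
Your proof is correct and follows essentially the same route as the paper: both derive the cost identity $\mathbb E[\cC_T^\chi]=\frac\gamma2 X_0^2-\frac\gamma2\,\mathbb E\big[\sum_iY_i^2\big]+\mathbb E\big[\int_0^T f(\xi_t)\,dt\big]+\kappa\,\mathbb E\big[\sum_iY_i\TI(\xi_{\tau_i})\big]$ for $\alpha=1$, $\beta=0$ (the paper via its Lemma~\ref{revenuesT} and \eqref{Xtaui+Eq}, you via a direct integration-by-parts), and both then take $\xi\equiv0$ up to a deterministic $\rho$, liquidate the residual at constant rate afterwards, and send $|\HX|\to\infty$. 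The only noteworthy differences are cosmetic: the paper uses $\Xi=0$ and $\rho=T/2$ while you use $\Xi=|\HX|$ and $\rho=1/(\theta p_v)$, and your monotonicity argument at the end can be simplified to extending $\xi$ by zero on $(T,T']$, which leaves every term of the identity unchanged since $f(0)=0$.
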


Now we examine in more detail the role played by $T$ in the existence of price manipulation. First, we show that for a certain class of models there is no price manipulation for small $T$.

\begin{proposition} \label{thm-noPMforsmallT}
Let $\kappa = 0$ and $h(x) = \eta x$. If $T\le \frac{2\eta}{\gamma}$, then there is no price manipulation.  If moreover $\alpha = 1$, then there is no price manipulation if and only if $T\le \frac{2\eta}{\gamma}$.
\end{proposition}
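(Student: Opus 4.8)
The plan is to prove the two assertions separately: first that $T\le 2\eta/\gamma$ rules out price manipulation for \emph{every} $\alpha\in[0,1]$, and then, for $\alpha=1$ and $T>2\eta/\gamma$, to exhibit an explicit profitable round trip.

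For the first part I would fix a round trip $\chi=(\HX,\M,\xi,\rho)$, so $X_0=0$, and reduce $\mathbb E[\cC_T^\chi]$ to a tractable pathwise integrand. Writing $W_t:=\int_0^t\xi_s\,ds$ and substituting $h(x)=\eta x$, $\kappa=0$, $\beta=0$ into \eqref{cost def}, the terms carrying $P^0$ combine, after two integrations by parts (one for the continuous part $W$, one for the pure-jump part $Z^\rho$), into $-\int_0^T(W_t+Z^\rho_{t-})\,dP^0_t$; the boundary term vanishes because \eqref{liquidationconstraint} forces $W_T+Z_\rho=0$. Since $W_t+Z^\rho_{t-}$ is bounded and $(\mathcal G_t)$-predictable and $P^0$ is a $(\mathcal G_t)$-martingale, this stochastic integral has expectation zero. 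Using $\int_0^T W_t\xi_t\,dt=\tfrac12 W_T^2$ and $W_T=-Z_\rho$ on the remaining terms, I expect to obtain
\[
\mathbb E[\cC_T^\chi]=\mathbb E\Big[\eta\int_0^T\xi_t^2\,dt-(1-\alpha)\gamma\Big(\tfrac12 Z_\rho^2+\int_0^T Z^\rho_t\xi_t\,dt\Big)-\frac{\alpha\gamma}{2}\sum_{i=1}^{N_\rho}Y_i^2\Big].
\]

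The heart of the argument is a single pathwise estimate valid uniformly in $\alpha$. All the $Y_i$ share the sign of $\HX$, so $\sum_i Y_i=Z_\rho$ and $\sum_i Y_i^2\le Z_\rho^2=W_T^2\le T\int_0^T\xi_t^2\,dt$ by Cauchy--Schwarz. For the cross term I would integrate by parts once more, $\int_0^T Z^\rho_t\xi_t\,dt=Z_\rho W_T-\sum_{i=1}^{N_\rho}W_{\tau_i}Y_i=-Z_\rho^2-\sum_i W_{\tau_i}Y_i$, and then estimate
\[
\tfrac12 Z_\rho^2+\int_0^T Z^\rho_t\xi_t\,dt=-\tfrac12 Z_\rho^2-\sum_{i=1}^{N_\rho}W_{\tau_i}Y_i\le -\tfrac12 Z_\rho^2+|Z_\rho|\Big(T\!\int_0^T\xi_t^2\,dt\Big)^{1/2}\le\frac T2\int_0^T\xi_t^2\,dt,
\]
using $\big|\sum_i W_{\tau_i}Y_i\big|\le(\sup_{t\le T}|W_t|)\,|Z_\rho|$, then $\sup_{t\le T}|W_t|\le (T\int_0^T\xi_t^2\,dt)^{1/2}$ (Cauchy--Schwarz), and finally $2ab\le a^2+b^2$. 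Substituting both bounds and noting $1-\alpha\ge0$, $\alpha\ge0$ collapses the integrand to $\big(\eta-\tfrac{\gamma T}{2}\big)\int_0^T\xi_t^2\,dt$, which is nonnegative precisely when $T\le 2\eta/\gamma$; hence $\mathbb E[\cC_T^\chi]\ge0$ and no price manipulation exists.

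For the converse ($\alpha=1$, $T>2\eta/\gamma$) I would display one profitable round trip. Here the formula above reduces to $\mathbb E[\cC_T^\chi]=\mathbb E\big[\eta\int_0^T\xi_t^2\,dt-\tfrac\gamma2\sum_i Y_i^2\big]$. Choose $\delta\in(0,\min(1,T-2\eta/\gamma))$, place a buy order $\HX=1$ with $\M=0$, trade nothing at the exchange until the first fill, and set $\rho=\tau_1\wedge\delta$. On $\{\tau_1\le\delta\}$ exactly one fill $Y_1\in(0,1]$ occurs and I sell the $Y_1$ shares at the constant rate $-Y_1/(T-\tau_1)$ over $(\tau_1,T]$, so $\int_0^T\xi_t^2\,dt=Y_1^2/(T-\tau_1)$; on $\{\tau_1>\delta\}$ nothing happens and the cost is $0$. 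Since $T-\tau_1\ge T-\delta>2\eta/\gamma$ on the first event, the contribution there equals $Y_1^2\big(\tfrac{\eta}{T-\tau_1}-\tfrac\gamma2\big)<0$, and Assumption \ref{DarkPoolAssumption1}(a)--(b) gives $\mathbb P[\tau_1\le\delta]\ge\lambda_0\delta>0$ with $Y_1>0$ there; hence $\mathbb E[\cC_T^\chi]<0$.

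I expect the main obstacle to be the cross term $\int_0^T Z^\rho_t\xi_t\,dt$: its integrand has no definite sign and it entangles the exchange strategy $\xi$ with the random dark-pool fills, so naive term-by-term estimates fail. The device that resolves this is to rewrite it by integration by parts as $-Z_\rho^2-\sum_i W_{\tau_i}Y_i$ and then absorb it through $2ab\le a^2+b^2$; this is exactly what manufactures the threshold constant $\tfrac{\gamma T}{2}$ and lets a single pathwise bound cover all $\alpha\in[0,1]$ at once. A minor technical point to verify is the vanishing of $\mathbb E[\int_0^T(W_t+Z^\rho_{t-})\,dP^0_t]$, which follows from boundedness of the integrand (admissibility) together with the $(\mathcal G_t)$-martingale property of $P^0$.
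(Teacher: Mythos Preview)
Your proof is correct and follows essentially the same route as the paper's: the identical Cauchy--Schwarz/same-sign estimates drive the pathwise lower bound, and your explicit round trip for $\alpha=1$, $T>2\eta/\gamma$ is the same construction up to the immaterial choice of liquidating from $\tau_1$ rather than from the deterministic cutoff $\varepsilon$. The only cosmetic difference is that the paper splits the cost into an $\alpha$-weighted convex combination and bounds each piece separately, whereas you bound both pieces by $\tfrac{\gamma T}{2}\int_0^T\xi_t^2\,dt$ and combine at the end.
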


Since the class $\mathcal X(X_0,T)$ of admissible strategies increases with $T$, the existence of price manipulation for some $T$ implies the existence of price manipulation for any $T'\ge T$. Hence there exists a  critical value $T^*$ such that there is no price manipulation for $T < T^*$  but price manipulation does exist for $T>T^*$. For $\alpha=1$ and linear temporary impact, $h(x)=\eta x$, it follows from Proposition \ref{thm-noPMforsmallT} that $T^*=\frac{2\eta}{\gamma}$. The next proposition  shows that $T^* = 0$ for  $\alpha=1$ and  temporary impact with sublinear growth.

\begin{proposition} \label{infinite-DP-thm-small-T} Suppose that an Almgren--Chriss model  has been fixed and that $\alpha = 1$.
If  $h$ has sublinear growth, i.e., 
\[ \lim_{|x|\rightarrow \infty} \frac{h(x)}{x}  = 0,\]
then there is price manipulation for every $T>0$. 
\end{proposition}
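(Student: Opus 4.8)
The plan is to reduce the expected cost of an arbitrary round trip to a transparent expression in which the effect of $\alpha=1$ is a quadratic credit, and then to exhibit an explicit family of round trips making that expression negative for every $T>0$.

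\textbf{Step 1 (a formula for expected round-trip costs).} First I would fix a round trip $\chi=(\HX,\M,\xi,\rho)$, so that $X_0=0$ and hence, by the liquidation constraint, $X_T=0$. Writing $P^\chi_t$ and $\widehat P^\chi_{\tau_i}$ out in \eqref{cost def} with $\beta=0$ and $\alpha=1$, the cost splits into three groups. The \emph{fundamental} group $\int_0^T\xi_tP^0_t\,dt+\sum_{i\le N_\rho}Y_iP^0_{\tau_i}$ equals $\int_0^TP^0_t\,dX_t$; since $X$ is bounded and predictable with $X_0=X_T=0$, and since (by Assumption \ref{IndependenceAssumption}(b)) $P^0$ is a $(\mathcal G_t)$-martingale having no jumps at the $\tau_i$, integration by parts shows this group has expectation zero. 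The \emph{permanent} group is $\gamma\big(\int_0^T\xi_tX_t\,dt+\sum_{i\le N_\rho}Y_iX_{\tau_i-}\big)$; here I would apply the jump product rule to $X_t^2$ (the jumps contribute $2X_{\tau_i-}Y_i+Y_i^2$) and use $X_0=X_T=0$ to obtain the telescoping identity $\gamma\int_0^T\xi_tX_t\,dt+\gamma\sum_{i\le N_\rho}Y_iX_{\tau_i-}=-\frac\gamma2\sum_{i\le N_\rho}Y_i^2$. What remains is the temporary-impact group, so altogether
\[ \mathbb E[\mathcal C_T^\chi]=\mathbb E\Big[\int_0^Tf(\xi_t)\,dt+\kappa\sum_{i=1}^{N_\rho}Y_i\,h(\xi_{\tau_i})-\tfrac\gamma2\sum_{i=1}^{N_\rho}Y_i^2\Big],\qquad f(x)=xh(x)\ge0. \]

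\textbf{Step 2 (the manipulation strategy).} Fix $T>0$, pick $s\in(0,T)$ (say $s=T/2$), and let $q>0$ be a size to be chosen large. I place a dark-pool buy order $\HX=q$ with minimum quantity level $\M=q$, so the order is either completely filled by a single incoming order of size $\ge q$ or not filled at all; thus at most one $Y_i$ occurs and it equals $q$, whence $\sum_iY_i^2=Z_\rho^2\in\{0,q^2\}$. I set the deterministic cancellation time $\rho=s<T$, keep $\xi\equiv0$ on $[0,s]$, and on $(s,T]$ liquidate at the constant rate $\xi_t=-Z_s/(T-s)$. This $\chi$ is admissible: $\xi$ is $(\mathcal G_t)$-predictable (its rate is $\mathcal G_s$-measurable), $\int_0^t\xi$ is bounded by $q$, $\rho<T$, and $X_T=\int_0^T\xi_t\,dt+Z_\rho=-Z_s+Z_s=0$.

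\textbf{Step 3 (evaluation and sublinearity).} Because $\xi\equiv0$ on $[0,s]$ while all fills occur at times $\tau_i\le\rho=s$, we have $\xi_{\tau_i}=0$, so the $\kappa$-term vanishes for every $\kappa\ge0$. The no-fill event contributes $0$, while the fill event $\{Z_s=q\}$, which has probability $p>0$ by Assumption \ref{IndependenceAssumption}(b), gives $\int_0^Tf(\xi_t)\,dt=(T-s)f\big({-}q/(T-s)\big)=-q\,h\big({-}q/(T-s)\big)$ and $\sum_iY_i^2=q^2$, so that by Step 1
\[ \mathbb E[\mathcal C_T^\chi]=p\Big({-}q\,h\big({-}\tfrac{q}{T-s}\big)-\tfrac\gamma2 q^2\Big). \]
Since $h$ is increasing with $h(0)=0$, $-h(-q/(T-s))=|h(-q/(T-s))|$, and the bracket is negative exactly when $|h(-q/(T-s))|<\frac\gamma2 q$. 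By the sublinear-growth hypothesis $|h(-q/(T-s))|=o(q)$ as $q\to\infty$ with $T-s$ fixed, so this holds for all large $q$, giving $\mathbb E[\mathcal C_T^\chi]<0$ and hence price manipulation for the given $T$.

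\textbf{Main obstacle.} The substantive content is Step 1: extracting the quadratic permanent credit $-\frac\gamma2\sum Y_i^2$, which is precisely where $\alpha=1$ is used, and recognizing that forcing a single full fill through $\M=q$ makes $\sum Y_i^2$ as large as possible, namely $q^2$. Once the credit is quadratic in $q$, the result is a race against the temporary liquidation cost, which is only of order $q\,|h(-q/(T-s))|=o(q^2)$; sublinear growth is exactly what lets the credit win for large $q$ regardless of how small $T$ is. (As a consistency check, for linear $h(x)=\eta x$ this race is scale-invariant in $q$ and the bracket is negative only when $T-s>2\eta/\gamma$, recovering the threshold $T^*=2\eta/\gamma$ of Proposition \ref{thm-noPMforsmallT}.)
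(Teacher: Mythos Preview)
Your proof is correct and follows essentially the same route as the paper's. Your Step~1 re-derives (via the product rule for $X_t^2$) the cost identity that the paper obtains from Lemma~\ref{revenuesT} specialized to $\alpha=1$, $\beta=0$, $X_0=0$; and your Steps~2--3 use exactly the paper's strategy---place $\HX$ with minimum quantity level $\Xi=|\HX|$, keep $\xi\equiv 0$ on $[0,T/2]$, cancel at $\rho=T/2$, then liquidate linearly---arriving at the same expression whose sign is decided by $|h(-q/(T-s))|<\tfrac{\gamma}{2}q$ for large $q$.
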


\commentOut{Report 2, point 3: Referee claims that presentation is confusing, so we should make more clear that we consider first alpha=1, then alpha=0. Probably subsubsections/more subsections?}After   having considered the case $\alpha =1$, we will assume in the sequel that
\begin{equation}\label{ZeroAssumptionEq}
\alpha=0, \qquad \kappa=0, \qquad\text{and}\qquad h(x)=\eta x,
\end{equation}
in addition to Assumption \ref{IndependenceAssumption}.  Note that we assume $\kappa=0$ since this is one of the conditions required by Theorem \ref{model-parameters} for regularity. Taking $\kappa>0$ will typically make the model even less regular. The assumption $\alpha=0$ can again be regarded as the natural (naive) first guess in setting up a dark-pool model. But in view of Theorem~\ref{model-parameters} the assumption $\alpha=0$ implies the existence of an Almgren--Chriss model  for which there is price manipulation and for which the condition of nonnegative expected liquidation costs is violated for sufficiently large $T$. Our aim is to  give a refined analysis for the specific class of Almgren--Chriss models with linear temporary price impact, $h(x)=\eta x$, which is probably the most commonly studied case in the academic market impact literature.
We first take a  look at the condition of nonnegative expected liquidation costs. 

\begin{proposition} \label{thm-sunny-no-pm} 
 Consider a fixed Almgren--Chriss model  and suppose that condition \eqref{ZeroAssumptionEq} holds.
\begin{enumerate}
\item If $\frac{\gamma}{\eta }<2\theta$, we have for any $X_0\in\mathbb R\setminus\{0\}$,
\begin{equation}  \label{thm-sunny-no-pm-eq}
\lim_{T \uparrow \infty} \inf_{\chi \in \mathcal X(X_0,T)} \mathbb E[\,\mathcal C_T^\chi\,] \le - \frac{\gamma^2}{2}  X_0^2 \frac{1}{2 \eta \theta - \gamma} < 0.
\end{equation}
\item If either $\frac{\gamma}{\eta } = 2\theta$ and $X_0 \ne 0$ or $\frac{\gamma}{\eta } > 2\theta $, then 
\[ \lim_{T \uparrow \infty} \inf_{\chi \in \mathcal X(X_0,T)} \mathbb E[\,\mathcal C_T^\chi\,] = - \infty.\]
\end{enumerate}
\end{proposition}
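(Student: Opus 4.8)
The plan is to eliminate the unaffected price $P^0$ from $\mathbb E[\cC_T^\chi]$ and then to realise both bounds with a single explicit family of strategies. Writing $x_t:=\int_0^t\xi_s\,ds$, the standing assumptions $\alpha=\beta=\kappa=0$ and $h(x)=\eta x$ give $P^\chi_t=P^0_t+\gamma x_t+\eta\xi_t$ and $\widehat P^\chi_{\tau_i}=P^0_{\tau_i}+\gamma x_{\tau_i}$. I would first isolate the $P^0$-dependent part of \eqref{cost def}, namely $X_0P^0_0+\int_0^TP^0_t\,dx_t+\int_0^TP^0_t\,dZ^\rho_t$, and show it has zero expectation: integration by parts together with the $(\mathcal G_t)$-martingale property of $P^0$ (Assumption \ref{DarkPoolAssumption1}(c)) and the boundedness and predictability of $x_{t-}$ and $Z^\rho_{t-}$ give $\mathbb E\big[\int_0^TP^0_t\,dx_t+\int_0^TP^0_t\,dZ^\rho_t\big]=\mathbb E[P^0_T(x_T+Z_\rho)]=-X_0P^0_0$, since the liquidation constraint \eqref{liquidationconstraint} makes $x_T+Z_\rho=-X_0$ deterministic. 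Using $\int_0^T\gamma x_t\xi_t\,dt=\tfrac\gamma2 x_T^2$ I am then left with the impact representation
\begin{equation*}
\mathbb E[\cC_T^\chi]=\frac\gamma2\,\mathbb E\big[(X_0+Z_\rho)^2\big]+\eta\,\mathbb E\Big[\int_0^T\xi_t^2\,dt\Big]+\gamma\,\mathbb E\Big[\sum_{i=1}^{N_\rho}Y_ix_{\tau_i}\Big].
\end{equation*}

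Next I would design a \emph{single-fill} manipulation depending on one parameter $Q>0$. Put $\HX:=-\sgn(X_0)Q$ and choose the minimum quantity maximally, $\M:=Q=|\HX|$; then by the definition of the $Y_i$ the dark-pool order is filled completely at the first matching order of size $\ge Q$, which by Assumption \ref{IndependenceAssumption}(b) occurs at an exponential time $\tau^\ast$ of rate $\theta_Q:=\theta\,\mu([Q,\infty])\in(0,\theta]$. On $[0,\rho]$ with $\rho:=\tau^\ast\wedge(T/2)$ I trade at the constant rate $\xi_t=\sgn(X_0)v$, pushing the exchange price in the direction favourable to the fill, and after $\rho$ I liquidate the remaining inventory at constant speed over $[\rho,T]$. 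On $\{\tau^\ast\le\rho\}$ one computes $x_T=\sgn(X_0)(Q-|X_0|)$, so that $\tfrac\gamma2(X_0+Z_\rho)^2=\tfrac\gamma2(Q-|X_0|)^2$, while $\gamma\sum Y_ix_{\tau_i}=-\gamma Qv\,\tau^\ast$ and the manipulation incurs temporary cost $\eta v^2\tau^\ast$; the residual liquidation costs $O(1/T)$ in expectation (exponential tails of $\tau^\ast$ keep $\mathbb E[(Q-|X_0|-v\tau^\ast)^2]$ bounded), and the no-fill event $\{\tau^\ast>T/2\}$ has probability $e^{-\theta_QT/2}$ against a cost growing only linearly in $T$.

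Passing to the limit (Fubini gives $\mathbb E[\tau^\ast]=1/\theta_Q$ and $\mathbb E\big[\int_0^{\tau^\ast}g\big]=\int_0^\infty g(t)e^{-\theta_Qt}\,dt$) yields $\lim_{T\uparrow\infty}\mathbb E[\cC_T^\chi]=\tfrac\gamma2(Q-|X_0|)^2+\theta_Q^{-1}(\eta v^2-\gamma Qv)$; minimising over $v$ at $v^\ast=\gamma Q/(2\eta)$ leaves $\tfrac\gamma2(Q-|X_0|)^2-\gamma^2Q^2/(4\eta\theta_Q)$. Because $\theta_Q\le\theta$ only deepens the negative term, this is bounded above by $\Phi(Q):=\tfrac\gamma2(Q-|X_0|)^2-\gamma^2Q^2/(4\eta\theta)$, a parabola whose $Q^2$-coefficient equals $\tfrac\gamma2(2\eta\theta-\gamma)/(2\eta\theta)$. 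In case (a), where $2\eta\theta-\gamma>0$, $\Phi$ is convex with minimiser $Q^\ast=2\eta\theta|X_0|/(2\eta\theta-\gamma)$ and $\Phi(Q^\ast)=-\gamma^2X_0^2/\big(2(2\eta\theta-\gamma)\big)$, which is exactly the bound \eqref{thm-sunny-no-pm-eq}; evaluating the family at $Q=Q^\ast$ finishes (a). In case (b) the $Q^2$-coefficient is $\le0$: if $2\eta\theta<\gamma$ then $\Phi(Q)\to-\infty$ as $Q\to\infty$ for every $X_0$, and if $2\eta\theta=\gamma$ then $\Phi(Q)=\tfrac\gamma2X_0^2-\gamma|X_0|Q\to-\infty$ precisely when $X_0\ne0$; in both subcases $\lim_{T}\inf_\chi\mathbb E[\cC_T^\chi]\le\inf_{Q}\Phi(Q)=-\infty$.

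The reduction and the $v$- and $Q$-optimisations are mechanical, and the stated constant is simply the vertex of $\Phi$, so the genuine work lies elsewhere. The main obstacle I anticipate is making the $T\uparrow\infty$ limit rigorous: one must show by dominated convergence that both the no-fill contribution (exponentially small probability against a linearly growing cost) and the temporary cost of the terminal liquidation over $[\rho,T]$ vanish, so that the finite-$T$ expected costs converge to the clean expression above. A second, conceptually important point is the device $\M=|\HX|$, which forces single complete fills and replaces $\theta$ by $\theta_Q\le\theta$: since a smaller fill rate only makes the manipulation more profitable, the bounds then hold uniformly over all admissible laws $\mu$, and not merely in the infinite-liquidity case $\mu=\delta_\infty$ where the estimate in (a) is tight.
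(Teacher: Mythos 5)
Your argument is correct and follows the same overall architecture as the paper's proof: reduce $\mathbb E[\,\mathcal C_T^\chi\,]$ via the martingale property to the impact-only functional $\frac\gamma2\mathbb E[(X_0+Z_\rho)^2]+\eta\,\mathbb E[\int_0^T\xi_t^2\,dt]+\gamma\,\mathbb E[\sum_i Y_i\int_0^{\tau_i}\xi_s\,ds]$ (this is exactly Lemma \ref{revenuesT} specialized to \eqref{ZeroAssumptionEq}), then evaluate an explicit single-update manipulation whose limiting expected cost is a quadratic in the dark-pool order size, and optimize that quadratic; your vertex computation reproduces \eqref{thm-sunny-no-pm-eq} exactly, and your case split for (b) matches the paper's. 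The one genuine difference is the strategy itself. The paper takes $\M=0$ and the pre-optimized rate $\xi_t=-\frac{\gamma}{2\eta}\HX$ only up to $\tau_1$, then freezes; the order then fills passively over time against \emph{all} incoming orders, each execution inheriting the permanent impact $\gamma\int_0^{\tau_1}\xi_s\,ds$ accumulated before $\tau_1$, so that $\sum_iY_i\to\HX$ and the relevant clock is $\tau_1\sim\mathrm{Exp}(\theta)$ — the bound is obtained at the full rate $\theta$ directly, for arbitrary $\mu$. You instead set $\M=|\HX|$ to force a single complete fill at the thinned rate $\theta_Q=\theta\mu([Q,\infty])$ and keep manipulating until that fill; this costs you the extra (correct) monotonicity step that $\theta_Q\le\theta$ only lowers the value, so that $\Phi(Q)$ with the unthinned $\theta$ is still an upper bound for your strategy's cost. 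Both devices handle general $\mu$; yours is slightly more self-contained (it reuses the thinning argument the paper only deploys in Proposition \ref{infinite-DP-thm-small-T}), while the paper's avoids any dependence of the fill rate on $Q$ and so lands on the target constant without the intermediate comparison. Minor remarks: your optimal $Q^*$ and the resulting $\HX=-2\eta\theta X_0/(2\eta\theta-\gamma)$ are the correct minimizer (the sign of $\HX$ printed in the paper's proof of part (a) appears to be a typo), and the only points that would need to be written out carefully are the ones you already flag — the $O(1/T)$ terminal-liquidation cost and the exponentially negligible no-fill event.
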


In particular, the condition of nonnegative expected liquidation costs is violated in both parts of Proposition \ref{thm-sunny-no-pm}. Proposition \ref{hierarchy-lemma} therefore yields that there is also transaction-triggered price manipulation. In fact, the next proposition shows that price manipulation and negative expected liquidation costs can only be realized by using transaction-triggered price manipulation. Therefore,  only a strategy that manipulates the exchange-quoted price can be more profitable than other strategies.

\begin{proposition} \label{trans-trigg-lemma} Consider a fixed Almgren--Chriss model  and suppose that condition \eqref{ZeroAssumptionEq} holds.  If $X_0 \ge 0$ and $\xi_t \le 0$ for all $t$ (or $X_0 \le 0$ and $\xi_t \ge 0$ for all $t$), then $\mathbb E[\,\mathcal C_T\,] \ge 0$, i.e., the violation of nonnegative expected liquidation costs in Proposition \ref{thm-sunny-no-pm} can only be achieved by intermediate buy (sell) trades at the exchange during an overall sell (buy) program.
\end{proposition}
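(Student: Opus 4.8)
The plan is to use condition~\eqref{ZeroAssumptionEq} to split $\cC_T$ into a mean-zero martingale part and a pathwise nonnegative remainder. Setting $A_t:=\int_0^t\xi_s\,ds$ and substituting $\alpha=\beta=0$, $\kappa=0$, and $h(x)=\eta x$ into \eqref{cost def}, \eqref{AffectedPriceProcessExchange}, and \eqref{hat P tau}, and using the identity $\int_0^T\xi_tA_t\,dt=\tfrac12A_T^2$, I would write
\[
\cC_T=X_0P^0_0+\underbrace{\Big(\int_0^T\xi_tP^0_t\,dt+\sum_{i=1}^{N_\rho}Y_iP^0_{\tau_i}\Big)}_{=:M}+\gamma\underbrace{\Big(\tfrac12A_T^2+\sum_{i=1}^{N_\rho}Y_iA_{\tau_i}\Big)}_{=:R}+\eta\int_0^T\xi_t^2\,dt.
\]
The last term is manifestly nonnegative, so it remains to show $\mathbb E[M]=-X_0P^0_0$ (cancelling the face value) and $R\ge0$ almost surely.

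For the martingale part, note that $M=\int_0^TP^0_{t-}\,dX_t$, since $dX_t=\xi_t\,dt+dZ^\rho_t$ and $Z^\rho$ jumps by $Y_i$ at the times $\tau_i\le\rho$, while $P^0$ has no jump at any $\tau_i$ and hence $P^0_{\tau_i}=P^0_{\tau_i-}$. Integration by parts for semimartingales gives $M=P^0_TX_T-P^0_0X_0-\int_0^TX_{t-}\,dP^0_t-[P^0,X]_T$. Here the covariation $[P^0,X]$ vanishes: the absolutely continuous part of $X$ contributes nothing, and $P^0$ shares no common jumps with $Z^\rho$ because, under Assumption~\ref{IndependenceAssumption}(b), $P^0$ is independent of the arrival times $(\tau_i)$. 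Since $X_T=0$ and $X_{t-}$ is a bounded $(\mathcal G_t)$-predictable process while $P^0$ is a $(\mathcal G_t)$-martingale, the stochastic integral $\int_0^TX_{t-}\,dP^0_t$ is a true martingale with zero expectation; as $\mathcal F_0$ is $\mathbb P$-trivial, $P^0_0$ is constant and $\mathbb E[M]=-X_0P^0_0$.

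It then remains to prove the pathwise bound $R\ge0$ from $X_0\ge0$ and $\xi\le0$ alone (so that $A$ is nonincreasing and $A_t\le0$), leaving the sign of the dark-pool order $\HX$ free; this freedom is the crux. When $\HX\le0$ we have $Y_i\le0$ and $A_{\tau_i}\le0$, so every term of $R$ is nonnegative. The hard case is $\HX>0$, where $Y_i\ge0$ and $A_{\tau_i}\le0$ make the cross term negative, seemingly allowing one to ``buy cheap in the dark pool after depressing the price.'' The key observation that rules this out is that $\xi\le0$ forces the inventory to stay nonnegative: for each $t$, the event that no further matching order arrives in $(t,T]$ has positive probability by the Poisson structure and independence in Assumption~\ref{IndependenceAssumption}(b), and on this event the liquidation constraint~\eqref{liquidationconstraint} would require buying back any oversold position, which $\xi\le0$ forbids; hence $X_t\ge0$ $\mathbb P$-a.s. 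Evaluating $X_{\tau_i-}=X_0-(-A_{\tau_i})+\sum_{j<i}Y_j\ge0$ yields $-A_{\tau_i}\le X_0+\sum_{j<i}Y_j$. Writing $z:=Z_\rho=\sum_iY_i$ so that $A_T=-X_0-z$, substituting this bound termwise (legitimate since $Y_i\ge0$) and using $\sum_iY_i\sum_{j<i}Y_j=\tfrac12\big(z^2-\sum_iY_i^2\big)$, the remainder collapses to the lower bound $R\ge\tfrac12X_0^2+\tfrac12\sum_iY_i^2\ge0$. Combining the three parts gives $\mathbb E[\cC_T]=\gamma\,\mathbb E[R]+\eta\,\mathbb E\big[\int_0^T\xi_t^2\,dt\big]\ge0$. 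The main obstacle is exactly the unconstrained sign of $\HX$: reducing the negative cross term to a perfect square hinges on the inventory-nonnegativity argument, which is the only place where the probabilistic (rather than purely pathwise) structure of the model is genuinely used.
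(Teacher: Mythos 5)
Your proof is correct and takes essentially the same route as the paper's: the same cost decomposition (the paper packages it as Lemma~\ref{revenuesT}, with your $\tfrac{\gamma}{2}A_T^2$ equal to the paper's $\tfrac{\gamma}{2}\big(X_0+\sum_i Y_i\big)^2$ via the liquidation constraint), the same case split on the sign of $\HX$, and the same collapse of the cross term to $\tfrac12 X_0^2+\tfrac12\sum_i Y_i^2$ using $X_{\tau_i-}\ge 0$. The only difference is that you make explicit the probabilistic justification of the inventory bound $X_t\ge 0$ (positive conditional probability of no further dark-pool fills, plus the a.s.\ liquidation constraint and $\xi\le 0$), a step the paper asserts in a single line; this is a welcome clarification rather than a deviation.
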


Some of the preceding results can be strengthened in the  limiting case $\mu=\delta_\infty$, which corresponds to infinite liquidity of incoming dark-pool orders.  
We first show that then \eqref{thm-sunny-no-pm-eq} becomes an equality.

\begin{proposition} \label{infinite-DP-thm-sunny-no-pm} 
Consider a fixed Almgren--Chriss model. Suppose moreover that condition \eqref{ZeroAssumptionEq} holds and that $\mu=\delta_\infty$. 
Then,  for $X_0 \in \mathbb R$ and $\frac{\gamma}{\eta} < 2 \theta$,
\begin{equation}\label{infinite-DP-thm-sunny-no-eq} 
\lim_{T \uparrow \infty} \inf_{\chi \in \mathcal X(X_0,T)} \mathbb E[\,\mathcal C_T^\chi\,] = - \frac{\gamma^2 }{2} X_0^2 \frac{1}{2 \eta \theta - \gamma}.
\end{equation}
\end{proposition}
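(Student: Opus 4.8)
The plan is to prove the matching lower bound, since Proposition~\ref{thm-sunny-no-pm}(a) (applicable because $\mu=\delta_\infty$ is covered by Assumption~\ref{IndependenceAssumption}) already gives $\lim_{T\uparrow\infty}\inf_\chi\mathbb E[\mathcal C_T^\chi]\le -C$, where I abbreviate $C:=\frac{\gamma^2}2X_0^2\frac1{2\eta\theta-\gamma}$; for $X_0=0$ the claimed value is $0$ and is attained trivially. Concretely I will show $\inf_{\chi\in\mathcal X(X_0,T)}\mathbb E[\mathcal C_T^\chi]\ge -C$ for every $T>0$, which forces the limit to equal $-C$. First I would reduce \eqref{cost def} to its impact part: under \eqref{ZeroAssumptionEq} the terms carrying $\alpha$ and $\beta$ disappear, $\kappa=0$ removes the temporary component from $\widehat P^\chi_{\tau_i}$, and—exactly as in the proof of Proposition~\ref{thm-sunny-no-pm}—the martingale property of $P^0$ under $(\mathcal G_t)$ together with the independence in Assumption~\ref{IndependenceAssumption} makes all $P^0$-terms vanish in expectation. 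Writing $R_t:=\int_0^t\xi_s\,ds$ and using $\gamma\int_0^T\xi_tR_t\,dt=\frac\gamma2R_T^2$, this leaves
\[
\mathbb E[\mathcal C_T^\chi]=\mathbb E\Big[\tfrac\gamma2 R_T^2+\eta\int_0^T\xi_t^2\,dt+\gamma\HX R_{\tau_1}\ind{\{\tau_1\le\rho\}}\Big].
\]
Here $\mu=\delta_\infty$ is what collapses the dark-pool sum to a single term: the order $\HX$ is filled in full at the first arrival $\tau_1$ (exponential with rate $\theta$) whenever $\tau_1\le\rho$, so the liquidation constraint forces $R_T=-X_0-\HX\ind{\{\tau_1\le\rho\}}$.

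Next I would pass to a pre-fill control problem. Put $\sigma:=\tau_1\wedge\rho$ and $A:=\{\tau_1\le\rho\}$. Splitting $\eta\int_0^T\xi^2=\eta\int_0^\sigma\xi^2+\eta\int_\sigma^T\xi^2$ and discarding the nonnegative second piece yields, for every $T$,
\[
\mathbb E[\mathcal C_T^\chi]\ge\mathbb E\Big[\eta\int_0^\sigma\xi_t^2\,dt+\ind A\big(\tfrac\gamma2(X_0+\HX)^2+\gamma\HX R_\sigma\big)+\ind{A^c}\tfrac\gamma2X_0^2\Big].
\]
To bound the right-hand side I would run a verification argument with the obstacle-type candidate $w(r):=\min\{\gamma\HX r+c(\HX),\,\frac\gamma2X_0^2\}$, where $c(\HX):=\frac\gamma2(X_0+\HX)^2-\frac{\gamma^2\HX^2}{4\eta\theta}$. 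The key computation is the completion of squares
\[
\eta\xi^2+\gamma\HX\xi+\frac{\gamma^2\HX^2}{4\eta}=\eta\Big(\xi+\frac{\gamma\HX}{2\eta}\Big)^2\ge0,
\]
which, together with the precise choice of $c(\HX)$, makes the generator of $w(R_t)+\eta\int_0^t\xi_s^2\,ds$ (including the fill intensity $\theta$) nonnegative on the continuation region; on $\{w=\frac\gamma2X_0^2\}$ the fill cost dominates $\frac\gamma2X_0^2$, so the generator is nonnegative there too. Since $R$ is continuous of finite variation, the chain rule applies to the Lipschitz $w$, and optional stopping at $\sigma$ gives $\mathbb E[\eta\int_0^\sigma\xi^2+w(R_\sigma)]\ge w(0)$.

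It then remains to compare the bracketed terminal cost at $\sigma$ with $w(R_\sigma)$ and to optimize over $\HX$. On $A$ the fill cost equals $\gamma\HX R_\sigma+c(\HX)+\frac{\gamma^2\HX^2}{4\eta\theta}\ge\gamma\HX R_\sigma+c(\HX)\ge w(R_\sigma)$, and on $A^c$ the cost $\frac\gamma2X_0^2\ge w(R_\sigma)$ by definition of the minimum; hence $\mathbb E[\mathcal C_T^\chi]\ge\mathbb E[\eta\int_0^\sigma\xi^2+w(R_\sigma)]\ge w(0)$. Because $\frac{\gamma}{\eta}<2\theta$ the coefficient $\frac\gamma2-\frac{\gamma^2}{4\eta\theta}=\frac{\gamma(2\eta\theta-\gamma)}{4\eta\theta}$ of $\HX^2$ in $c(\HX)$ is strictly positive, so $c$ is convex with minimizer $\HX^*=-\frac{2\eta\theta}{2\eta\theta-\gamma}X_0$ and $\min_{\HX}c(\HX)=-C$; as $\frac\gamma2X_0^2\ge0\ge -C$, both entries of $w(0)$ are $\ge -C$, giving $\mathbb E[\mathcal C_T^\chi]\ge -C$. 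The same $\HX^*$, a constant exchange rate $\xi^*=-\gamma\HX^*/(2\eta)$ held until the fill, and slow post-fill liquidation show the bound is attained in the limit, which reproduces the upper bound and hence the equality~\eqref{infinite-DP-thm-sunny-no-eq}.

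I expect the main obstacle to be the cancellation option. A genuine lower bound must cover strategies that cancel, and a purely linear candidate $\gamma\HX r+c(\HX)$ fails the boundary comparison on $A^c$ once a large $|R_\rho|$ has been built up; the obstacle form $w=\min\{\cdot,\frac\gamma2X_0^2\}$ repairs this but introduces a kink that must be handled through the finite-variation chain rule. The secondary points—justifying that the $P^0$-terms vanish, and that by convexity and the martingale property it suffices to treat predictable (effectively deterministic-in-time, pre-fill) controls—are routine given the independence built into Assumption~\ref{IndependenceAssumption}.
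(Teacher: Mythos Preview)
Your route is genuinely different from the paper's. The paper never introduces a value function or a generator: it applies Jensen's inequality $\int_0^{\tau_1}\xi_t^2\,dt\ge (X_{\tau_1-}-X_0)^2/\tau_1$, minimizes the resulting quadratic in $X_{\tau_1-}$ pointwise to obtain $\mathcal C_T^\chi\ge -\int X\,dP^0+\ind{\{\tau_1\le\rho\}}\big(\frac\gamma2(X_0+\HX)^2-\frac{\gamma^2}{4\eta}\tau_1\HX^2\big)$, integrates against the exponential law of $\tau_1$ to get a function $\psi(\rho,\HX)$, and then uses a monotonicity argument in $t$ to bound $\psi(\rho,\HX)\ge 0\wedge\psi(\infty,\HX)=-(c(\HX))^-$. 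This is shorter because the Jensen step and the quadratic minimization replace your obstacle construction and kink analysis in one stroke; your verification approach is more structural and would adapt more readily to other running costs or fill mechanisms.

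There is, however, a real gap in your verification step. The generator you correctly show to be nonnegative is $\eta\xi^2+w'(r)\xi+\theta\big(g(r)-w(r)\big)$ with $g(r)=\gamma\HX r+\frac\gamma2(X_0+\HX)^2$. But this is \emph{not} the drift of $t\mapsto w(R_t)+\eta\int_0^t\xi_s^2\,ds$; the chain rule only gives the first two terms, whose infimum over $\xi$ is $-\gamma^2\HX^2/(4\eta)<0$ on the linear branch. Concretely, with $c(\HX)<\frac\gamma2X_0^2$ and $\xi\equiv-\gamma\HX/(2\eta)$ one stays in the linear region and computes $\mathbb E[\eta\int_0^\sigma\xi^2+w(R_\sigma)]=w(0)-\frac{\gamma^2\HX^2}{4\eta}\,\mathbb E[\sigma]<w(0)$, so the displayed inequality you claim is false. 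What your generator computation \emph{does} certify, after conditioning on $\mathcal F_\infty$ so that the pre-fill control becomes deterministic, is that $L(t):=e^{-\theta t}w(R_t)+\int_0^te^{-\theta s}\big[\eta\xi_s^2+\theta g(R_s)\big]\,ds$ is nondecreasing; equivalently, the process that jumps from $w(R_{\tau_1-})$ to $g(R_{\tau_1})$ at $\tau_1$ is a submartingale. Stopping at $\sigma$ then gives
\[
\mathbb E\Big[\ind A\,g(R_\sigma)+\ind{A^c}\,w(R_\sigma)+\eta\!\int_0^\sigma\!\xi_s^2\,ds\Big]\ \ge\ w(0),
\]
and the comparison with the terminal cost becomes immediate: on $A$ the fill cost \emph{equals} $g(R_\sigma)$ (your extra $\frac{\gamma^2\HX^2}{4\eta\theta}$ is then superfluous), and on $A^c$ you use $\frac\gamma2X_0^2\ge w(R_\rho)$. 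With this correction the rest of your argument, including the minimization over $\HX$ and the identification of $w(0)\ge-C$, goes through.
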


Equation \eqref{infinite-DP-thm-sunny-no-eq}  is remarkable, because it implies on the one hand that the condition of nonnegative expected liquidation costs is violated for all $X_0\neq0$. By taking $X_0=0$ we see, on the other hand, that there is no price manipulation for all $T>0$ and so $T^*=\infty$. We actually have the following result:

\begin{corollary}\label{CounterExCor} Consider a fixed Almgren--Chriss model. Suppose moreover that condition \eqref{ZeroAssumptionEq} holds and that $\mu=\delta_\infty$. 
Then there is no price manipulation for every $T>0$ if and only if $\frac{\gamma}{\eta }\le 2\theta$.
\end{corollary}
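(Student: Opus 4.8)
The plan is to separate the three regimes $\frac\gamma\eta<2\theta$, $\frac\gamma\eta>2\theta$, and the critical line $\frac\gamma\eta=2\theta$, reading the first two off the preceding propositions and treating the third by a continuity argument. Write $I(T):=\inf_{\chi\in\mathcal X(0,T)}\mathbb E[\cC_T^\chi]$ for the optimal expected cost of a round trip. The do-nothing strategy lies in $\mathcal X(0,T)$ and has zero cost, so $I(T)\le0$, and by definition there is no price manipulation for the horizon $T$ precisely when $I(T)\ge0$, i.e.\ when $I(T)=0$. I would first record that $I$ is non-increasing: given $\chi=(\HX,\M,\xi,\rho)\in\mathcal X(0,T)$ and $T'>T$, extending $\xi$ by $0$ on $(T,T']$ while keeping $\HX,\M,\rho$ (legitimate since $\rho<T<T'$) yields a strategy $\chi'\in\mathcal X(0,T')$ with $\cC_{T'}^{\chi'}=\cC_T^\chi$ pathwise, because no dark-pool fill occurs after $\rho$ and the liquidation constraint is preserved. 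Hence $I(T')\le I(T)$, so $\lim_{T\uparrow\infty}I(T)=\inf_TI(T)$, and there is no price manipulation for every $T$ if and only if $\lim_{T\uparrow\infty}I(T)=0$.

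The two non-critical regimes are then immediate. If $\frac\gamma\eta<2\theta$, Proposition~\ref{infinite-DP-thm-sunny-no-pm} with $X_0=0$ gives $\lim_{T\uparrow\infty}I(T)=0$, so by the previous paragraph there is no price manipulation for any $T$. If $\frac\gamma\eta>2\theta$, Proposition~\ref{thm-sunny-no-pm}(b)—whose second alternative imposes no restriction on $X_0$ and so covers $X_0=0$—gives $\lim_{T\uparrow\infty}I(T)=-\infty$; hence $I(T)<0$ for large $T$, i.e.\ price manipulation exists. This already establishes the equivalence off the critical line, and in particular the implication ``no price manipulation for all $T$ $\Rightarrow\frac\gamma\eta\le2\theta$''.

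It remains to treat the boundary $\frac\gamma\eta=2\theta$, which I expect to be the only genuine obstacle, since neither proposition applies there for round trips. Here I would argue by continuity in the permanent-impact parameter $\gamma$, exploiting that the set $\mathcal X(0,T)$ depends only on $X_0$, $T$, and the filtration $(\mathcal G_t)$, and is therefore \emph{independent of $\gamma$}. Fix an admissible $\chi$ and a horizon $T$. Inspecting \eqref{cost def} under \eqref{ZeroAssumptionEq} (so $\alpha=\kappa=\beta=0$), one sees that $\gamma$ enters $\cC_T^\chi$ only affinely, through $P^\chi_t$ and $\widehat P^\chi_{\tau_i}$; when $\mathbb E[\int_0^T\xi_t^2\,dt]<\infty$ the expectation $\mathbb E[\cC_T^\chi]$ is thus a finite affine function $\gamma\mapsto \mathbb E_\gamma[\cC_T^\chi]$ (finiteness of the $P^0$-coefficient following from the $(\mathcal G_t)$-martingale property of $P^0$ together with square-integrability), while if that integral is infinite the dominant quadratic term $\eta\,\mathbb E[\int_0^T\xi_t^2\,dt]$ forces $\mathbb E[\cC_T^\chi]=+\infty\ge0$ trivially. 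For each $\gamma'<2\eta\theta$ the strict regime already gives $\mathbb E_{\gamma'}[\cC_T^\chi]\ge I(T)=0$; letting $\gamma'\uparrow2\eta\theta$ along the affine map yields $\mathbb E_{2\eta\theta}[\cC_T^\chi]\ge0$. As $\chi$ and $T$ were arbitrary, $I(T)\ge0$ for all $T$ at $\gamma=2\eta\theta$, so there is no price manipulation on the boundary. Combining the three regimes gives the stated equivalence. The delicate point throughout is precisely this continuity step: I must first discard strategies with infinite $\mathbb E[\int_0^T\xi_t^2\,dt]$ and then justify the finite affine-in-$\gamma$ representation (using the $P^0$-martingale cancellation exactly as in the proof of Proposition~\ref{infinite-DP-thm-sunny-no-pm}) before passing to the limit.
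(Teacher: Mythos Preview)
Your proof is correct, and for the two strict regimes it coincides with the paper's argument. The difference lies entirely in the handling of the critical case $\frac{\gamma}{\eta}=2\theta$.

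The paper treats the boundary by a direct computation: it reuses the pointwise lower bound established in the proof of Proposition~\ref{infinite-DP-thm-sunny-no-pm} (for $X_0=0$ this reads $\cC_T^\chi\ge-\int_0^T X_t\,dP^0_t+\indf{\tau_1<\rho}(\tfrac\gamma2-\tfrac{\gamma^2}{4\eta}\tau_1)\widehat X^2$), takes expectations, and evaluates the resulting integral explicitly, obtaining $\mathbb E[\cC_T^\chi]\ge \tfrac{\gamma}{2}\widehat X^2\,\theta\rho\,e^{-\theta\rho}\ge0$ when $\gamma=2\eta\theta$. Your route is instead structural: you fix $\chi$, observe that under \eqref{ZeroAssumptionEq} and $\mu=\delta_\infty$ the expected cost is an affine function of $\gamma$ (the martingale term vanishes, the two $\gamma$-coefficients are bounded because $\int_0^t\xi_s\,ds$ and $Z^\rho$ are bounded, and the $\eta$-term is either finite or forces $\mathbb E[\cC_T^\chi]=+\infty$), and then pass to the limit $\gamma'\uparrow 2\eta\theta$ using the already-proved nonnegativity on the open interval. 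This is a clean alternative that avoids redoing any estimates; the paper's explicit calculation, on the other hand, is self-contained and does not need to invoke the full strength of Proposition~\ref{infinite-DP-thm-sunny-no-pm}. One small remark: your parenthetical ``finiteness of the $P^0$-coefficient following from \dots\ square-integrability'' is slightly misplaced---the stochastic integral has zero mean already because $X$ is bounded and $P^0$ is a $(\mathcal G_t)$-martingale, independently of the integrability of $\xi^2$; square-integrability is only needed to ensure the $\eta$-term is finite.
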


By comparing the preceding result with Propositions \ref{hierarchy-lemma} and \ref{infinite-DP-thm-sunny-no-pm},  we 
arrive \commentOut{Report 2, minor comment 8}at the following statement.

\begin{corollary}Under the assumptions of Corollary \ref{CounterExCor} there is {always} transaction-triggered price manipulation for sufficiently large $T$. Standard price manipulation, however, exists only for $\frac{\gamma}{\eta }> 2\theta$.
\end{corollary}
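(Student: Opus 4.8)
The plan is to obtain both assertions by assembling results already established in this section, treating the claim about transaction-triggered price manipulation and the claim about standard price manipulation separately. The guiding idea is that transaction-triggered price manipulation is detected by the failure of nonnegative expected liquidation costs (via the hierarchy in Proposition \ref{hierarchy-lemma}), whereas standard price manipulation is governed exactly by Corollary \ref{CounterExCor}, and the two mechanisms part company precisely in the regime $\frac\gamma\eta\le2\theta$.

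For the transaction-triggered part I would first note that the hypotheses of the present corollary (condition \eqref{ZeroAssumptionEq} together with $\mu=\delta_\infty$) are a special case of the hypotheses of Proposition \ref{thm-sunny-no-pm}, since $\mu=\delta_\infty$ satisfies Assumption \ref{IndependenceAssumption}(b). I would then fix any $X_0\neq0$ and apply Proposition \ref{thm-sunny-no-pm}: in the regime $\frac\gamma\eta<2\theta$, part (a) shows that $\lim_{T\uparrow\infty}\inf_{\chi\in\mathcal X(X_0,T)}\mathbb E[\mathcal C_T^\chi]$ is a strictly negative real number, while in the regimes $\frac\gamma\eta=2\theta$ and $\frac\gamma\eta>2\theta$, part (b) shows that this limit equals $-\infty$. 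In every regime the limit is therefore strictly negative, and by the definition of the limit this forces $\inf_{\chi\in\mathcal X(X_0,T)}\mathbb E[\mathcal C_T^\chi]<0$ for all sufficiently large $T$. Hence for large $T$ there is an admissible strategy $\chi\in\mathcal X(X_0,T)$ with $\mathbb E[\mathcal C_T^\chi]<0$, so the model fails to have nonnegative expected liquidation costs.

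It then remains to convert this failure into the presence of transaction-triggered price manipulation, which is exactly the contrapositive of Proposition \ref{hierarchy-lemma}(a); that implication applies because in our model $\gamma>0$ and $h$ is strictly increasing, so buy orders raise and sell orders lower the exchange price. Thus transaction-triggered price manipulation exists for all sufficiently large $T$, and since the argument never used the sign of $\frac\gamma\eta-2\theta$ it holds for every admissible choice of parameters — this is the content of the word \lq\lq always". (Alternatively, one may combine Proposition \ref{trans-trigg-lemma}, which shows that any one-signed liquidation has nonnegative expected costs, with the negative-cost strategies produced above to exhibit the required profitable mixed-sign strategy directly.) For the second assertion I would argue by logical negation from Corollary \ref{CounterExCor}: that corollary states that, under the present hypotheses, there is no standard price manipulation for every $T>0$ if and only if $\frac\gamma\eta\le2\theta$; negating both sides shows that standard price manipulation exists for some $T>0$ if and only if $\frac\gamma\eta>2\theta$. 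Because the class $\mathcal X(0,T)$ of round trips increases with $T$ (a round trip completed by time $T$ extends to one on $[0,T']$ for $T'>T$), existence for one horizon forces existence for all larger horizons, so \lq\lq for some $T$" sharpens to \lq\lq for all sufficiently large $T$", which is the claim that standard price manipulation exists only for $\frac\gamma\eta>2\theta$.

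I expect no serious obstacle, since the corollary is essentially a bookkeeping consequence of the preceding propositions. The only points that require care are ensuring that all three regimes of $\frac\gamma\eta$ versus $2\theta$ are covered when asserting that transaction-triggered price manipulation occurs \emph{always} — which is why both parts of Proposition \ref{thm-sunny-no-pm} must be invoked rather than the sharper but narrower Proposition \ref{infinite-DP-thm-sunny-no-pm} — and the elementary limit and monotonicity arguments needed to pass from the limiting statements in $T$ to statements valid for all sufficiently large finite $T$.
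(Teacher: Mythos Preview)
Your proof is correct and follows essentially the same route as the paper, which simply points to Propositions~\ref{hierarchy-lemma} and~\ref{infinite-DP-thm-sunny-no-pm} together with Corollary~\ref{CounterExCor}. Your choice to invoke Proposition~\ref{thm-sunny-no-pm} rather than Proposition~\ref{infinite-DP-thm-sunny-no-pm} is in fact slightly cleaner, because it handles the boundary case $\frac{\gamma}{\eta}=2\theta$ explicitly, whereas the latter proposition is stated only for the strict inequality.
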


\subsection{Optimal trade execution strategies}
\label{section_optliq}

In this section, we illustrate some of our results by determining an optimal strategy for selling $X_0>0$ shares. To this end, we will make a number of simplifying assumptions, because our main goal is to analyze the regularity of the model. In particular,  our optimality criterion is the minimization of expected costs; we do not consider risk aversion. 

We fix an Almgren--Chriss model  and assume that Assumption \ref{IndependenceAssumption} (b) holds and that
\begin{equation}\label{LiquidationAssEq}
 \alpha = 1, \; \beta(y) = \frac{\gamma}{2}y, \; \kappa = 0.
\end{equation}
Then Theorem~\ref{model-parameters} guarantees that there is no price manipulation. For simplicity, we will also assume that all matching dark-pool orders are large compared to our's in the sense that
\begin{equation}\label{LiquidationAss2Eq}
\mu=\delta_\infty.
\end{equation} Then the entirety of the dark-pool order $\widehat X $ will either be filled  when $\tau_1\le\rho$ or 
it will be cancelled when $\tau_1>\rho$. 
In this setting, an admissible strategy $\chi=(\widehat X,\Xi,\xi,\rho)\in\mathcal{X}(X_0,T)$ will be called a \emph{single-update strategy} if $\rho$ is a deterministic time in $[0,T)$ and $\xi$  is predictable with respect to the filtration generated by the stochastic process $\mathbbmss 1_{ \{ \tau_1 \le t \} }$, $t\ge0$. Due to \eqref{LiquidationAss2Eq}, the value of $\Xi$ is irrelevant.

Note that the process $\xi$ of a single-update strategy evolves deterministically until there is an execution in the dark pool, i.e., until time $\tau_1$. At that time,  $\xi$ can be updated. But the update will only depend on the time $\tau_1$ and not on any other random quantities.
In particular, $\xi$ can be written as
\begin{equation} 
\xi_t = \begin{cases}
\xi^0_t, & \textrm{ if } t \le \tau_1 \textrm { or } \tau_1>\rho,\\
\xi^1_t, & \textrm{ if } t > \tau_1 \textrm{ and } \tau_1\le\rho,
\end{cases}
\end{equation}
where $\xi^0$ is deterministic and $\xi^1$ depends on $\tau_1$. 

\begin{proposition} \label{thm-semi-deterministic}Suppose that Assumption \ref{IndependenceAssumption} {\rm (b)}, \eqref{LiquidationAssEq}, and \eqref{LiquidationAss2Eq} hold and that $|h(x)|\to\infty$ as $|x|\to\infty$. 
For any $X_0\in\mathbb R$  and $T>0$ there exists a single-update strategy  that minimizes the expected costs $ \mathbb E [ \,\mathcal C^\chi_T\,]$ in the class of all admissible strategies $\chi\in\mathcal X(X_0,T)$. Moreover, the part $\xi^1$ of any optimal single-update strategy must satisfy 
\begin{equation}\label{optimal xi 1 eq}
\xi^1_t=\begin{cases}\displaystyle\frac{-X_{\tau_1}-\widehat X}{T-{\tau_1}}&\text{on }\{ {\tau_1} \le\rho \},\\ &\\
\displaystyle\frac{-X_\rho}{T-\rho}&\text{on }  \{ \rho<{\tau_1} \};
\end{cases}
\end{equation}
\end{proposition}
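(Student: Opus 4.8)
The plan is to first collapse the cost functional \eqref{cost def} to a transparent, essentially deterministic form, and then to solve a convex variational problem.

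\textbf{Step 1 (cost simplification).} The decisive step is to show that under \eqref{LiquidationAssEq}, \eqref{LiquidationAss2Eq} and Assumption~\ref{IndependenceAssumption}(b), \emph{every} admissible $\chi$ satisfies
\[
\mathbb E[\mathcal C_T^\chi]=\frac{\gamma}{2}X_0^2+\mathbb E\Big[\int_0^T f(\xi_t)\,dt\Big],\qquad f(x)=xh(x).
\]
The mechanism is as follows. Since $\mu=\delta_\infty$ there is at most one fill, of the full size $\widehat X$ at time $\tau_1$ on $\{\tau_1\le\rho\}$, so $Z^\rho_t=\widehat X\,\mathbbmss 1_{\{\tau_1\le t\wedge\rho\}}$ and $Z_{\tau_1-}=0$. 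Writing $I_t=\int_0^t\xi_s\,ds$, the permanent-impact term $\tfrac{\gamma}{2}I_T^2$, the virtual-impact term $\gamma\int_0^T Z^\rho_t\xi_t\,dt$ (here $\alpha=1$), and the slippage term $\widehat X\beta(\widehat X)=\tfrac{\gamma}{2}\widehat X^2$ telescope (using $\kappa=0$) to $\tfrac{\gamma}{2}(I_T+Z_\rho)^2$; the liquidation constraint \eqref{liquidationconstraint}, $I_T+Z_\rho=-X_0$, turns this path-dependent expression into the constant $\tfrac{\gamma}{2}X_0^2$. For the $P^0$-terms, integration by parts gives $\mathbb E[\int_0^T\xi_tP^0_t\,dt]=\mathbb E[I_TP^0_T]$, and then the $(\mathcal G_t)$-martingale property of $P^0$, together with optional sampling at $\tau_1\wedge\rho\le\rho<T$, cancels the remaining $P^0$-contributions against the face value $X_0P^0_0$ and the dark-pool expense $\widehat X P^0_{\tau_1}\mathbbmss 1_{\{\tau_1\le\rho\}}$, since $\mathbb E[\widehat X(P^0_{\tau_1}-P^0_\rho)\mathbbmss 1_{\{\tau_1\le\rho\}}]=0$. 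The only analytic subtlety here is that $\int_0^{\cdot}I_t\,dP^0_t$ is a true martingale, which follows from the boundedness of $I$ by a routine localization.

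\textbf{Step 2 (reduction to single-update strategies and the form of $\xi^1$).} After Step~1 the objective is the convex, time-separable functional $\mathbb E[\int_0^T f(\xi_t)\,dt]$, with $f$ convex by Definition~\ref{temporary impact def}, and crucially $P^0$ has disappeared. I would first \emph{project}: replacing $\xi_t$ by its conditional expectation given the filtration generated by $Z$ and $(\tau_i\wedge t)$ does not increase the cost (conditional Jensen) and preserves the constraint, so one may assume $\xi$ depends only on the dark-pool information; with $\mu=\delta_\infty$ this is exactly the single-update structure. On $\{\tau_1\le\rho\}$ the residual position $-(X_{\tau_1}+\widehat X)$ must be traded over $[\tau_1,T]$, and Jensen on this interval shows the constant rate $(-X_{\tau_1}-\widehat X)/(T-\tau_1)$ minimizes $\int_{\tau_1}^T f(\xi_t)\,dt$ among rates with the prescribed integral --- the first line of \eqref{optimal xi 1 eq}. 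Likewise, on $\{\rho<\tau_1\}$ the residual $-X_\rho$ must be cleared over $[\rho,T]$, forcing $-X_\rho/(T-\rho)$, the second line. Because any optimal strategy must be optimal on each post-decision interval, the equality case of Jensen yields the necessity statement; here some care is needed, as uniqueness of $\xi^1$ uses strict convexity of $f$ (which holds for the canonical linear and power-law impact functions).

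\textbf{Step 3 (existence).} The extra hypothesis $|h(x)|\to\infty$ is exactly what drives existence, since it makes $f$ superlinear: $f(x)/|x|=|h(x)|\to\infty$ as $|x|\to\infty$. By the de la Vall\'ee--Poussin criterion the sublevel sets of $\xi\mapsto\mathbb E[\int_0^T f(\xi_t)\,dt]$ are uniformly integrable, hence weakly relatively compact in $L^1(dt\otimes d\mathbb P)$ by Dunford--Pettis, while the convex integral functional is strongly, hence weakly, lower semicontinuous. Taking a minimizing sequence of single-update strategies, the cancellation times $\rho_n$ lie in the compact interval $[0,T]$, the sizes $\widehat X_n$ and the pre-decision rates are kept bounded by coercivity, and passing to weak-$L^1$ limits together with a convergent subsequence of $(\rho_n,\widehat X_n)$ produces, by lower semicontinuity, a single-update minimizer whose value equals the infimum over all admissible strategies.

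\textbf{Main obstacle.} I expect the hardest part to be the reduction in Step~2 against \emph{all} admissible competitors: justifying that a general $(\mathcal G_t)$-stopping time $\rho$, which may depend on $P^0$, can be replaced by one measurable with respect to the dark-pool information (ultimately deterministic), and ensuring in Step~3 that the minimizing $\rho_n$ do not escape to the excluded endpoint $T$, since the admissibility constraint $\rho<T$ is open. By contrast, the cost simplification of Step~1 is the true crux: once the permanent terms and $P^0$ have been eliminated, the problem becomes essentially deterministic in $\tau_1$ and everything else is a standard convex variational argument.
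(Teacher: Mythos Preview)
Your Steps~1 and~3 line up with the paper almost exactly. The cost collapses pathwise (via Lemma~\ref{revenuesT}) to $-\int_0^T X_t\,dP^0_t+\tfrac{\gamma}{2}X_0^2+\int_0^T f(\xi_t)\,dt$, and existence is handled by de~la~Vall\'ee--Poussin/Dunford--Pettis together with weak lower semicontinuity of the convex integral functional; the paper additionally gives an explicit Fenchel-duality argument for the l.s.c.\ of $(r,\xi)\mapsto\int_0^r f(\xi_t)\,dt$ and a coercivity bound confining $\widehat X$ to a compact interval. The Jensen argument forcing the post-decision rate \eqref{optimal xi 1 eq} is identical.

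The substantive divergence is in your reduction to single-update strategies. Your projection---replacing $\xi_t$ by $\mathbb E[\xi_t\mid\mathcal H_t]$ with $\mathcal H_t$ the dark-pool filtration---does not obviously preserve the liquidation constraint: $\int_0^T\xi_t\,dt=-X_0-Z_\rho$ with $Z_\rho=\widehat X\,\indf{\tau_1\le\rho}$, and when $\rho$ depends on $P^0$ the right-hand side is not $\mathcal H_T$-measurable, so the projected process need not be admissible. You correctly flag this as the main obstacle. The paper does \emph{not} project. Instead it uses the independence of $P^0$ and $(N_t)$ together with the $(\mathcal G_t)$-predictability of $\xi$ and of $(\rho\wedge t)$ to observe that, conditionally on $\{t\le\tau_1\}$, the pre-$\tau_1$ behaviour of $(\xi,\rho)$ is independent of $\tau_1$. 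One can then integrate out $\tau_1$ explicitly: the post-Jensen lower bound becomes $\mathbb E[F(\widehat X,\xi,\rho)]$ for the \emph{deterministic} functional
\[
F(\widehat X,\xi,r)=\int_0^\infty \theta e^{-\theta u}\Big\{\int_0^{u\wedge r}f(\xi_t)\,dt+(T-u\wedge r)\,f\Big(\frac{-X_0-\int_0^{u\wedge r}\xi_t\,dt-\ind{\{u\le r\}}\widehat X}{T-u\wedge r}\Big)\Big\}\,du,
\]
and since this is $\ge\inf_{\widehat X,\xi,r}F$ pointwise, it suffices to minimise $F$ over deterministic triples; the minimiser, concatenated with \eqref{optimal xi 1 eq}, \emph{is} a single-update strategy. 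This route sidesteps both the constraint-preservation issue and the need to reduce a random $\rho$ to a deterministic one. Your endpoint concern about $\rho_n\to T$ is shared by the paper, which minimises $F$ over $r\in[0,T]$ without explicitly excluding $r=T$.
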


Now we show how  the components $\xi^0$, $\HX$, and $\rho$ of  an optimal single-update strategy can be computed. To this end, we make the additional simplifying assumption that temporary impact is linear, $h(x)=\eta x$. 
It  follows from Equation \eqref{SingleUpdateRevenuesEq} in the proof of Proposition \ref{thm-semi-deterministic} that the expected costs of a single-update strategy $\chi\in\mathcal X(X_0,T)$  satisfying \eqref{optimal xi 1 eq} are given by
\begin{equation}\label{ExpectedRevenuesOptLiqEq}
 \mathbb E[\,\mathcal C^\chi_T\,] 
  = \frac 12 \gamma X_0^2 +\int_0^\rho \eta (\xi^0_s)^2 e^{-\theta s}\, ds+\eta e^{-\theta\rho}\frac{(X_0+\int_0^\rho \xi^0_s\, ds)^2}{T-\rho}+\int_0^\rho \eta\theta e^{-\theta t} \frac{(X_0+\int_0^t \xi^0_s \, ds +\HX)^2}{T-t}\, dt
\end{equation}
A standard calculation shows that  $X^0_t := X_0 +\int_0^t \xi^0_s \, ds$, $0\le t \le \rho $, minimizing this expression is the solution of the  Euler--Lagrange equation
\begin{equation*}  
	- \ddot X^0_t + \theta \dot X^0_t + \theta \frac{X^0_t + \HX}{T - t} = 0
\end{equation*}
with initial condition $X_0^0=X_0$ and a terminal condition $X^0_\rho$ that will be determined later. By using the computer algebra software Mathematica, we found the analytic solution
\begin{eqnarray*} \lefteqn{X^0_t=\red{-}\widehat X+}\\
&& \Big(\theta  T e^{\theta  T} (T-\rho ) (\text{Ei}(-T \theta
   )-\text{Ei}(\theta  (\rho -T)))-\rho +T (1-e^{\theta  \rho })\Big)^{-1}\bigg\{-e^{\theta  t}\rho  X_0\\
&&+(t-T) \big(X_0 e^{\theta  \rho }-X^0_\rho-\widehat X\big)+\theta  (T-t) e^{\theta  T} \Big[\text{Ei}((t-T) \theta ) (T (X_0-X^0_\rho-\widehat X)\\
&&-\rho 
   X_0)+X_0 (\rho -T) \text{Ei}(\theta  (\rho -T))+T (X_\rho^0+\widehat X) \text{Ei}(-T \theta
   )\Big]+e^{\theta  t} T
   (X_0-X^0_\rho-\widehat X)\bigg\},
\end{eqnarray*}
where 
$\text{Ei}(t) = \int_{-\infty}^t s^{-1}{e^s} \, ds$ is the exponential integral function. The constants $\rho$, $\widehat X$ and $X^0_\rho$ can then be determined by optimizing the expression \eqref{ExpectedRevenuesOptLiqEq} numerically.

\section{Conclusion} \label{ConclusionSection}

We have analyzed the regularity of a class of dark-pool extensions of an Almgren--Chriss model and found that such models admit price manipulation strategies unless the model parameters satisfy certain restrictions. The corresponding parameter values will typically differ strongly from  values found in empirical analysis or calibration of real-world dark pools.  Our results can therefore provide some indication that dark pools may create market inefficiencies and disturb the price finding mechanism of markets. 

In concrete realizations of our model, we have furthermore provided a comparative analysis of various regularity notions for market impact models. We  have found that the validity of these notions can depend in a subtle way  on the interplay of all  model parameters and on the liquidation time constraint.

\section{Proofs}\label{ProofsSection}

Recall from Assumption \ref{DarkPoolAssumption1} that the martingale property of $(P^0_t)$ is retained by passing to the enlarged filtration $(\mathcal G_t)$. Next,  for an admissible strategy $\chi=(\HX,\M, \xi, \rho)$, the asset position process  $X$ defined in \eqref{StrategyEq} is an admissible integrand for $P^0$ since it is bounded and $(\mathcal G_t)$-predictable, because $X$ is both  left-continuous and $(\mathcal G_t)$-adapted. Recall also that $ f(x) = x \, \TI(x)$.

\begin{lemma} \label{revenuesT} The costs of an admissible strategy $\chi=(\HX,\M, \xi, \rho)$  for given $X_0$ and $T$ are given by
\begin{eqnarray*} 
\mathcal C^\chi_T & = & -\int_0^TX_t\,dP^0_t+\frac\gamma2\bigg(X_0+\sum_{i=1}^{N_\rho} Y_i\bigg)^2+\int_0^T f(\xi_t)\,dt\\
&&+\sum_{i=1}^{N_{\rho}}Y_i\bigg(\gamma\int_0^{\tau_i}\xi_s\,ds-\gamma\alpha X_{\tau_i+}+\kappa h(\xi_{\tau_i})+\beta(Y_i)\bigg).
\end{eqnarray*}
\end{lemma}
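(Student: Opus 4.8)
The plan is to start from the definition \eqref{cost def} of $\cC_T^\chi$, substitute the explicit forms \eqref{AffectedPriceProcessExchange} of $P^\chi_t$ and \eqref{hat P tau} of $\widehat P^\chi_{\tau_i}$, and then sort the resulting terms into four groups: (i) all terms containing the unaffected price $P^0$; (ii) the quadratic permanent-impact term arising from the $\gamma\int_0^t\xi_s\,ds$ inside \eqref{costs1}; (iii) the temporary-impact and slippage terms $\int_0^T f(\xi_t)\,dt$, $\kappa h(\xi_{\tau_i})$ and $\beta(Y_i)$, which already appear in the asserted form; and (iv) the remaining permanent-impact cross terms carrying the factor $\gamma$, including the $\alpha\gamma$ contributions from \eqref{costs3}. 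Writing $\phi_t:=\int_0^t\xi_s\,ds$ and introducing the right-continuous share process $\bar X_t:=X_0+\phi_t+Z^\rho_t$, so that $X_t=\bar X_{t-}$, the liquidation constraint \eqref{liquidationconstraint} reads $\bar X_T=X_0+\phi_T+Z_\rho=0$; this is the identity that will make two boundary terms collapse.

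For groups (ii) and (i) the constraint does the heavy lifting. Since $\phi_0=0$ one has $\gamma\int_0^T\xi_t\phi_t\,dt=\tfrac\gamma2\phi_T^2$, and as $\phi_T=-(X_0+Z_\rho)=-\big(X_0+\sum_{i=1}^{N_\rho}Y_i\big)$ this is exactly the quadratic term in the assertion. For group (i), I would apply the product formula to $\bar X_TP^0_T$: since $\bar X_T=0$, $\bar X_0=X_0$, and $\bar X$ is of finite variation with $d\bar X_t=\xi_t\,dt+dZ^\rho_t$, one obtains
\[0=X_0P^0_0+\int_0^T\bar X_{t-}\,dP^0_t+\int_0^TP^0_{t-}\,d\bar X_t+[\bar X,P^0]_T.\]
Here $\int_0^T\bar X_{t-}\,dP^0_t=\int_0^TX_t\,dP^0_t$, the integral against $d\bar X_t$ produces $\int_0^T\xi_tP^0_t\,dt+\sum_{i=1}^{N_\rho}P^0_{\tau_i-}Y_i$, and the covariation $[\bar X,P^0]_T=\sum_{i=1}^{N_\rho}Y_i\Delta P^0_{\tau_i}$ exactly upgrades $P^0_{\tau_i-}$ to $P^0_{\tau_i}$. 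Solving for $-\int_0^TX_t\,dP^0_t$ reproduces precisely the collection $X_0P^0_0+\int_0^T\xi_tP^0_t\,dt+\sum_{i=1}^{N_\rho}Y_iP^0_{\tau_i}$ of all $P^0$-terms, so group (i) equals $-\int_0^TX_t\,dP^0_t$. (The stochastic integral is well defined because, as noted before the lemma, $X$ is bounded and $(\mathcal G_t)$-predictable.)

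It remains to match group (iv) with $-\gamma\alpha\sum_{i=1}^{N_\rho}Y_iX_{\tau_i+}$. The term $\gamma\sum_{i=1}^{N_\rho}Y_i\phi_{\tau_i}$ already matches the $\gamma\phi_{\tau_i}$ contribution in the target sum, so only the identity $\int_0^TZ^\rho_t\xi_t\,dt+\sum_{i=1}^{N_\rho}Y_iZ_{\tau_i-}=-\sum_{i=1}^{N_\rho}Y_iX_{\tau_i+}$ is left. A Stieltjes integration by parts for $\phi_TZ^\rho_T$ (legitimate since $\phi$ is continuous, so no covariation appears) gives $\int_0^TZ^\rho_t\xi_t\,dt=\phi_TZ^\rho_T-\sum_{i=1}^{N_\rho}\phi_{\tau_i}Y_i$. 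Substituting $X_{\tau_i+}=\bar X_{\tau_i}=X_0+\phi_{\tau_i}+Z_{\tau_i-}+Y_i$ and $\phi_T=-(X_0+Z_\rho)$, this required identity reduces, after cancellation, to the purely algebraic statement $\sum_{i=1}^{N_\rho}Y_i\big(2Z_{\tau_i-}+Y_i-Z_\rho\big)=0$. This is immediate from $Z_{\tau_i-}=\sum_{j<i}Y_j$ together with $Z_\rho^2=\big(\sum_iY_i\big)^2=\sum_iY_i^2+2\sum_{j<i}Y_iY_j$. Collecting groups (i)--(iv) then yields the claimed expression.

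I expect the main obstacle to be the bookkeeping of jumps and one-sided limits rather than any deep analytic point: one must keep $X_t=\bar X_{t-}$ (left-continuous, hence predictable) strictly separate from the right-continuous $\bar X_t$ and from $X_{\tau_i+}=\bar X_{\tau_i}$, and one must not drop the covariation term $[\bar X,P^0]$, which is exactly what converts $P^0_{\tau_i-}$ into the value $P^0_{\tau_i}$ featuring in \eqref{costs2}. Once the left/right conventions and this single quadratic-covariation contribution are handled correctly, both boundary terms vanish by the liquidation constraint and the remainder is the elementary algebra above.
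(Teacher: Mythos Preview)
Your proof is correct and follows essentially the same approach as the paper's: both use integration by parts (together with the covariation $[\bar X,P^0]$) to rewrite the $P^0$-terms as $-\int_0^T X_t\,dP^0_t$, and both use $\phi_T=-(X_0+Z_\rho)$ to obtain the quadratic $\gamma$-term. The only cosmetic difference is in the handling of the $\alpha\gamma$ cross terms: the paper rewrites $\alpha\gamma\int_0^T Z^\rho_t\xi_t\,dt=\alpha\gamma\sum_i Y_i\int_{\tau_i}^T\xi_t\,dt$ via Fubini, swaps the double sum $\sum_i Y_i Z_{\tau_i-}=\sum_i Y_i\sum_{j>i}Y_j$, and then reads off $\int_{\tau_i}^T\xi_t\,dt+\sum_{j>i}Y_j=-X_{\tau_i+}$ directly from $X_T=0$, whereas you reach the same identity through a second integration by parts on $\phi_T Z^\rho_T$ and the algebraic identity $Z_\rho^2=\sum_i Y_i^2+2\sum_i Y_i Z_{\tau_i-}$.
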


\begin{proof}
First we prove that
\begin{equation}\label{lemma revenuesT first eq}
 \int_0^T \xi_t P^0_t \, dt + \sum_{i=1}^{N_\rho} Y_i P^0_{\tau_i} = -X_0 P^0_0 - \int_0^T X_t \, dP^0_t.
\end{equation}
To this end, we use first \eqref{liquidationconstraint} and integration by parts to get
\begin{eqnarray*}
X_0 P^0_0 + \int_0^T X_t \, dP^0_t & = & X_0 P^0_0 + \int_0^T \left(X_0 + \int_0^t \xi_s \, ds + Z^\rho_{t-}\right) \, dP^0_t \\
& = & X_0 P^0_0 + X_0 (P^0_T - P^0_0) + \int_0^T \int_0^t \xi_s \, ds\, dP^0_t + \int_0^T Z^\rho_{t-} \, dP^0_t\\
& = & - P^0_T Z^\rho_{T-}   - \int_0^T \xi_t P^0_t \, dt  + Z^\rho_TP^0_T-\int_0^TP_{t-}^0\,dZ^\rho_t -[P^0,Z]_t.\end{eqnarray*}
Since $\rho<T$ the two terms $- P^0_T Z^\rho_{T-} $ and $Z^\rho_TP^0_T-$ cancel each other out. Moreover, the
 definition \eqref{Zt def eq} of $Z$ implies that 
$$\int_0^TP_{t-}^0\,dZ^\rho_t  +[P^0,Z]_t=\sum_{i=1}^{N_\rho}  P^0_{\tau_i-}Y_i+ \sum_{i=1}^{N_\rho}  \Delta P^0_{\tau_i}Y_i=\sum_{i=1}^{N_\rho}  P^0_{\tau_i}Y_i.
$$
Putting everything together yields \eqref{lemma revenuesT first eq}.

Therefore, $\mathbb{P}$-a.s.,
\begin{eqnarray*}\cC_T^\chi&=&X_0P_0^0+\int_0^T\xi_tP^\chi_t\,dt+\alpha\gamma\int_0^T Z^\rho_{t}\xi_t\,dt+\sum_{i=1}^{N_{\rho}}Y_i\Big(\widehat P^\chi_{\tau_i}+\alpha\gamma Z_{\tau_i-}+\beta(Y_i)\Big)\\
&=&X_0P_0^0+\int_0^T\xi_t\Big(P^0_t+\gamma\int_0^t\xi_s\,ds+h(\xi_t)\Big)\,dt+\alpha\gamma\int_0^T Z^\rho_{t}\xi_t\,dt\\
&&+\sum_{i=1}^{N_{\rho}}Y_i\Big( P^0_{\tau_i}+\gamma\int_0^{\tau_i}\xi_s\,ds+\kappa h(\xi_{\tau_i})+\alpha\gamma \sum_{j=1}^{i-1}Y_j+\beta(Y_i)\Big)\\
&=&-\int_0^TX_t\,dP^0_t+\gamma \int_0^T \int_0^t \xi_s\,ds \, \xi_t\,dt+\int_0^T f(\xi_t)\,dt+\gamma \alpha \sum_{i=1}^{N_\rho} Y_i \int_{\tau_i}^T \xi_t\,dt\\
&&+\sum_{i=1}^{N_{\rho}}Y_i\Big( \gamma\int_0^{\tau_i}\xi_s\,ds+\kappa h(\xi_{\tau_i})+\alpha\gamma\sum_{j=i+1}^{N_\rho} Y_j+\beta(Y_i)\Big)\\
&=&-\int_0^TX_t\,dP^0_t+\frac\gamma2\bigg(\int_0^T\xi_t\,dt\bigg)^2+\int_0^T f(\xi_t)\,dt\\
&&+\sum_{i=1}^{N_{\rho}}Y_i\bigg(\gamma\int_0^{\tau_i}\xi_s\,ds+\gamma\alpha\Big(\int_{\tau_i}^T \xi_t\,dt+\sum_{j=i+1}^{N_\rho} Y_j \Big)+\kappa h(\xi_{\tau_i})+\beta(Y_i)\bigg)\\
&=&-\int_0^TX_t\,dP^0_t+\frac\gamma2\bigg(X_0+\sum_{i=1}^{N_\rho} Y_i\bigg)^2+\int_0^T f(\xi_t)\,dt\\
&&+\sum_{i=1}^{N_{\rho}}Y_i\bigg(\gamma\int_0^{\tau_i}\xi_s\,ds-\gamma\alpha X_{\tau_i+}+\kappa h(\xi_{\tau_i})+\beta(Y_i)\bigg).
\end{eqnarray*}
In the last step, we have again used the fact that $X_T=X_{T+}=0$ $\mathbb{P}$-a.s.
\end{proof}

\begin{proof}[Proof of Proposition~\ref{hierarchy-lemma}]
(a): Assume $X_0 \ge 0$, and let the trading strategy be sell-only, i.e. $\xi_t \le 0$ for all $t$ and $Y_i \le 0$ for all $i$. Then in particular $P^\chi_t \le P^0_t$ for all $t$ and $\widehat P^\chi_{\tau_i} \le P^0_{\tau_i}$ for all $i$. From \eqref{lemma revenuesT first eq} we get that 
$$\mathcal C_T \ge X_0 P^0_0+ \int_0^T \xi_s P^0_s \, ds + \sum_{i=1}^{N_{T\wedge \rho}} Y_i P^0_{\tau_i} = - \int_0^T X_t \, dP^0_t.$$
 Since $(P^0)$ is a  $(\mathcal{G}_t)$-martingale by Assumption \ref{DarkPoolAssumption1} (c) and $X_t$ is bounded according to Definition \ref{strategy chi def}, we have $\mathbb E[\,\mathcal C_T\,] \ge 0$ for such a trading strategy. Absence of transaction-triggered price manipulation implies that the expected costs cannot be decreased by intermediate sell trades. Therefore  $\mathbb E[\,\mathcal C_T^\chi \,] \ge 0$ for all admissible trading strategies $\chi$. The case $X_0 \le 0$ works analogously.

(b): By setting $X_0 = 0$ in (\ref{illiquid-cond}) we find that $\mathbb E[\,\mathcal C^\chi_T\,] \ge 0$ for round trips $\chi$.
\end{proof}

In the following, we will consider round trips $\chi=(\HX,\M, \xi, \rho)$ that cancel the order in the dark pool after the first execution, i.e. $X_0 = 0$ and there exists some $r<T$ such that
\begin{equation}\label{rho in proof of Thm 4.1}
\rho:=\tau_1\indf{\tau_1\le r\text{ and }\widetilde Y_1\ge\Xi}+r\indf{\tau_1>r\text{ or }\widetilde Y_1<\Xi}.
\end{equation}
We will also take
\begin{equation}\label{Xi=HX}
\Xi=|\HX|
\end{equation}
and the following class of strategies $\xi$, which  depend only on $\tau_1$ and $\widetilde Y_1$:
\begin{equation}\label{xi in proof of Thm 4.1}
\xi_t=-x\ind{[0,r]}(t)+\frac{rx }{T-r}\ind{(r,T]}(t)\cdot\ind{\{\text{$\tau_1>r$ or $\widetilde Y_1<\Xi$}\}}+\frac{rx-\HX}{T-r}\ind{(r,T]}(t)\cdot\ind{\{\text{$\tau_1\le r$ and $\widetilde Y_1\ge\Xi$}\}}.
\end{equation}
 With Lemma~\ref{revenuesT} we find that the costs of such a \emph{single-update round trip} $\chi=(\HX,\M, \xi, \rho)$ are 
\begin{eqnarray*}\mathcal C^\chi_T &=&- \int_0^T X_t \, dP^0_t +rf(-x)+(T-r)f\Big(\frac{rx}{T-r}\Big)\ind{A^c}+(T-r)f\Big(\frac{rx-\HX}{T-r}\Big)\ind A\\
&&\quad+\ind A \HX\Big(\frac{\gamma}{2} \HX+\gamma X_{\tau_1 -} - \gamma\alpha (X_{\tau_1-}+Y_1)+\kappa h(\xi_{\tau_i})+\beta(Y_1)\Big),
\end{eqnarray*}
where $A=\{\tau_1\le r\text{ and }\widetilde Y_1\ge\HX\}$.
Consequently, 
the expected costs of $\chi$ are
\begin{eqnarray}
\mathbb E[\,\cC^\chi_T\,]&=&r f(-x)+(T-r)f\Big(\frac{rx}{T-r}\Big)\mathbb P[\,\tau_1>r\text{ or }\widetilde Y_1<|\HX|\,]\label{costs of xi in proof of Thm 4.1}\\
&&+\mathbb E\bigg[\,(T-r)f\Big(\frac{rx-\HX}{T-r}\Big)+\phi(\HX)-\gamma(1-\alpha)\tau_1x\HX+\kappa h(-x)\HX;\ \tau_1\le r\text{ and }\widetilde Y_1\ge|\HX|\,\bigg],\nonumber
\end{eqnarray}
where 
\begin{equation}\label{Delta}
\OldDelta(y):=\frac{\gamma}{2}y^2-\alpha \gamma y^2+\beta(y)y.
\end{equation}

\begin{proof}[Proof of Proposition \ref{GivenACmodelAndTProp1}] We first prove the inequality \eqref{beta inequality finite T}, which can also be written as $\phi(y)\ge yh(-y/T)$ for $\phi$ is as in \eqref{Delta}. Suppose by way of contradiction that there exists some $y$ such that that $\phi(y)< yh(-y/T)$ but that there is no price manipulation.  Consider the single-update round trip with $\widehat X=y$, minimum quantity level $\M=|\widehat X|$, cancellation time $r \in(0,T)$, and $x=0$ in \eqref{xi in proof of Thm 4.1}.
By \eqref{costs of xi in proof of Thm 4.1}, the expected costs of this strategy are
\begin{eqnarray} 
\mathbb E[\,\mathcal C_T\,] =\Big(\phi(y)-yh\Big(\frac{-y}{T-r}\Big)\Big)\mathbb P\left[\,\tau_1\le r\text{ and }\widetilde Y_1\ge \Xi\,  \right].\nonumber
\end{eqnarray}
When $r$ is sufficiently small, our assumption $\phi(y)< yh(-y/T)$ implies that the right-hand side will be strictly negative, and so there will be price manipulation. But this is the desired contradiction. Inequality \eqref{beta inequality all T} now follows by sending $T$ to infinity and using the continuity of $h(x)$ at $x=0$.
\end{proof}

\begin{proof}[Proof of  Proposition \ref{GivenACmodelAndTProp2}]
To prove $\kappa =0$, we assume by way of contradiction that $\kappa>0$. As before, we take a single-update round trip with \eqref{Xi=HX}, \eqref{xi in proof of Thm 4.1}, and $\HX,\,x\neq 0$ such that both $\HX$ and $x$ have the same sign and $|\HX|\le1$. Since there is equality in \eqref{beta inequality all T}, we have $\phi(y)=0$ for all $y$. When sending $T$ to infinity in \eqref{costs of xi in proof of Thm 4.1}, we thus get with dominated convergence that
\begin{eqnarray*}
\lim_{T\uparrow\infty}\mathbb E[\,\cC_T\,]&=&r f(-x)+\mathbb E\bigg[\,-\gamma(1-\alpha)x\tau_1\HX+\kappa h(-x)\HX;\ \tau_1\le r\text{ and }\widetilde Y_1\ge|\HX|\,\bigg]\nonumber\\
&\le&-h(-x)\big(rx-\kappa\HX r\lambda_0\lambda_1(1)\big),\label{kappa=0 cost estimate 1}
\end{eqnarray*}
which is strictly negative for arbitrary $\HX>0$ and sufficiently small $x>0$. Hence there is price manipulation for sufficiently large $T$. This proves $\kappa =0$.

Now we prove that we must have $\alpha=1$. We assume $\alpha<1$ by way of contradiction and let $r=1$, $x>0$, $\HX\in(0,1]$ in \eqref{costs of xi in proof of Thm 4.1}. We also use the already established fact that $\kappa =0$. Sending $T$ to infinity yields that 
\begin{eqnarray*}
\lim_{T\uparrow\infty}\mathbb E[\,\cC_T\,]&=&r f(-x)+\mathbb E\bigg[\,-\gamma(1-\alpha)x\tau_1\HX;\ \tau_1\le r\text{ and }\widetilde Y_1\ge|\HX|\,\bigg]\\
&\le&-x\Big(r h(-x)+\gamma(1-\alpha) \HX\,\mathbb E[\,\tau_1;\ \tau_1\le r\text{ and }\widetilde Y_1\ge1\,]\Big).
\end{eqnarray*}
 Again, the latter expression is strictly negative for arbitrary $\HX$ and sufficiently small $x$. This proves that there is price manipulation for sufficiently large $T$ and in turn establishes $\alpha=1$. \end{proof}

The function $f(x)=xh(x)$ is convex according to Assumption \ref{AlmgrenChrissAssumption} and Definition \ref{temporary impact def}. Hence, $f$ admits left- and right-hand derivatives, $f'_-(x)$ and $f'_+(x)$, which are nondecreasing functions of $x$.

\begin{lemma}\label{stimmt das Lemma} Let $h$ be an impact function and $f(x):=xh(x)$. Then $f(x)$ is  differentiable at $x=0$ with $f'(0)=0$. Moreover, $f'_\pm(x)\to0$ as $x\to0$.
\end{lemma}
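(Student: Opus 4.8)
The plan is to exploit the two defining properties of an impact function from Definition~\ref{temporary impact def}, namely that $h$ is continuous with $h(0)=0$, together with the fact that $f(x)=xh(x)$ is convex. First I would establish differentiability at the origin with $f'(0)=0$. Since $f(0)=0\cdot h(0)=0$, the difference quotient is simply $\frac{f(x)-f(0)}{x-0}=\frac{xh(x)}{x}=h(x)$ for $x\neq0$. By the continuity of $h$ at $0$ and $h(0)=0$, this quotient tends to $h(0)=0$ as $x\to0$ from either side. Hence $f$ is differentiable at $0$ with $f'(0)=0$.

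Next I would treat the one-sided derivatives $f'_\pm(x)$ away from the origin. Because $f$ is convex, its one-sided derivatives exist everywhere and are nondecreasing, as noted in the paragraph preceding the lemma. The key structural fact is the standard sandwiching inequality for convex functions: for any $x<0<y$ one has
\begin{equation*}
f'_+(x)\le\frac{f(y)-f(x)}{y-x}\quad\text{and}\quad f'_-(x)\le f'_+(x),
\end{equation*}
and more usefully $f'_-(x)\le f'_+(x)\le f'_-(x')$ whenever $x<x'$. The plan is to combine monotonicity of $f'_\pm$ with the already-established differentiability at $0$. Since $f'_-(x)\le f'_+(x)$ and both are squeezed between the slopes of chords through neighboring points, and since $f$ is differentiable at $0$, the left and right derivatives must both converge to the common value $f'(0)=0$ as $x\to0$.

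To make the squeezing precise I would argue as follows. For the right-hand limit, take $x>0$; by monotonicity of $f'_+$ and the chord inequality, $0=f'(0)\le f'_+(0)\le f'_+(x)$, while for an upper bound one uses that $f'_+(x)\le\frac{f(x')-f(x)}{x'-x}$ for $x'>x$ combined with continuity of $f$ (or directly $f'_-(x)\le\frac{f(x)-f(0)}{x}=h(x)$). Concretely, the convexity bound $f'_-(x)\le\frac{f(x)-f(0)}{x-0}\le f'_+(x)$ shows $f'_-(x)\le h(x)$, and the analogous bound at a slightly larger point controls $f'_+(x)$ from above by a difference quotient that also tends to $0$. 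Letting $x\downarrow0$ and invoking $h(x)\to0$ forces $f'_\pm(x)\to0$; the case $x\uparrow0$ is symmetric. The main (mild) obstacle is simply bookkeeping the correct direction of the one-sided derivative inequalities so that both $f'_-$ and $f'_+$ get trapped between quantities tending to $0$; there is no analytic difficulty, only the need to apply the convex-function chord inequalities in the right orientation on each side of the origin.
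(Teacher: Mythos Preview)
Your first paragraph is essentially the paper's argument verbatim: the difference quotient $\frac{f(x)-f(0)}{x}=h(x)\to0$ gives $f'(0)=0$. For the second claim the paper simply invokes the standard fact that for a convex function differentiable at $x_0$ one has $f'_\pm(x)\to f'(x_0)$ as $x\to x_0$, without spelling out any chord-slope argument.

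Your attempt to make that fact explicit has the right shape, but the key inequality is stated in the wrong direction. For $x>0$ the point $0$ lies to the \emph{left} of $x$, so convexity gives
\[
f'_+(0)\;\le\;\frac{f(x)-f(0)}{x}\;=\;h(x)\;\le\;f'_-(x)\;\le\;f'_+(x),
\]
not $f'_-(x)\le h(x)$ as you wrote; for instance $f(x)=x^2$ has $f'_-(x)=2x>x=h(x)$. Thus $h(x)$ is a \emph{lower} bound for $f'_\pm(x)$ on $(0,\infty)$, and your squeeze collapses. The correct upper bound comes from a chord to the \emph{right}: for $0<x<y$ one has $f'_+(x)\le\frac{f(y)-f(x)}{y-x}$, and letting $x\downarrow0$ (continuity of $f$) yields $\lim_{x\downarrow0}f'_+(x)\le h(y)$ for every $y>0$, hence $\le0$; combined with $f'_+(x)\ge f'_+(0)=0$ this gives the limit, and $f'_-(x)$ is then trapped between $0$ and $f'_+(x)$. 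The case $x\uparrow0$ is symmetric. So the ``mild obstacle'' of orienting the inequalities is precisely where the argument as written fails, though the repair is straightforward.
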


\begin{proof} First note that we have 
\begin{equation}\label{kjhgvk}
\frac{f(x)-f(0)}{x-0}=\frac{f(x)}{x}=h(x)\longrightarrow 0\qquad\text{as $x\to0$.}\end{equation}
Hence, $f$ is differentiable at $x=0$ with $f'(0)=0$. Moreover it is well known that for a convex function $f$ we have $f'_\pm(x)\to f'(x_0)$ when $f$ is differentiable at $x_0$. \end{proof}

\begin{proof}[Proof of Theorem \ref{model-parameters}]
 The implication (a)$\Rightarrow$(b) follows immediately by taking $X_0=0$. 

(b)$\Rightarrow$(c): We already know from Proposition \ref{GivenACmodelAndTProp1} that we must have \eqref{beta inequality all T}, which yields \eqref{Delta>0 for alpha=1 Condition} when $\alpha=1$. Thus, it  remains to show that $\kappa=0$ and $\alpha=1$. 

We start by showing that $\kappa=0$.  To this end, we assume by way of contradiction that  there is no price manipulation but $\kappa>0$. When the function $\phi$ defined in \eqref{Delta} is identically zero, this result follows from Proposition \ref{GivenACmodelAndTProp2}. Otherwise, we will now construct a suitable impact function $h$ so that there is price manipulation. To this end, we can assume without loss of generality that $\phi$ is strictly convex---if not, we can replace $\beta$ by another impact function $\widetilde\beta$ with $|\widetilde\beta|\ge|\beta|$ for which $\widetilde\phi(y):=\gamma(\frac12-\alpha)y^2+y\widetilde\beta(y)$ is strictly convex and satisfies $\widetilde\phi\ge\phi\ge0$. 
By the same argument, we may assume without loss of generality that $\beta$, and hence also $\phi$, are differentiable  on $\mathbb R\setminus\{0\}$ and that $\varphi(y)=\gamma(\frac12-\alpha)y+\beta(y)$ is an impact function in the sense of Definition \ref{temporary impact def}. Now we define
$$ f(x):=\int_0^{|x|}\sqrt{\phi'({\sqrt{y}})}\,dy.
$$
This definition makes sense, because the derivative $\phi'$ of the  convex function $\phi:\mathbb R\to[0,\infty)$ is bounded on every compact interval.
Then $f$ has a strictly increasing derivative and hence is strictly convex. Also, $f(x)\ge0$ with equality if and only if $x=0$. Since $f$ is strictly convex,
$$h(x):=\frac{f(x)}{x}=\frac{f(x)-f(0)}{x-0},\qquad x\neq 0
$$
is strictly increasing in $x$ and continuous on $\mathbb R\setminus\{0\}$. Moreover, it satisfies
$h(0):=\lim_{x\to 0}h(x)=0$, because 
$$|h(x)|=\frac1{|x|}\int_0^{|x|}\sqrt{\phi'({\sqrt{y}})}\,dy\le\sup_{0\le y\le|x|}\sqrt{\phi'({\sqrt{y}})}=\sqrt{\phi'({\sqrt{|x|}})}\longrightarrow0
$$
 as $x\to0$ by Lemma \ref{stimmt das Lemma}. Therefore, $h$ is an impact function in the sense of Definition \ref{temporary impact def}. Moreover, for $\varphi(x)=\phi(x)/x$, 
\begin{eqnarray}0\le\lim_{x\downarrow0}\frac{\varphi( \sqrt x)}{-h(-x)}&=&\lim_{x\downarrow0}\frac{\sqrt x\phi(\sqrt x)}{f(-x)}=\lim_{x\downarrow0}\frac{\sqrt x\phi(\sqrt x)}{f(x)}=\lim_{x\downarrow0}\frac{\big(\sqrt x\phi(\sqrt x)\big)'}{f'(x)}\nonumber\\&=&\lim_{x\downarrow0}\frac{\frac1{2\sqrt x} \phi(\sqrt x)+\frac12\phi'(\sqrt x)}{\sqrt{\phi'(\sqrt x)}}\le\lim_{x\downarrow0}\sqrt{\phi'(\sqrt x)}=0,\label{varphi/h}
\end{eqnarray}
where we have used l'Hopital's rule in the third step and, in the  inequality, the following estimate, which holds by the convexity of $\phi$:
$$\phi(\sqrt x)=\int_0^{\sqrt x}\phi'(y)\,dy\le \sqrt x\phi'(\sqrt x).
$$

Now we take $x\in(0,1]$, $\HX:=\sqrt x$,  and send $T$ to infinity in \eqref{costs of xi in proof of Thm 4.1} to get
\begin{eqnarray*}
\lim_{T\uparrow\infty}\mathbb E[\,\cC_T\,]&=&r f(-x)+\mathbb E\bigg[\,\phi(\sqrt x)-\gamma(1-\alpha)x^{3/2}\tau_1+\kappa h(-x)\sqrt x;\ \tau_1\le r\text{ and }\widetilde Y_1\ge\sqrt x\,\bigg]\\
&\le&-rxh(-x)+\phi(\sqrt x)+\kappa h(-x)\sqrt xr\lambda_0\lambda_1(1)\\
&=&-\sqrt xh(-x)\Big(r\sqrt x+\frac{\varphi(\sqrt x)}{-h(-x)}-\kappa r\lambda_0\lambda_1(1)\Big).
\end{eqnarray*}
By \eqref{varphi/h}, the term in parenthesis, and hence the entire right-hand side, becomes strictly negative when $x$ is sufficiently small, and so there is price manipulation. This shows that we must have $\kappa=0$.

Now we prove that we must have $\alpha=1$ when there is no price manipulation for any time horizon. To this end, we assume by way of contradiction that $\alpha<1$. We
fix an arbitrary impact function $h_1$ and let $h:=\varepsilon h_1$ for $\varepsilon>0$. We also set $x:=1$ and take $\HX\in(0,1]$. Sending $\varepsilon$ to zero and $T$ to infinity in \eqref{costs of xi in proof of Thm 4.1} yields that 
\begin{eqnarray*}
\lim_{T\uparrow\infty}\lim_{\varepsilon\downarrow0}\mathbb E[\,\cC_T\,]&=&\mathbb E\Big[\,\phi(\HX)-\gamma(1-\alpha)\HX\tau_1;\,\tau_1\le r\text{ and }\widetilde Y_1\ge\HX\,\Big]\\
&\le& \HX\Big(\varphi(\HX)-\gamma(1-\alpha)\mathbb E\big[\,\tau_1;\,\tau_1\le r\text{ and }\widetilde Y_1\ge1\,\big]\Big).
\end{eqnarray*}
But since $\varphi(\HX)\downarrow0$ as $\HX\downarrow0$, the term in parenthesis, and hence the entire right-hand side, will be strictly negative when $\HX$ is sufficiently small. Hence, we cannot have $\alpha<1$ when there is no price manipulation for all Almgren--Chriss models.

(c)$\Rightarrow$(a): Assume that $\alpha = 1$, $\kappa=0$, and that \eqref{beta inequality all T} holds. Note that 
\begin{equation}\label{Xtaui+Eq}
\int_0^{\tau_i} \xi_s \, ds - X_{\tau_i +} = - \sum_{j=1}^i Y_i - X_0. 
\end{equation}
With Lemma~\ref{revenuesT} we get for the costs of an admissible strategy $(\HX, \Xi,\rho, \xi)$
\begin{eqnarray*} 
\mathcal C^\chi_T & = & -\int_0^TX_t\,dP^0_t+\frac\gamma2\bigg(X_0+\sum_{i=1}^{N_\rho} Y_i\bigg)^2+\int_0^T f(\xi_t)\,dt+\sum_{i=1}^{N_{\rho}}Y_i\bigg(-\gamma \sum_{j=1}^i Y_i - \gamma X_0+\beta(Y_i)\bigg)\\
&=&-\int_0^TX_t\,dP^0_t+\frac\gamma2X_0^2+\int_0^T f(\xi_t)\,dt-\frac\gamma2\sum_{i=1}^{N_\rho} Y_i^2+\sum_{i=1}^{N_\rho} Y_i\beta(Y_i)\\
&\ge&-\int_0^TX_t\,dP^0_t+\frac\gamma2X_0^2+\int_0^T f(\xi_t)\,dt,
\end{eqnarray*}
where we have used \eqref{Delta>0 for alpha=1 Condition}  in the final step. Therefore,
\[ \mathbb E[\,\mathcal C_T\,] \ge  \frac{\gamma}{2} X_0^2 + \mathbb E\left[\int_0^T f(\xi_t)\,dt\right]\ge0. \]
This establishes (a).
\end{proof}

\begin{proof}[Proof of Proposition \ref{thm-no-opt-strategies}]
Let $X_0 \in \mathbb R$ and $\alpha=1, \beta =0$. Using  \eqref{Xtaui+Eq}  and Lemma \ref{revenuesT}, one finds that the costs of any admissible  strategy  $\chi=(\HX,\M, \xi, \rho)\in\mathcal X(X_0,T)$ are given by
\begin{eqnarray*}
 \mathcal C^\chi_T &=&- \int_0^T X_t \, dP^0_t + \frac{\gamma}{2} X_0^2 + \int_0^T f(\xi_t)\,dt -\frac{\gamma}{2} \sum_{i=1}^{N_\rho} Y_i^2 + \sum_{i=1}^{N_\rho} Y_i \kappa h(\xi_{\tau_i}).
\end{eqnarray*} 
Consider the following strategy with $\rho = \frac{T}{2}$, $\Xi=0$,    $\widehat X\neq0$, and
\[ \xi_t = \begin{cases}
0, & \textrm{ if } 0\le t\le\rho,\\
-2 \frac{X_0 + Z_\rho}{T}, & \textrm{ if } \rho<t\le T.
\end{cases}
\]
The expected costs of this strategy are
\begin{eqnarray*}
\mathbb E[\,\mathcal C_T\,] &=&  \frac{\gamma}{2} X_0^2 + \mathbb E\left[\, \frac{T}{2} f\left(-2 \frac{X_0 + Z_{{T}/2}}{T}\right)\,\right] - \frac{\gamma}{2} \mathbb E\bigg[\,\sum_{i=1}^{N_{{T}/2}} Y_i^2\,\bigg]\\
&=&  \frac{\gamma}{2}  X_0^2 - \mathbb E\left[ \,\left(X_0 + Z_{{T}/2} \right) h\left(-2 \frac{X_0 + Z_{{T}/2}}{T}\right)\,\right] - \frac{\gamma}{2} \mathbb E\bigg[\,\sum_{i=1}^{N_{{T}/2}} Y_i^2\,\bigg].
\end{eqnarray*}
Note that $|Z_{{T}/2}|$ is bounded by $|\widehat X|$ for all $T$ and that $Y_i$ is nonzero only as long as $|\sum_{j=1}^{i-1} \widetilde Y_j| < |\HX|$. Hence, with probability one, only finitely many $Y_i$ are nonzero. Therefore,  and by dominated convergence,
\[ \lim_{T\uparrow\infty} \mathbb E[\,\mathcal C_T\,]= \frac{\gamma}{2} X_0^2 - \frac{\gamma}{2} \mathbb E\bigg[\,\sum_{i=1}^\infty Y_i^2\,\bigg].\]
When sending $|\HX| $ to infinity, $\sum_{i=1}^\infty Y_i^2$ tends to infinity with probability one. Hence,  we can make  $ \lim_{T\uparrow\infty} \mathbb E[\,\mathcal C_T\,]$ arbitrarily negative. So we can find a sequence of strategies $\chi_n \in\mathcal X(X_0, T_n)$ such that $\mathbb E[\,\mathcal C_T^{\chi_n}\,]\le -n$ which implies the assertion.
\end{proof}

\begin{proof}[Proof of Proposition \ref{thm-noPMforsmallT}] In the first step, we show that there is no price manipulation when $\kappa=0$, $h(x)=\eta x$, and $T\le 2\eta/\gamma$. To this end, take any  admissible round trip $\chi=(\HX,\M, \xi, \rho)\in\mathcal X(0,T)$. 
Lemma \ref{revenuesT} and \eqref{Xtaui+Eq} yield that the costs of  $\chi$ are given by
\begin{eqnarray*}
\mathcal C^\chi_T& =& -\int_0^T X_t \, dP^0_t +\frac{\gamma}{2}\bigg(\sum_{i=1}^{N_\rho} Y_i\bigg)^2+\eta\int_0^T \xi_t^2\,dt
+\gamma \sum_{i=1}^{N_\rho} Y_i\bigg((1-\alpha)\int_0^{\tau_i} \xi_s\,ds-\alpha\sum_{j=1}^i Y_j\bigg)\\
&=&-\int_0^T X_t \, dP^0_t+\alpha\bigg(\eta\int_0^T \xi_t^2\,dt-\frac{\gamma}{2} \sum_{i=1}^{N_\rho} Y_i^2\bigg)\\
&&+(1-\alpha)\bigg(\eta\int_0^T \xi_t^2\,dt+\gamma \sum_{i=1}^{N_\rho} Y_i \int_0^{\tau_i} \xi_s \, ds + \frac{\gamma}{2} \bigg(\sum_{i=1}^{N_\rho} Y_i\bigg)^2 \bigg).
\end{eqnarray*}
Note that
$$\eta\int_0^T \xi_t^2\,dt\ge\frac\eta T\bigg(\int_0^T\xi_t\,dt\bigg)^2=\frac{\eta}{T} \bigg(\sum_{i=1}^{N_\rho} Y_i\bigg)^2\ge \frac{\eta}{T}\sum_{i=1}^{N_\rho} Y_i^2,
$$
where we have used Jensen's inequality in the first step and the fact that all $Y_i$ have the same sign in the third.
Thus,
\begin{equation} \label{newTlemmaPf1} \frac{\gamma}{2} \sum_{i=1}^{N_\rho} Y_i^2-\eta\int_0^T \xi_t^2\,dt \le \left(\frac{\gamma}{2} - \frac{\eta}{T} \right)\sum_{i=1}^{N_\rho} Y_i^2.
\end{equation}
Furthermore, let $\Gamma:= \sup_{t \in [0,T]} |\int_0^t \xi_s \, ds|$. Then, by Jensen's inequality,
$$\int_0^T\xi_t^2\,dt\ge \frac1T\bigg(\int_0^T\xi_t\,dt\bigg)^2\ge \frac{\Gamma^2}T.
$$
We can estimate
$$\eta\int_0^T \xi_t^2\,dt+\gamma \sum_{i=1}^{N_\rho} Y_i \int_0^{\tau_i} \xi_s \, ds  \ge \eta \frac{\Gamma^2}{T} - \gamma\sum_{i=1}^{N_\rho} |Y_i|  \Gamma.
$$
As a function of $\Gamma$, the right-hand side becomes minimal when
$$ \Gamma=\frac{\gamma T}{2\eta}\sum_{i=1}^{N_\rho} |Y_i|.
$$
Therefore, 
\begin{equation}\label{newTlemmaPf2}\eta \int_0^T \xi_t^2\,dt+\gamma \sum_{i=1}^{N_\rho} Y_i \int_0^{\tau_i} \xi_s \, ds + \frac{\gamma}{2} \bigg(\sum_{i=1}^{N_\rho} Y_i\bigg)^2
\ge \frac\gamma2\bigg(\sum_{i=1}^{N_\rho} |Y_i|\bigg)^2  \left(1-\frac{\gamma T}{2\eta}  \right).
\end{equation}
Combining (\ref{newTlemmaPf1}) and (\ref{newTlemmaPf2}) yields
\begin{align*}
\mathcal C_T &\ge -\int_0^T X_t \, dP^0_t - \alpha \left(\frac{\gamma}{2}  - \frac{\eta}{T} \right)\sum_{i=1}^{N_\rho} Y_i^2 +(1-\alpha) \frac\gamma2\bigg(\sum_{i=1}^{N_\rho} |Y_i|\bigg)^2  \bigg(1-\frac{\gamma T}{2\eta}  \bigg) \\
&\ge \int_0^T X_t \, dP^0_t
\end{align*}
for $T\le 2{\eta}/{\gamma}$ and we find  $ \mathbb E[\,\mathcal C_T\,] \ge 0$. This proves that there is no price manipulation for $T\le 2\eta/\gamma$.

In the second step of the proof, we show that for $\kappa=0$, $h(x)=\eta x$, $\alpha=1$ there is price manipulation when $T > 2 {\eta}/{\gamma}$. To construct a corresponding price manipulation strategy, take $\varepsilon\in(0,T)$ for which
\[ \frac{\gamma}{2} - \frac{\eta}{T-\varepsilon} > 0.\]
Now we consider a round trip $\chi$ with  $\rho = \tau_1 \wedge \varepsilon$, $\Xi=0$, arbitrary $\widehat X\neq0$, and 
\[ \xi_t = \begin{cases} -\frac{Y_1}{T-\varepsilon}, & \textrm{ if } t > \varepsilon \textrm{ and } \tau_1 \le \varepsilon, \\
0, & \textrm { otherwise. } \end{cases}
\]
 The expected costs of this strategy are 
\[ \mathbb E[\,\mathcal C^\chi_T\,] = \left(\frac{\eta}{T-\varepsilon}-\frac{\gamma}{2} \right) \mathbb E[\,Y_1^2; \tau_1 \le \varepsilon\,] < 0.\]
Hence, $\chi$ is a price manipulation strategy.
\end{proof}

\begin{proof}[Proof of Proposition \ref{infinite-DP-thm-small-T}]
 Let $T>0$ and fix $\widehat X$ such that
$$\frac{\gamma}{2} \HX^2 - \frac{T}{2} f\Big(  2\frac{\HX}{T}\Big)=\frac{\widehat X^2}2\bigg(\gamma-\frac{h(2\widehat X/T)}{\widehat X}\bigg)>0,
$$ 
which is possible due to the sublinearity of $h$. We take $\Xi=|\HX|$  and let
$$\sigma:=\inf\{\tau_k\,|\,Y_k=\HX\}=\inf\{\tau_k\,|\,\widetilde Y_k\ge\Xi\}.
$$
Recall that the  $(\widetilde Y_k)$ form an i.i.d. sequence of random variables, which are unbounded from above. Therefore, and by the standard thinning argument for Poisson processes, $\sigma$ is exponentially distributed with a parameter $\lambda\in(0,\theta]$.
Now we take $\rho = {T}/{2}$ and 
\[ \xi_t := \begin{cases}
0, & t \le \rho, \\
0, & t > \rho \textrm{ and } \sigma > \rho,\\
-{2\HX}/{T}, & t > \rho \textrm{ and } \sigma \le \rho.
\end{cases}
\]
The expected costs of the strategy $\chi=(\HX,\Xi,\xi,\rho)$ are
\begin{eqnarray*}
 \mathbb E[\,\mathcal C^\chi_T\,] & = & \mathbb E\bigg[\int_0^T f(\xi_t)\, dt\bigg]-\int_0^\rho \lambda e^{-\lambda t}\frac{\gamma}{2} \HX^2 \, dt \nonumber \\
 & = &(1-e^{-\lambda \rho})\bigg(  \frac{T}{2} f\Big(  2\frac{\HX}{T}\Big)-\frac{\gamma}{2} \HX^2\bigg)\\
&<&0.
\end{eqnarray*}
So there is price manipulation.\end{proof}

Now we prove the results pertaining to the assumptions that  $\alpha = \beta = \kappa=0$ and $h(\xi) = \eta \xi$. Under these conditions, Lemma~\ref{revenuesT} yields that for any admissible strategy $\chi=(\HX,\Xi,\xi,\rho)$,
\begin{equation}\label{Propo4.6and4.7RevenuesEq}
\mathcal C^\chi_T = - \int_0^T X_t \, dP^0_t + \frac{\gamma}{2} \bigg(X_0 + \sum_{i=1}^{N_\rho} Y_i \bigg)^2 + \eta \int_0^T \xi_t^2 \,dt + \sum_{i=1}^{N_\rho} Y_i \left(\gamma \int_0^{\tau_i} \xi_s \, ds \right). 
\end{equation}

\begin{proof}[Proof of Proposition \ref{thm-sunny-no-pm}]
Proof of (a):
Take $\rho = \frac{T}{2}$ and
\[ \xi_t = \begin{cases}
-\frac{\gamma}{2 \eta} \HX, & \textrm{ if } t \le \tau_1, t \le \rho,\\
0, & \textrm{ if } t > \tau_1, t \le \rho\\
- \frac{X_{\rho+}}{\rho}, & \textrm{ if } t > \rho,
\end{cases} 
\]
where $\widehat X$ will be specified later. 
 By \eqref{Propo4.6and4.7RevenuesEq}, we find that the costs of $\chi=(\HX,0,\xi,\rho)$ are 
\[ \mathcal C^\chi_T = - \int_0^T X_t \, dP^0_t + \frac{\gamma}{2} \bigg(X_0 + \sum_{i=1}^{N_\rho} Y_i \bigg)^2 
+ \eta ( \rho\wedge \tau_1 ) \frac{\gamma^2}{4 \eta^2} \HX^2 + \eta \frac{X_{\rho+}^2}{\rho} - \sum_{i=1}^{N_\rho} Y_i \tau_1 \frac{\gamma^2}{2 \eta} \HX.\]
In the limit $T \uparrow \infty$  we will have 
$$
\sum_{i=1}^{N_\rho} Y_i=\sum_{i=1}^{N_{T/2}} Y_i\longrightarrow \widehat X.
$$
Hence, using the fact that $\mathbb E[\,\tau_1\,] = \frac{1}{\theta}$,
\begin{equation}\label{thm-sunny-no-pm-pf1}
\lim_{T\uparrow\infty} \mathbb E[\,\mathcal C^\chi_T\,] = \frac{\gamma}{2} (X_0 + \HX)^2 - \frac{1}{\theta} \frac{\gamma^2}{4 \eta} \HX^2.
\end{equation}
Choosing 
$$\HX = - \frac{2 X_0 \eta \theta}{\gamma - 2 \eta \theta}$$ yields
\[ \lim_{T\uparrow\infty}\mathbb E[\,\mathcal C^\chi_T\,] = -\frac{\gamma}{2}^2 X_0^2 \frac{1}{2 \eta \theta - \gamma}<0. \]
This concludes the proof of part (a).

Proof of (b): We first consider the case in which $\frac{\gamma}{\eta \theta} = 2$ and $X_0 \ne 0$. With the same strategy $\chi$ as in part (a) we find with (\ref{thm-sunny-no-pm-pf1}) that
\[ \lim_{T\uparrow\infty} \mathbb E[\,\mathcal C^\chi_T\,] = \frac{\gamma}{2} X_0^2 + \gamma X_0 \HX.\]
For $X_0\neq0$, the right-hand side can be made arbitrarily negative  by taking $\widehat X$ with the opposite sign of $X_0$ and making $|\widehat X|$ large. 

Now we consider the case in which  $\frac{\gamma}{\eta \theta} > 2$. With (\ref{thm-sunny-no-pm-pf1}) we find 
\[ \lim_{T\uparrow\infty} \mathbb E[\,\mathcal C^\chi_T\,] =  \frac{\gamma}{2} X_0^2 + \gamma X_0 \HX - \varepsilon \HX^2,\]
where $\varepsilon > 0$. Again, the right-hand side can be made arbitrarily negative by sending $\widehat X$ to infinity.
\end{proof}

\begin{proof}[Proof of Proposition \ref{trans-trigg-lemma}] In view of Proposition \ref{thm-sunny-no-pm}, the assertion will be implied by the following claim:  If, for  $0 \le t < \rho$, we have $\xi_t\le 0$ when  $X_0>0$ or $\xi_t\ge0$ when $X_0<0$, then
$\mathbb E[\,\mathcal C_T\,] \geq 0$.
In  proving this claim, we will consider the case $X_0>0$. The case $X_0<0$ is analogous.
With Lemma~\ref{revenuesT} we find that for any admissible strategy $\chi=(\HX,\Xi,\xi,\rho)\in\mathcal X(X_0,T)$,
\[ \mathcal C^\chi_T = - \int_0^T X_t \, dP^0_t + \frac{\gamma}{2} \bigg(X_0 + \sum_{i=1}^{N_\rho} Y_i \bigg)^2 + \int_0^T f(\xi_t) \, dt + \sum_{i=1}^{N_\rho} Y_i \gamma \int_0^{\tau_i} \xi_s \, ds. \]
Consider first the case $\HX \le 0$. Then 
\[  \sum_{i=1}^{N_\rho} Y_i \gamma \int_0^{\tau_i} \xi_s \, ds \ge 0 \]
and $\mathbb E[\,\mathcal C^\chi_T\,] \geq 0$ follows.

Consider next the case $\HX > 0$. Since $\xi_t \le 0$ this implies $X_t \ge 0$ for all $t$. Especially, $X_{\tau_i -} \ge 0$, or equivalently, 
\[ \int_0^{\tau_i} \xi_s \, ds \ge - X_0 - \sum_{j=1}^{i-1} Y_i. \]
Therefore, we find that
\[ \mathcal C^\chi_T \ge - \int_0^T X_t \, dP^0_t + \frac{\gamma}{2} X_0^2 + \frac{1}{2}{\gamma} \sum_{i=1}^{N_\rho} Y_i^2 + \int_0^T f(\xi_t) \, dt \]
and $\mathbb E[\,\mathcal C^\chi_T\,] \geq 0$ follows.\end{proof}

\begin{proof}[Proof of Proposition \ref{infinite-DP-thm-sunny-no-pm}]
Under the assumption $\mu=\delta_\infty$, the entire dark-pool order $\HX$ is is filled at $\tau_1$. Therefore,  the costs of a strategy $\chi\in\mathcal X(X_0,T)$ are
\begin{eqnarray*}
\mathcal C^\chi_T & =& - \int_0^T X_t \, dP^0_t +\frac{\gamma}{2} (X_0 + \indf{\tau_1 \le \rho} \HX)^2 +\eta \int_0^T \xi_t^2 \, dt + \indf{\tau_1 \le \rho} \gamma \HX (X_{\tau_1 -} - X_0) \\
& \ge & - \int_0^T X_t \, dP^0_t +\indf{\tau_1\le\rho} \left(\frac{\gamma}{2} (X_0 + \HX)^2 + \eta \frac{(X_{\tau_1-} - X_0)^2}{\tau_1} + \gamma \HX (X_{\tau_1-} - X_0) \right), 
\end{eqnarray*}
where we have applied Jensen's inequality to the integral $ \int_0^{\tau_1} \xi_t^2 \, dt$, which is less than or equal to $ \int_0^T \xi_t^2 \, dt $.
As a function of $X_{\tau_1-} $, the rightmost expression is minimized by
\[ X_{\tau_1-} = X_0 - \frac{\gamma}{2\eta} \tau_1 \HX, \]
and so
\begin{equation}\label{Prop6.2AuxIneq1}
\mathcal C^\chi _T \ge - \int_0^T X_t \, dP^0_t +\indf{\tau_1<\rho} \left(\frac{\gamma}{2} (X_0 + \HX)^2 - \frac{\gamma^2}{4 \eta} \tau_1 \HX^2 \right).
\end{equation}
Thus,
\[ \mathbb E[\,\mathcal C^\chi_T\,] \ge  \mathbb E[\,\psi(\rho, \widehat X)\,], \]
where
\begin{eqnarray*}\psi(\rho, \widehat X) := \int_0^\rho \theta e^{-\theta t} \left(\frac{\gamma}{2} (X_0 + \HX)^2 - \frac{\gamma^2}{4 \eta} t \widehat X^2 \right) \,dt.
\end{eqnarray*}
We see that $\psi(0, \widehat X) = 0$ and the term in parentheses is decreasing in $t$ for $\HX\neq0$. Therefore, if $\widehat X$ is such that $\psi (\infty, \widehat X) < 0$, then we have $\psi (\rho, \widehat X) \ge\psi (\infty, \widehat X)$ for all $\rho < \infty$. For $\widehat X$ with $\psi (\infty, \widehat X) \ge 0$ we have $\psi (\rho, \widehat X)\ge 0$ for all $\rho < \infty$. Thus,
\[ \mathbb E[\,\mathcal C^\chi_T\,] \ge 0\wedge \psi(\infty, \widehat X) = -\Big(  \frac{\gamma}{2} (X_0 + \widehat X)^2 - \frac{\gamma^2}{4 \eta \theta} \widehat X^2\Big)^-. \]
The right-hand side is minimized by taking
\[ \HX = �2 X_0 \frac{\eta \theta}{\gamma - 2 \eta \theta}, \]
and so
\[\mathbb E[\,\mathcal C^\chi_T\,] \ge- \frac{\gamma^2}{2}  X_0^2 \frac{1}{2 \eta \theta - \gamma}.\]
The assertion now follows with part (a) of Proposition \ref{thm-sunny-no-pm}.
\end{proof}

\begin{proof}[Proof of Corollary \ref{CounterExCor}] We already know from Proposition \ref{thm-sunny-no-pm} (b) that there is price manipulation for $\frac{\gamma}{\eta} > 2 \theta$. On the other hand, Proposition \ref{infinite-DP-thm-sunny-no-pm} implies that there is no price manipulation for $\frac{\gamma}{\eta} < 2 \theta$. Hence, it remains to analyze the case $\frac{\gamma}{\eta} = 2 \theta$.
 For a round trip $\chi\in\mathcal X(0,T)$, our estimate \eqref{Prop6.2AuxIneq1} yields that 
\begin{eqnarray*} \mathcal C^\chi_T &\ge&  -\int_0^T X_t \, dP^0_t +\indf{\tau_1<\rho} \gamma\left(\frac{1}{2}-\frac{\gamma}{4 \eta} \tau_1     \right)\HX^2 .
\end{eqnarray*} 
Hence,
$$\mathbb E[\,\mathcal C^\chi_T\,] \ge \gamma\widehat X^2\mathbb E[\,g(\rho)\,]
$$
where
$$g(\rho):=\int_0^\rho \theta e^{-\theta t} \left(\frac{1}{2}-\frac{\gamma}{4 \eta} t     \right) \,dt=\frac12(1-e^{-\theta\rho})-\frac{\gamma}{4 \eta\theta} \big(1-e^{-\theta\rho}(1+\theta\rho) \Big)=\frac12\theta\rho e^{-\theta \rho}\ge0.
$$
This gives $\mathbb E[\,\mathcal C^\chi_T\,] \ge0$.
\end{proof}

\begin{proof}[Proof of Proposition \ref{thm-semi-deterministic}]
Under the assumptions $\alpha=1$, $\beta(y)=\frac\gamma2y$, and $\kappa=0$, 
 the costs of an admissible strategy $\chi\in\mathcal X(X_0,T)$ are given by 
\begin{eqnarray*}
 \mathcal C^\chi_T &=& - \int_0^T X_t \, dP^0_t +\frac{\gamma}{2}\left(X_0+\mathbbmss 1_{\{\tau_1\le\rho\}}\HX\right)^2 + \int_0^T f(\xi_t)\, dt- \gamma\HX \left(  \frac12\HX + X_0 \right) \indf{\tau_1\le\rho}\\
 & = & - \int_0^T X_t \, dP^0_t +\frac{\gamma}{2} X_0^2 + \int_0^T f(\xi_t)\,dt.
\end{eqnarray*}
Taking the conditional expectation with respect to $\mathcal{F}_{{\tau_1}\wedge\rho}$ and using optional sampling yields
\begin{align*}\mathbb E[\,\mathcal C^\chi_T\,|\,\mathcal{F}_{{\tau_1}\wedge\rho}\,]&=-\int_0^{{\tau_1}\wedge\rho}X_t\,dP^0_t+\frac\gamma2X_0^2 + \int_0^{{\tau_1}\wedge\rho}f(\xi_t)\,dt +  \mathbb E\Big[\,\int_{{\tau_1}\wedge\rho}^Tf(\xi_t)\,dt\,\big|\,\mathcal{F}_{{\tau_1}\wedge\rho}\,\Big].
\end{align*}
Due to the liquidation constraint, we must have  $\int_{{\tau_1}\wedge\rho}^T\xi_t\,dt=-X_{{\tau_1}\wedge\rho}-\mathbbmss 1_{ \{ {\tau_1} <\rho \}}\widehat X$, and so the convexity of $f$ and Jensen's inequality yield that
$$\int_{{\tau_1}\wedge\rho}^Tf(\xi_t)\,dt\ge(T-{\tau_1}\wedge\rho)f\Big(\frac{-X_{{\tau_1}\wedge\rho}-\mathbbmss 1_{ \{ {\tau_1} <\rho \}}\widehat X}{T-{\tau_1}\wedge\rho}\Big)
$$
with equality if, for ${\tau_1}\wedge\rho\le t\le T$,
\begin{equation}\label{xiNachtauwedgerhoEq}
\xi_t=\begin{cases}\displaystyle\frac{-X_{\tau_1}-\widehat X}{T-{\tau_1}}&\text{on }\{ {\tau_1} \le\rho \},\\ &\\
\displaystyle\frac{-X_\rho}{T-\rho}&\text{on }  \{ \rho<{\tau_1} \}.
\end{cases}
\end{equation}
These two possibilities will correspond to the single update of the optimal strategy at ${\tau_1}$.

Note next that, due to the $(\mathcal G_t)$-predictability of the processes $(\xi_t)$ and $(\rho\wedge t)_{t\ge0}$, $(\xi_s)_{s\le t}$ and $\rho\wedge t$ are independent of ${\tau_1}$, conditional on $\{t\le{\tau_1}\}$.
It follows that 
\begin{align}\mathbb E[\,\mathcal C^\chi_T\,]&=\mathbb E[\,\mathbb E[\,\mathcal C^\chi_T\,|\,\mathcal{F}_{{\tau_1}\wedge\rho}\,]\,]\nonumber\\
&\ge \frac\gamma2X_0^2 +\mathbb E\big[\, \int_0^{{\tau_1}\wedge\rho}f(\xi_t)\,dt+(T-{\tau_1}\wedge\rho)f\Big(\frac{-X_{{\tau_1}\wedge\rho}-\mathbbmss 1_{ \{ {\tau_1}\le \rho \}}\widehat X}{T-{\tau_1}\wedge\rho}\Big)\,\bigg]\nonumber\\
&= \frac\gamma2X_0^2 +\mathbb E\big[\, F(\widehat X, \xi, \rho)\,\big],\label{SingleUpdateRevenuesEq}
\end{align}
where the functional $F$  maps $\widehat X\in\mathbb R$,   $\xi\in L^1[0,T]$, and $r\in[0,T]$  to
\begin{align*}
F(\widehat X, \xi,r)=&\int_0^{\infty}du\,\theta e^{-\theta u}\bigg\{\int_0^{u\wedge r }f(\xi_t)\,dt+(T-u\wedge r)f\Big(\frac{-X_0-\int_0^{u\wedge{r}}\xi_t\,dt-\mathbbmss 1_{ \{ u\le{r} \}}\widehat X}{T-u\wedge{r}}\Big)\bigg\}.
\end{align*}
When $F$ admits a minimizer $(\widehat X^*,\xi^*,r^*)$, then  concatenating $\xi^*$ with \eqref{xiNachtauwedgerhoEq} in $r^*\wedge{\tau_1}$ yields an optimal strategy that is a single-update strategy. 

To show the existence of a minimizer of $F$, take any triple $(\widetilde X,\widetilde \xi,\widetilde r)$ for which $C:=F(\widetilde X,\widetilde \xi,\widetilde r)<\infty$. We then only need to look into those triples $(\widehat X, \xi,r)$  for which $F(\widehat X, \xi,r)\le C$.   Without loss of generality, we can pick the component $\xi$ from the set  
$$K_C:=\Big\{\xi\in L^1[0,T]\,\Big|\,\int_0^Tf(\xi_t)\,dt\le {C}e^{\theta T}\Big\},
$$
because 
$$F(\widehat X, \xi,r)\ge \int_T^{\infty}du\,\theta e^{-\theta u}\int_0^{u\wedge r }f(\xi_t)\,dt=e^{-\theta T}\int_0^rf(\xi_t)\,dt
$$
and we can set $\xi_t:=0$ for $t>r$.  

The set $K_C$ is a  closed convex subset of $L^1[0,T]$.  Hence it is also weakly  closed in $L^1[0,T]$. It is also uniformly integrable according to the criterion of de la Vall\'ee Poussin and the superlinear growth of $f$, which follows from our assumption that $|h(x)|\to\infty$ as $|x|\to\infty$. Hence, the Dunford--Pettis theorem  \cite[Corollary IV.8.11]{DunfordSchwartz} implies that $K_C$ is  weakly sequentially compact in $L^1[0,T]$. From now on we will endow $K_C$ with the weak topology.

It follows in particular that 
\begin{equation}\label{supL1KCfiniteEq}
\sup_{\xi\in K_C}\int_0^T|\xi_t|\,dt<\infty.
\end{equation}
Since
$$F(\widehat X, \xi,r)\ge\int_0^r du\,\theta e^{-\theta u}(T-u)f\Big(\frac{-X_0-\int_0^{u}\xi_t\,dt-\widehat X}{T-u}\Big),
$$
the superlinear growth of $f$ and \eqref{supL1KCfiniteEq} imply that there is a constant $C_1\ge0$ such that $|\widehat X|\le C_1$
when $F(\widehat X, \xi,r)\le C$. Hence we can restrict our search of a minimizer to the sequentially compact set
$$\mathcal K:=[-C_1,C_1]\times K_C\times[0,T].
$$

Next, 
$$[0,T]\times K_C\ni(r,\xi)\longrightarrow\int_0^r\xi_t\,dt=\int_0^T\xi_t\mathbbmss 1_{[0,r]}(t)\,dt
$$
is a continuous map. Moreover, denoting by $f^*$ the Fenchel-Legendre transform of the finite convex function $f$, we have $f^{**}=f$ due to the biduality theorem, and so 
$$[0,T]\times K_C\ni(r,\xi)\longmapsto\int_0^rf(\xi_t)\,dt=\sup_{\varphi\in L^\infty}\bigg[\int_0^T\mathbbmss 1_{[0,r]}(t)\xi_t\varphi_t\,dt-\int_0^rf^*(\varphi_t)\,dt\bigg];
$$
see, e.g., Theorem 2 in \citet{rockafellar68}. It follows that this map is lower semicontinuous  as the supremum of continuous maps. 

Altogether, it follows  that $F$ is lower semicontinuous on the sequentially compact set $\mathcal K$ and so admits a minimizer.
\end{proof}

\parskip-0.5em\renewcommand{\baselinestretch}{0.8}\small
\bibliography{literature}{}
\bibliographystyle{natbib}

\end{document}